\newtheorem{thm}{Theorem}
\newtheorem{lem}{Lemma}
\newtheorem{prop}{Proposition}
\newtheorem{defn}{Definition}
\newtheorem{assum}{Assumption}
\newtheorem{rem}{Remark}
\DeclareMathOperator{\dom}{dom}  
\newcommand{\T}{^{\top}} 
\newcommand{\ie}{\textit{i.e.}}
\def\BibTeX{{\rm B\kern-.05em{\sc i\kern-.025em b}\kern-.08em
		T\kern-.1667em\lower.7ex\hbox{E}\kern-.125emX}}
\begin{document}
	\title{Hybrid Feedback for Affine Nonlinear Systems with Application to Global Obstacle Avoidance \\ (Extended Version)}
	\author{
		\href{https://orcid.org/0000-0001-7498-5162}{\includegraphics[scale=0.06]{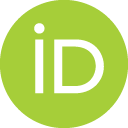}\hspace{0.5mm}} Miaomiao Wang \\
		Huazhong University of Science and Technology\\
		Wuhan, 430074, China \\
		\texttt{mmwang@hust.edu.cn}		
		\And			 	
		\href{https://orcid.org/0000-0002-9049-4651}{\includegraphics[scale=0.06]{orcid.png}\hspace{0.5mm}} Abdelhamid Tayebi \\
		Lakehead University\\
		Thunder Bay, ON P7B 5E1, Canada \\
		{\tt atayebi@lakeheadu.ca} 
	}%
	
	\twocolumn[ 
	\begin{@twocolumnfalse}
		\maketitle
		\begin{abstract}
			This paper explores the design of hybrid feedback for a class of affine nonlinear systems with topological constraints that prevent global asymptotic stability. A new hybrid control strategy is introduced, which differs conceptually from the commonly used synergistic hybrid approaches. The key idea involves the construction of a generalized synergistic Lyapunov function whose switching variable can either remain constant or dynamically change between jumps.
			Based on this new hybrid mechanism, a generalized synergistic hybrid feedback control scheme, endowed with global asymptotic stability guarantees, is proposed. This hybrid control scheme is then improved through a smoothing mechanism that eliminates discontinuities in the feedback term. Moreover, the smooth hybrid feedback is further extended to a larger class of systems through the integrator backstepping approach.
			The proposed hybrid feedback schemes are applied to solve the global obstacle avoidance problem using a new concept of synergistic navigation functions. Finally, numerical simulation results are presented to illustrate the performance of the proposed hybrid controllers.
		\end{abstract}
		
		\begin{IEEEkeywords}
			Affine nonlinear systems, synergistic hybrid feedback, navigation functions, obstacle avoidance
		\end{IEEEkeywords} 
	\end{@twocolumnfalse}]
	\copyrightfootnote{This is the extended version of the work accepted for publication in IEEE TAC, DOI: \href{https://doi.org/10.1109/TAC.2024.3372463}{10.1109/TAC.2024.3372463}.}%
	
	\section{Introduction}
	In the present work, we are interested in the design of feedback control, for a class of continuous-time affine nonlinear systems, with robust global asymptotic stability (GAS) guarantees. It is clear that this objective is not attainable for certain control systems evolving on a non-contractible state space \cite{sontag1998mathematical} or a state space that is not diffeomorphic to any Euclidean space \cite{bhatia1970stability}. This is referred to as the topological obstruction to GAS, and examples of such control problems include continuous feedback for the stabilization of the rotational motion on a compact boundaryless manifold \cite{bhat2000topological}, and the robot navigation in a sphere world \cite{koditschek1990robot}. This motivates the development of the hybrid control architectures whose feedback terms have both continuous-time and discrete-time behaviors \cite{goebel2012hybrid}.
	
	In order to overcome the aforementioned topological obstruction and achieve robust GAS, \emph{synergistic hybrid feedback}, designed from a synergistic family of potential functions, has proven to be a powerful strategy \cite{mayhew2011synergistic,mayhew2013synergistic}. Synergistic hybrid feedback has been successfully applied to various control problems for rigid body systems to achieve GAS results, including orientation stabilization on the $n$-dimensional sphere \cite{mayhew2013global,casau2019robust}, attitude control on Lie group $SO(3)$ \cite{berkane2016construction,wang2022hybrid,tong2023synergistic}, global tracking on compact manifolds \cite{casau2019hybrid}, and global trajectory tracking of underwater vehicles \cite{casau2015robust}.  
	
	Another important application of hybrid feedback is the crucial and long-standing problem of safe robot navigation with global obstacle avoidance \cite{koditschek1990robot,paternain2017navigation,li2018navigation,arslan2019sensor,verginis2021adaptive}. As pointed out in \cite{koditschek1990robot}, there exists at least one saddle equilibrium point for each obstacle within the state space of a sphere world, which prevents global asymptotic stabilization of a set-point with time-invariant continuous feedback. Consequently, the best result one can achieve is almost global asymptotic stability (AGAS). To overcome these constraints, some hybrid control approaches for global robot navigation with, obstacle avoidance, have been proposed recently \cite{sanfelice2006robust,berkane2021obstacle,braun2021explicit,poveda2021robust,sawant2021hybrid,marley2021synergistic}.
	
	The first group of hybrid approaches relies on hybrid switching strategies between different control modes, such as the \textit{obstacle-avoidance} mode and \textit{move-to-target} mode in \cite{berkane2021obstacle,braun2021explicit,sawant2021hybrid}. The control laws for each mode are specially designed for particular purposes and are switched by a logical variable, ensuring GAS for the overall hybrid system. These hybrid switching strategies enable the handling of more general obstacles, such as arbitrary convex obstacles in \cite{sawant2021hybrid}. However, the application of these hybrid strategies was limited to linear systems \cite{braun2021explicit} and single-integrator systems \cite{berkane2021obstacle,sawant2021hybrid}. Another group of hybrid approaches extends the framework of synergistic hybrid feedback for a class of nonlinear dynamical systems  \cite{sanfelice2006robust,casau2019adaptive,marley2021synergistic,casau2022robust}. The hybrid feedback laws are derived from synergistic Lyapunov functions and are switched by a logical variable when the state is close to one of the undesired equilibria. Both hybrid feedback strategies rely on a switching mechanism through a logical variable. In the first (above-mentioned) approach the logical variable is used to switch between different control modes, while in the synergistic approach, it is used to switch between the synergistic Lyapunov functions.
	
	In recent years, there has been significant research devoted to the development of synergistic hybrid feedback relying on centrally/non-centrally synergistic potential functions \cite{mayhew2011hybrid,mayhew2013synergistic}, synergistic Lyapunov functions \cite{mayhew2011synergistic}, and synergistic control barrier functions \cite{marley2021synergistic}. Recently, a new hybrid control strategy, relying on \textit{dynamic} synergistic potential functions, has been introduced in the authors' previous work \cite{wang2019new,wang2022hybrid}. The key advantage of this approach is that it only requires a single dynamic synergistic potential function, which allows the switching variable to flow and not necessarily be constant between jumps. This hybrid approach has been successfully applied to solve the global tracking problem on Lie groups $SO(3)$ and $SE(3)$. More recently, some generalized synergistic hybrid feedback approaches for a broad class of dynamical systems, using the same idea of dynamically changing variables, have been proposed in \cite{schmidt2022generalization,casau2022robust}.
	
	In this paper, we propose a hybrid feedback control strategy, based on a new generalized synergistic Lyapunov function, for a broad range of affine nonlinear systems, with GAS guarantees. We also provide some tools to smooth out the discontinuities that arise from the jumps of the logical variable and extend our approach to a larger class of systems via integrator backstepping techniques. Finally, we apply the proposed hybrid feedback to the robot navigation problem with global obstacle avoidance.
	The main contributions of this paper are as follows:
	\begin{itemize}
		\item [1)]  Our proposed hybrid approach, stemming from our preliminary work \cite{wang2019new}, relies on a vector-valued switching variable that can evolve (dynamically) between jumps, and allows to expand the scope of applicability of synergistic hybrid feedback control strategies beyond the available hybrid approaches in the literature \cite{mayhew2011hybrid,mayhew2013synergistic,mayhew2011synergistic,casau2019adaptive}.
		\item [2)] We propose three types of synergistic hybrid feedback approaches, including a nominal non-smooth hybrid feedback, a hybrid feedback with smooth control input, and a backstepping-based hybrid feedback. These proposed approaches guarantee GAS for a broad class of affine nonlinear systems.
		
		\item [3)] We propose a smoothing mechanism to remove the discontinuities in the control input that arise from the jumps of the switching variable. Our proposed smoothing mechanism can handle a large class of hybrid feedback systems, and includes the approach in \cite{mayhew2011synergistic,mayhew2013synergistic} as a special case. 
		
		\item [4)] Our proposed hybrid feedback approaches have been successfully applied to the challenging problem of global obstacle avoidance for robot navigation with single integrators using synergistic navigation functions. This distinguishes our approach from the existing hybrid feedback approaches proposed in \cite{berkane2021obstacle,sawant2021hybrid,braun2021explicit,sanfelice2006robust,casau2019adaptive,casau2022robust}.
	\end{itemize}

	The remainder of this paper is organized as follows. Section \ref{sec:backgroud} introduces some preliminary notions that will be used throughout the paper. Section \ref{sec:Hybrid_feedback_GAS} presents the design of the proposed synergistic hybrid feedback. In Section \ref{sec:navigation}, we provide an application of the hybrid feedback for robot navigation.

	\section{Preliminaries} \label{sec:backgroud}
	\subsection{Notations and Definitions} 
	The sets of real, non-negative real, positive real and natural numbers are denoted by $\mathbb{R}$, $\mathbb{R}_{\geq 0}$, $\mathbb{R}_{> 0}$ and $\mathbb{N}$, respectively. We denote by $\mathbb{R}^n$ the $n$-dimensional Euclidean space. The inner product in $\mathbb{R}^n$ is denoted by $\langle \cdot,\cdot \rangle$ and the Euclidean norm of a vector $x\in \mathbb{R}^n$ is defined as $\|x\|:=\sqrt{\langle x,x \rangle}$. The $n$-by-$n$ identity matrix is denoted by $I_n$.  
	The closure of a subset $\mathcal{X}\subset \mathbb{R}^n$ is denoted by $\overline{\mathcal{X}}$.
	We define $c \oplus r\mathbb{B}: = \{x\in \mathbb{R}: \|x-c\|\leq r\}$ as a closed ball centered at $c$ with radius $r>0$, and $\mathcal{O} \oplus r \mathbb{B}:= \bigcup_{c\in \mathcal{O}} c \oplus r \mathbb{B}$ with $\mathcal{O}\subset \mathbb{R}^n$.  
	For a closed subset $\mathcal{X}\subset \mathbb{R}^n$, the \textit{gradient} of a differentiable smooth function $f:\mathcal{X} \to  \mathbb{R}$ at point $x\in \mathcal{X}$ is denoted by $\nabla_x f(x)$. 
	A point $x\in \mathcal{X}$ is called a \textit{critical point}  of $f$ if the gradient varnishes at point $x$ (\ie, $\nabla_{x} f(x) = 0$).
	The \textit{Jacobian matrix} of a smooth function $F:\mathbb{R}^n \to \mathbb{R}^m$ relative to $x$ is defined by 
	$$
	\mathcal{D}_x F(x) :=  \frac{\partial F(x)}{\partial x\T} = \begin{bmatrix}
		\frac{\partial F_1}{\partial x_1}(x) & \cdots & \frac{\partial F_1}{\partial x_n}(x) \\
		\vdots                      &        & \vdots \\
		\frac{\partial F_m}{\partial x_1}(x) & \cdots & \frac{\partial F_m}{\partial x_n}(x)
	\end{bmatrix}  \in \mathbb{R}^{m\times n}
	$$
	with $F(x) = [F_1(x),\dots,F_m(x)]\T \in \mathbb{R}^m$ and $x=[x_1,\dots,x_n]\T \in \mathbb{R}^n$.
	If $f:\mathbb{R}^n \to \mathbb{R}$, then one has $\nabla_x f(x) = \mathcal{D}_x f(x)\T$ for all $x\in \mathbb{R}^n$. If $f$ has multiple arguments as $f:\mathbb{R}^n \times \mathbb{R}^{r} \to \mathbb{R}$, then one has $\nabla_x f(x,y) = \mathcal{D}_x f(x,y)\T$ and $\nabla_y f(x,y) = \mathcal{D}_y f(x,y)\T$ for all $(x,y)\in \mathbb{R}^n \times \mathbb{R}^{r}$.  Unless we explicitly specify otherwise, the time argument of the time-dependent variables is always omitted.
	\begin{defn}[Bouligand's tangent cone\cite{bouligand1932introduction}]
		The \textit{tangent cone} to a closed set $\mathcal{X}\subset \mathbb{R}^n$ at a point $x\in \mathcal{X}$ is the subset of $\mathbb{R}^n$ defined by
		\begin{align}
			\mathsf{T}_\mathcal{X}(x) = \left\{z: \liminf_{\tau \to 0} \frac{d_{\mathcal{X}}(x + \tau z)}{\tau} = 0\right\}
		\end{align}
		where $\mathsf{d}_\mathcal{X}(x): = \inf_{y\in \mathcal{X}}\|y-x\|$ denotes the distance from a point $x\in \mathbb{R}^n$ to the set $\mathcal{X}$.  
	\end{defn} 
	The tangent cone $\mathsf{T}_\mathcal{X}(x)$ is non-trivial only on the boundary of $\mathcal{X}$ since $\mathsf{T}_\mathcal{X}(x) = \mathbb{R}^n$ if $x$ is in the interior of $\mathcal{X}$, and $\mathsf{T}_\mathcal{X}(x) = \emptyset$ if $x \notin \mathcal{X}$. Consider the system $\dot{x}(t) = f(x(t))$ and assume that, for each initial condition $x(0)$ in an open set $\mathcal{O}$, it admits a unique solution defined for all $t\geq 0$. Then, according to Nagumo’s theorem \cite{nagumo1942lage}, the closed set $\mathcal{X}\subset \mathcal{O}$ is forward (positively) invariant if and only if $f(x) \in \mathsf{T}_\mathcal{X}(x)$ for all $x\in \mathcal{X}$.

	\subsection{Hybrid Systems Framework} 
	We consider the following hybrid system with state space $\mathbb{R}^n$ defined in \cite{goebel2012hybrid} as:
	\begin{equation}\mathcal{H}:
		\begin{cases}
			\dot{x} ~~\in F(x), & \quad x \in \mathcal{F} \\
			x^{+} \in G(x),     & \quad x \in \mathcal{J}
		\end{cases} \label{eqn:hybrid_system}
	\end{equation}
	where the \textit{flow map} $F: \mathbb{R}^n \to \mathbb{R}^n$ describes the continuous flow of $x$ on the \textit{flow set} $\mathcal{F} \subseteq \mathbb{R}^n$; the \textit{jump map} $G: \mathbb{R}^n\rightrightarrows  \mathbb{R}^n$ (a set-valued mapping from $\mathbb{R}^n$ to $\mathbb{R}^n$) describes the discrete flow of $x$ on the \textit{jump set} $\mathcal{J} \subseteq \mathbb{R}^n$. Define the \textit{hybrid time domain} as a subset $E \subset \mathbb{R}_{ \geq 0} \times \mathbb{N}$ in the form of
	$ E = \bigcup_{j=0}^{J-1} ([t_j,t_{j+1}] \times \{j\}),	$
	for some finite sequence $0=t_0 \leq t_1 \leq \cdots \leq t_J$, with the last interval possibly in the form $[t_{J}, T)$ with $T$ finite or $T=+\infty$.  
	A hybrid arc is a function $x: \dom{x} \to \mathbb{R}^n$, where $\dom{x}$ is a hybrid time domain and, for each fixed $j\in \mathbb{N}_{>0}$, $t \mapsto x(t,j)$ is a locally absolutely continuous function on the interval $\mathcal{I}_j = \{t:(t,j) \in \dom{x}\}$. Note that $x^+$  denotes the value of $x$ after a jump at the current time, namely, $x^+(t,j)=x(t,j+1)$.  
	A solution $x$ to $\mathcal{H}$ is said to be  \textit{maximal} if it cannot be extended by flowing nor jumping, and \textit{complete} if its domain $\dom x$ is unbounded. Let $|x|_{\mathcal{A}}$  denote the distance of a point $x$ to a closed set $\mathcal{A} \subset \mathbb{R}^n$. The set $\mathcal{A}$ is said to be: \textit{stable} for $\mathcal{H}$ if for each $\epsilon>0$ there exists $\delta>0$ such that each maximal solution $x$ to $\mathcal{H}$ with $|x(0,0)|_{\mathcal{A}} \leq \delta$ satisfies $|x(t,j)|_{\mathcal{A}} \leq \epsilon$ for all $(t,j)\in \dom x$; \textit{globally attractive} for $\mathcal{H}$ if  every maximal solution $x$ to $\mathcal{H}$  is complete and satisfies $\lim_{t+j\to \infty}|x(t,j)|_{\mathcal{A}} = 0$ for all $(t,j)\in \dom x$; \textit{globally asymptotically stable} if it is both stable and globally attractive for $\mathcal{H}$. For more details on dynamic hybrid systems, we refer the readers to \cite{goebel2012hybrid} and references therein.
	
	\subsection{Problem Statement and Motivation}\label{sec:problem}
	Let $\mathcal{X}$ be a closed nonempty subset of $\mathbb{R}^n$. Consider the affine nonlinear system of the form
	\begin{align}
		\dot{x} & = f(x) + g(x)u \label{eqn:affine_system}
	\end{align}
	where $x\in \mathcal{X}$ is the state, $u\in \mathbb{R}^m$ is the control input, the \textit{drift term} $f:\mathcal{X} \to \mathbb{R}^n$ and \textit{input gain} $g: \mathcal{X} \to \mathbb{R}^{n\times m}$ are smooth. The goal is to design a control scheme for system \eqref{eqn:affine_system} such that the (compact) equilibrium set $\mathcal{A}_o \subset \mathcal{X}$ is GAS.

	Consider the real-valued functions  $V_o:\mathcal{X} \to \mathbb{R}_{\geq 0}$ and $\kappa_o:\mathcal{X} \to \mathbb{R}^m$. Let $V_o$ be a Lyapunov function candidate with respect to the set $\mathcal{A}_o$, and $u = \kappa_o(x)$ be a smooth time-invariant feedback for system \eqref{eqn:affine_system} satisfying:
	\begin{align*}
		\left.
		\begin{array}{rl}
			f(x) + g(x)\kappa_o(x)                                 & \in \mathsf{T}_\mathcal{X}(x) \\
			\langle \nabla_x V_o(x), f(x) + g(x)\kappa_o(x)\rangle & \leq 0
		\end{array} \right\} ~~ \forall x\in \mathcal{X} .  
	\end{align*} 
	Let $\mathcal{E}_o:=\{x\in \mathcal{X}: \langle  \nabla_x V_o(x), f(x) + g(x)\kappa_o(x)\rangle =  0\} $
	and $\Psi_{V_o}\subseteq \mathcal{E}_o$ denote the largest invariant set  for the closed-loop system 
	$ \dot{x}   = f(x) + g(x)\kappa_o(x)$ with $x\in  \mathcal{E}_o$.
	Then, from \cite[Corollary 4.2]{khalil2002nonlinear}, one concludes that all the solutions to the closed-loop system must converge to the set $\Psi_{V_o}$. Consequently, the set $\mathcal{A}_o$ is GAS only if $\mathcal{A}_o=\Psi_{V_o}$. However, it is challenging to achieve GAS using the feedback $\kappa_o(x)$ when $\mathcal{A}_o$ is only a subset of the largest invariant set $\Psi_{V_o}$ (\ie, $\mathcal{A}_o \subset \Psi_{V_o}$).

	We introduce a finite non-empty set $\Theta \subset  \mathbb{R}^r$ and a vector-valued switching variable $\theta\in  \mathbb{R}^r$ with the following hybrid dynamics inspired by \cite{wang2022hybrid}:
	\begin{align}
		\mathcal{H}_\theta : \begin{cases}
			\dot{\theta} ~~ = \varpi(x,\theta), & (x,\theta)\in \mathcal{F} \\
			\theta^+  \in G_o(x,\theta),          & (x,\theta)\in \mathcal{J}
		\end{cases} \label{eqn:theta_hybrid_dynamics}
	\end{align}
	where $\mathcal{F}, \mathcal{J} \subset \mathcal{X} \times \mathbb{R}^r$ denote the flow and jump sets, respectively, and $\varpi:\mathcal{X} \times \mathbb{R}^r \to \mathbb{R}^r$ and $G_o: \mathcal{X} \times \mathbb{R}^r \rightrightarrows  \Theta \subset  \mathbb{R}^r$ denote the flow and jump maps, respectively. Note that the switching variable $\theta$ can be either constant or dynamically changing over the flows, depending on its flow dynamics $\varpi$. In \cite{mayhew2011synergistic}, the variable $\theta \in \mathbb{Z}$ is considered as a logical index variable with $ \varpi(x,\theta)\equiv 0$.
	In \cite{wang2022hybrid}, the variable $\theta$ is considered as a continuous-discrete scalar variable with $\varpi(x,\theta)$ generated from the gradient of a dynamic synergistic potential function. 
	In the present work, $\theta$ is a vector-valued switching variable that flows via the flow map $\varpi(x,\theta)$ and jumps via the jump map $G_o(x,\theta)$.  
	With an appropriate design of the feedback term and the switching mechanism, the state is kept away from the undesired equilibrium points and driven towards the desired equilibrium.

	\begin{figure}[!ht]
		\centering
		\includegraphics[width=0.85\linewidth]{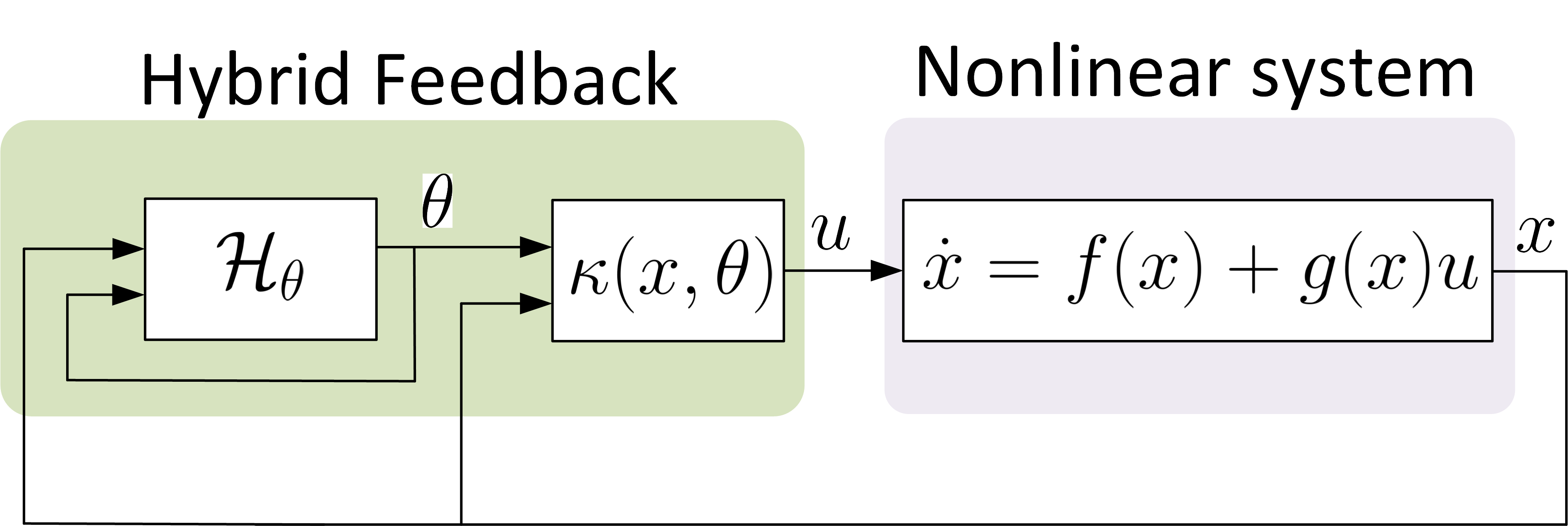}
		\caption{The architecture of the proposed hybrid feedback strategy.}
		\label{fig:diagram31}
	\end{figure}
	Without loss of generality, we consider the following extended affine nonlinear system modified from \eqref{eqn:affine_system} and \eqref{eqn:theta_hybrid_dynamics}:
	\begin{align}\label{eqn:new_affine_system}
		\begin{pmatrix}
			\dot{x} \\
			\dot{\theta}
		\end{pmatrix} =  \underbrace{\begin{pmatrix}
				f(x) \\
				\varpi(x,\theta)
		\end{pmatrix}}_{f_c(x,\theta)} + \underbrace{\begin{pmatrix}
				g(x) \\
				0
		\end{pmatrix}}_{g_c(x,\theta)} u
	\end{align}
	with $(x,\theta)\in \mathcal{X}\times \mathbb{R}^r$ denoting the new extended state, $u \in \mathbb{R}^m$ denoting the control input to be designed.
	The structure of our hybrid control strategy is given in Fig \ref{fig:diagram31}. 
	Then, given the switching variable $\theta$ and its hybrid dynamics $\mathcal{H}_\theta:=\{\varpi, G_o,\mathcal{F},\mathcal{J}\}$ in \eqref{eqn:theta_hybrid_dynamics}, the objective of this work is to design a hybrid feedback $u=\kappa(x,\theta)$ for the modified system \eqref{eqn:new_affine_system} such that the compact set $\mathcal{A} =\mathcal{A}_o \times \mathcal{A}_\theta$ (with $\mathcal{A}_\theta\subset \mathbb{R}^r$) is GAS for the overall hybrid closed-loop system.

	\section{Hybrid Feedback with GAS Guarantees}\label{sec:Hybrid_feedback_GAS}
	\subsection{Synergistic Hybrid Feedback}
	
	Consider the smooth functions $V: \mathcal{X} \times \mathbb{R}^r \to \mathbb{R}_{\geq 0}, \varpi:\mathcal{X} \times \mathbb{R}^r \to \mathbb{R}^r, \kappa:\mathcal{X}\times \mathbb{R}^r \to \mathbb{R}^m$, and a finite non-empty set $\Theta \subset \mathbb{R}^r$. Define the function $\mu_{V,\Theta}:\mathcal{X} \times \mathbb{R}^r \to \mathbb{R}$ as
	\begin{align}
		\mu_{V,\Theta}(x,\theta) & :=  V(x,\theta) - \min_{\bar{\theta}\in \Theta} V(x,\bar{\theta}) \label{eqn:mu_V}
	\end{align}
	and the set
	\begin{multline}
		\mathcal{E}:=  \{(x,\theta)\in \mathcal{X} \times \mathbb{R}^r:\\
		\langle \nabla V(x,\theta), f_c(x,\theta) + g_c(x,\theta)\kappa(x,\theta)\rangle = 0 \} \label{eqn:def_E}
	\end{multline}
	with  $\nabla V(x,\theta) = [\nabla_x V(x,\theta)\T, \nabla_\theta V(x,\theta)\T]\T$ denoting the gradient of $V$. Let $\Psi_V \subseteq \mathcal{E} $ denote the  largest weakly invariant set
	\footnote{For a hybrid system $\mathcal{H}$, the set $S \subset \mathbb{R}^n$ is said to be weakly invariant if it is both weakly forward invariant and weakly backward invariant \cite[Definition 6.19]{goebel2012hybrid}. The descriptor ``weakly'' indicates that at least one (complete) solution to $\mathcal{H}$ is required to remain in the set $S$.}
	for system \eqref{eqn:new_affine_system} with $(x,\theta) \in  \mathcal{E}$. 
	Inspired by \cite{mayhew2011synergistic}, we introduce the following definition of synergistic feedback quadruple.
	\begin{defn}\label{defn:synergistic_feedback}
		Consider the real-valued functions $V:\mathcal{X} \times \mathbb{R}^r \to \mathbb{R}_{\geq 0}$,  $\kappa:\mathcal{X} \times \mathbb{R}^r \to \mathbb{R}^m$, $\varpi:\mathcal{X} \times \mathbb{R}^r \to \mathbb{R}^r$, and the non-empty set $\Theta \subset \mathbb{R}^r$. The quadruple $(V,\kappa,\varpi, \Theta)$ is said to be a \textit{synergistic feedback quadruple} relative to a compact set $\mathcal{A} \subset \mathcal{X} \times \mathbb{R}^r $, for system \eqref{eqn:new_affine_system}, with gap exceeding $\delta$, if
		\begin{itemize}
			\item [C1)] for all $\epsilon\geq 0$, the sub-level set  $\mho_V(\epsilon):= \{(x,\theta)\in \mathcal{X} \times \mathbb{R}^r: V(x,\theta)\leq \epsilon\} $ is compact;
			\item [C2)] $V$ is positive definite with respect to $\mathcal{A}$;
			\item [C3)] for all $(x,\theta)\in \mathcal{X} \times \mathbb{R}^r$, one has
			\begin{align}
				\langle \nabla V(x,\theta), f_c(x,\theta) + g_c(x,\theta)\kappa(x,\theta)\rangle  \leq 0.  \label{eqn:prop_kappa}
			\end{align}
			\item [C4)] there exists a constant $\delta>0$ such that 
			\begin{align}
				\delta_{V,\Theta} & := \inf_{(x,\theta)\in \Psi_V \setminus \mathcal{A}}  \mu_{V,\Theta}(x,\theta)   > \delta. \label{eqn:mu_V_delta}
			\end{align}
		\end{itemize}
	\end{defn}

	Let $\mathcal{A} :=\mathcal{A}_o \times \mathcal{A}_\theta$ be a compact set with $\mathcal{A}_o \subset \mathcal{X}, \mathcal{A}_\theta\subset \mathbb{R}^r$, and $(V,\kappa,\varpi,\Theta)$ be a synergistic feedback quadruple relative to $\mathcal{A}$ for system \eqref{eqn:new_affine_system} with  gap exceeding $\delta$.  
	We propose the following hybrid control feedback for system \eqref{eqn:affine_system}:
	\begin{align}
		\underbrace{
			\begin{array}{ll}
				u            & =  \kappa(x,\theta) \\
				\dot{\theta} & = \varpi(x,\theta)
			\end{array}~
		}_{(x,\theta)\in \mathcal{F}}
		\underbrace{
			\begin{array}{ll}
				&\\
				\theta^+ & = G_o(x,\theta)
			\end{array}~
		}_{(x,\theta)\in \mathcal{J}} \label{eqn:hybrid_feedback1}
	\end{align}
	where the set-valued map $G_o$ and the flow and jump sets $\mathcal{F},\mathcal{J}$ are defined as
	\begin{subequations}\label{eqn:hybrid_feedback1_design}
		\begin{align}
			G_o(x,\theta) & := 
			\arg \min\{V(x,\bar{\theta}): \bar{\theta} \in \Theta\}\label{eqn:def_G_theta} \\
			\mathcal{F} & := \{(x,\theta)\in \mathcal{X}\times \mathbb{R}^r: \mu_{V,\Theta}(x,\theta) \leq \delta\} \label{eqn:def_F_set}\\
			\mathcal{J} & := \{(x,\theta)\in \mathcal{X}\times \mathbb{R}^r: \mu_{V,\Theta}(x,\theta) \geq \delta\}  \label{eqn:def_J_set}
		\end{align}
	\end{subequations}
	with $\mu_{V,\Theta}$ defined in \eqref{eqn:mu_V}. From \eqref{eqn:new_affine_system}, \eqref{eqn:hybrid_feedback1} and \eqref{eqn:hybrid_feedback1_design}, one obtains the following hybrid closed-loop system:
	\begin{subequations}\label{eqn:hybrid_closed_loop1}
		\begin{align}
			\begin{pmatrix}
				\dot{x} \\
				\dot{\theta}
			\end{pmatrix}                  &  = F(x,\theta) : 
			= \begin{pmatrix}
				f(x) +  g(x) \kappa(x,\theta) \\
				\varpi(x,\theta)
			\end{pmatrix}, 
			& (x,\theta)\in \mathcal{F} \label{eqn:hybrid_closed_F} \\
			\begin{pmatrix}
				x^+ \\
				\theta^+
			\end{pmatrix}                  & \in    G(x,\theta) := 
			\begin{pmatrix}
				x \\
				G_o(x,\theta)
			\end{pmatrix},                  & (x,\theta)\in \mathcal{J} \label{eqn:hybrid_closed_G}
		\end{align}
	\end{subequations}
	with $(G_o, \mathcal{F}, \mathcal{J})$ defined in \eqref{eqn:hybrid_feedback1_design}.
	The next lemma shows that the hybrid closed-loop system \eqref{eqn:hybrid_closed_loop1} satisfies the hybrid basic conditions  \cite[Assumption 6.5]{goebel2012hybrid} guaranteeing the well-posedness of the hybrid closed-loop system:
	\begin{lem}\label{lem:HBC}
		The hybrid closed-loop system \eqref{eqn:hybrid_closed_loop1} satisfies the following conditions:
		\begin{itemize}
			\item [A1)] $\mathcal{F}$ and $\mathcal{J}$ are closed subsets of $\mathbb{R}^{n}$; 
			\item [A2)] $F$ is outer semicontinuous and locally bounded relative to $\mathcal{F}$, $\mathcal{F} \in \dom F$ , and $F(x)$ is convex for every $x \in \mathcal{F}$;
			\item [A3)] $G$ is outer semicontinuous and locally bounded relative to $\mathcal{J}$ and $\mathcal{J} \in \dom G$.
		\end{itemize}
	\end{lem}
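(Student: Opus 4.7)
The plan is to verify each of the three conditions A1--A3 separately, relying on the smoothness of $V, f, g, \kappa, \varpi$, the closedness of $\mathcal{X}$, and the finiteness of $\Theta$. None of the three statements requires an intricate construction; the main point of the lemma is that the definitions in \eqref{eqn:hybrid_feedback1_design} have been chosen so that the standard well-posedness machinery of \cite[Ch.~6]{goebel2012hybrid} applies.

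For A1, my first step is to argue that the map $\mu_{V,\Theta}$ defined in \eqref{eqn:mu_V} is continuous on $\mathcal{X}\times \mathbb{R}^{r}$. Since $V$ is smooth and $\Theta$ is a finite set, $(x,\theta)\mapsto \min_{\bar\theta\in\Theta}V(x,\bar\theta)$ is the pointwise minimum of finitely many continuous functions of $x$, hence continuous, and therefore $\mu_{V,\Theta}$ is continuous. Then $\mathcal{F}$ and $\mathcal{J}$ in \eqref{eqn:def_F_set}--\eqref{eqn:def_J_set} are intersections of the closed set $\mathcal{X}\times\mathbb{R}^{r}$ with preimages of closed half-lines under a continuous map, and thus are closed subsets of $\mathbb{R}^{n+r}$.

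For A2, I would simply observe that $F$ defined in \eqref{eqn:hybrid_closed_F} is single-valued and built by composition of the smooth maps $f, g, \kappa, \varpi$, so it is continuous on $\mathcal{F}$, with $\mathcal{F}\subseteq\dom F$ by construction. A continuous single-valued map has a closed graph and is locally bounded (bounded on compact neighbourhoods), which immediately gives outer semicontinuity and local boundedness in the set-valued sense; convexity of $F(x,\theta)$ is trivial because each value is a singleton.

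For A3, since $G_o$ takes values in the finite (hence bounded) set $\Theta\subset\mathbb{R}^{r}$ and $G$ only augments this with the identity in $x$, local boundedness of $G$ relative to $\mathcal{J}$ is immediate, and $\mathcal{J}\subseteq\dom G$ holds by construction of $G_o$ as an argmin over the non-empty set $\Theta$. The only nontrivial part is outer semicontinuity of $G_o$: I would take a sequence $(x_k,\theta_k)\to(x,\theta)$ in $\mathcal{J}$ with $y_k\in G_o(x_k,\theta_k)$ and $y_k\to y$; finiteness of $\Theta$ forces $y\in\Theta$, and the defining inequality $V(x_k,y_k)\leq V(x_k,\bar\theta)$ for every $\bar\theta\in\Theta$ passes to the limit by continuity of $V$, yielding $y\in G_o(x,\theta)$. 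Combined with local boundedness this gives outer semicontinuity, completing the verification. The argmin step is the only place where any care is needed, but it is entirely routine thanks to the finiteness of $\Theta$; no deeper set-valued analysis is required.
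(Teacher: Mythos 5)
Your proposal is correct and follows essentially the same route as the paper's proof: continuity of $\mu_{V,\Theta}$ (as a finite pointwise minimum of continuous functions) for A1, single-valuedness and continuity of the flow map for A2, and a sequential argument exploiting finiteness of $\Theta$ and continuity of $V$ for outer semicontinuity of $G_o$ in A3. The only cosmetic difference is that you pass the argmin inequality directly to the limit, whereas the paper phrases the same step as a proof by contradiction; both hinge on exactly the same facts.
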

	\begin{proof}
		See Appendix \ref{sec:HBC}.     
	\end{proof}
	
	Now, one can state one of our main results:
	\begin{thm}\label{thm:thm1}
		Suppose that $(V,\kappa,\varpi,\Theta)$ is a synergistic feedback quadruple relative to the compact set $\mathcal{A}$ for system \eqref{eqn:new_affine_system} with synergy gap exceeding $\delta>0$. Then, if $f(x) + g(x)\kappa(x,\theta) \in \mathsf{T}_\mathcal{X}(x)$ holds for all $(x,\theta)\in \mathcal{X}\times \mathbb{R}^r$,  
		the set $\mathcal{A}$ is GAS for the hybrid closed-loop system \eqref{eqn:hybrid_closed_loop1}.
	\end{thm}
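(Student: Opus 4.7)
The plan is to use $V$ as a Lyapunov function on the extended state space $\mathcal{X}\times\mathbb{R}^r$, exploit the strict decrease of $V$ across jumps guaranteed by the synergy gap, and close the argument with a hybrid LaSalle-style invariance principle to confine the $\omega$-limit set to $\mathcal{A}$.

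First, I would check that the hybrid closed-loop system \eqref{eqn:hybrid_closed_loop1} is well posed and admits complete solutions from every initial condition in $\mathcal{X}\times\mathbb{R}^r$. Well-posedness follows directly from Lemma~\ref{lem:HBC}. For forward invariance of the state space itself, the hypothesis $f(x)+g(x)\kappa(x,\theta)\in\mathsf{T}_{\mathcal{X}}(x)$ combined with Nagumo's theorem keeps the $x$-component in $\mathcal{X}$ during flows, while the jump map \eqref{eqn:def_G_theta} leaves $x$ unchanged and places $\theta$ into $\Theta\subset\mathbb{R}^r$; hence $\mathcal{X}\times\mathbb{R}^r$ is forward invariant.

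Next, I would establish two basic Lyapunov-type properties. Along flows, condition (C3) gives $\dot{V}(x,\theta)\leq 0$. Across each jump, the definition of $G_o$ as the argmin of $V(x,\cdot)$ over $\Theta$ yields
\begin{equation*}
V(x^+,\theta^+)-V(x,\theta)=-\mu_{V,\Theta}(x,\theta)\leq -\delta,
\end{equation*}
since jumps occur only on $\mathcal{J}$, where $\mu_{V,\Theta}\geq\delta$. Combined with the compactness of sublevel sets (C1) and positive definiteness of $V$ with respect to $\mathcal{A}$ (C2), standard arguments yield Lyapunov stability of $\mathcal{A}$ and boundedness of every maximal solution inside a compact sublevel set of $V$; completeness then follows because the trajectory cannot escape such a sublevel set by flowing or jumping.

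For global attractivity, the strict drop by at least $\delta>0$ at every jump together with $V\geq 0$ implies that any complete solution can perform at most finitely many jumps; after the final jump the solution flows for all future hybrid time with $\dot{V}\leq 0$ and $V$ bounded below, so $V$ converges to some limit $V_\infty\geq 0$. Invoking the hybrid invariance principle of \cite[Ch.~8]{goebel2012hybrid}, the $\omega$-limit set of the trajectory is a nonempty compact weakly invariant subset of $\mathcal{E}$, hence of $\Psi_V$. If this limit set intersected $\Psi_V\setminus\mathcal{A}$, then by (C4) we would have $\mu_{V,\Theta}>\delta$ there, placing such limit points strictly inside the jump set and forcing at least one more jump of the trajectory, contradicting the finiteness of jumps. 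Therefore the limit set is contained in $\mathcal{A}$, which together with the earlier stability yields GAS. The main obstacle is this last step: applying a hybrid invariance principle cleanly and exploiting the \emph{strict} inequality $\delta_{V,\Theta}>\delta$ of (C4), rather than merely $\geq\delta$, to rule out limit points on the frontier of $\mathcal{J}$ and inside $\Psi_V\setminus\mathcal{A}$; this slack is precisely what makes the contradiction go through.
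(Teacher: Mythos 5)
Your proposal is correct and follows essentially the same route as the paper: $V$ non-increasing along flows by (C3), a strict drop of at least $\delta$ at each jump from the argmin jump map, the hybrid invariance principle of \cite[Ch.~8]{goebel2012hybrid} to confine limit sets to $\Psi_V$, Nagumo's theorem for forward invariance of $\mathcal{X}$, and completeness plus absence of Zeno from the bounded number of jumps. The only cosmetic difference is the final exclusion step — you argue by contradiction (a limit point in $\Psi_V\setminus\mathcal{A}$ would lie strictly inside $\mathcal{J}$ and force another jump after the last one), whereas the paper computes directly that $\mathcal{F}\cap\Psi_V=\mathcal{A}$; both rest on the same consequence of (C4).
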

	\begin{proof}
		Given the synergistic feedback quadruple $(V,\kappa,\varpi, \Theta)$ relative to the   set $\mathcal{A}$ with  gap exceeding $\delta$, it follows from Definition \ref{defn:synergistic_feedback} that for all $(x,\theta)\in \mathcal{F}$
		\begin{align}
			\dot{V}(x,\theta)
			& = \langle \nabla V(x,\theta), f_c(x,\theta) + g_c(x,\theta)\kappa(x,\theta)\rangle \nonumber \\
			& = \langle \nabla_x V(x,\theta), f(x) + g(x)\kappa(x,\theta)\rangle \nonumber\\
			& \quad  + \langle  \nabla_\theta V(x,\theta), \varpi(x,\theta) \rangle                        
			\leq 0 . \label{eqn:dot_V}
		\end{align}
		Thus, $V$ is non-increasing along the flows of the hybrid system \eqref{eqn:hybrid_closed_loop1}.
		Let  $(x^+,\theta^+)$ denote the state after each jump. It follows from \eqref{eqn:hybrid_closed_G} that $x^+=x$ and $\theta^+\in G_o(x,\theta)$. Then, from the definitions of the jump map $G_o$ in \eqref{eqn:def_G_theta} and the jump set $\mathcal{J}$ in \eqref{eqn:def_J_set}, for each $(x,\theta)\in \mathcal{J}$, one obtains 
		\begin{align}
			V(x^+,\theta^+) & = V(x,G_o(x,\theta)) \nonumber\\
			& =  \min_{\bar{\theta}\in \Theta} V(x,\bar{\theta}) \nonumber \\
			& = V(x,\theta) - \mu_{V,\Theta}(x,\theta) \nonumber\\
			& \leq  V(x,\theta) - \delta \label{eqn:V^+}
		\end{align}
		where we made use of the facts $V(x,G_o(x,\theta)) = \min_{\bar{\theta}\in \Theta} V(x,\bar{\theta})$ and $\min_{\bar{\theta}\in \Theta} V(x,\bar{\theta}) =V(x,\theta) - \mu_{V,\Theta}(x,\theta)$ from the functions of $G_o$ in \eqref{eqn:def_G_theta} and  $\mu_{V,\Theta}$ in \eqref{eqn:mu_V}. It follows from \eqref{eqn:V^+} that $V$ is strictly decreasing after each jump of system \eqref{eqn:hybrid_closed_loop1}. Therefore, one concludes that $V$ is non-increasing for all $(x,\theta)\in \mathcal{F}$ and is strictly decreasing for all $(x,\theta)\in \mathcal{J}$.
		
		Next, we are going to show the global attractivity of the set $\mathcal{A}$. 
		Inspired by \cite{goebel2012hybrid}, we introduce the following functions:
		\begin{align}
			u_c(x,\theta) &= \begin{cases}
				\langle \nabla V(x,\theta), f_c(x,\theta) + g_c(x,\theta)\kappa(x,\theta)\rangle  \\
				\hfill  \text{ if } (x,\theta)\in \mathcal{F} \\
				-\infty \hfill \text{ otherwise} 
			\end{cases} \label{eqn:def_u_c} \\
			u_d(x,\theta) &= \begin{cases}
				-\delta & \text{ if } (x,\theta)\in \mathcal{J} \\
				-\infty & \text{ otherwise } 
			\end{cases} \label{eqn:def_u_d}
		\end{align}
		From \eqref{eqn:dot_V} and \eqref{eqn:V^+}, the growth of $V$ is upper bounded during flows by $u_c(x,\theta)\leq 0$ and during jumps by $u_d(x,\theta)\leq 0$ for each $(x,\theta)\in \mathcal{X}\times \mathbb{R}^r$. 
		It follows from \cite[Theorem 8.2]{goebel2012hybrid} that any maximal solution to \eqref{eqn:hybrid_closed_loop1} must converge to the largest weakly invariant subset of
		\begin{align}
			V^{-1}(r) \cap \big(\overline{u_c^{-1}(0)} \cup (u_d^{-1}(0)\cap G(u_d^{-1}(0))) \big) \label{eqn:invariant_set}
		\end{align} 
		for some $r\in \mathbb{R}$ in the image of $V$ with $V^{-1}(r):=\{(x,\theta)\in \mathcal{X}\times \mathbb{R}^r: V(x,\theta) = r\}$, $u_c^{-1}(0): = \{(x,\theta)\in \mathcal{F}: u_c(x,\theta) = 0\}$ and $u_d^{-1}(0): = \{(x,\theta)\in \mathcal{J}: u_d(x,\theta) = 0\}$. From \eqref{eqn:def_u_d}, it is obvious to show that $u_d^{-1}(0) = \emptyset$. Then, \eqref{eqn:invariant_set} can be reduced to 
		\begin{align*}
			V^{-1}(r) \cap   \overline{u_c^{-1}(0)} 
		\end{align*} 
		which corresponds to $ \mathcal{F} \cap \mathcal{E}$ in view of the fact that $u_c^{-1}(0)=\mathcal{E}$ from \eqref{eqn:def_E} and \eqref{eqn:def_u_c}. Hence, according to the definition of the set $\Psi_V\subseteq \mathcal{E}$, any maximal solution to \eqref{eqn:hybrid_closed_loop1} must converge to the largest weakly invariant subset contained in $\mathcal{F} \cap\Psi_V$. 
		
		From the definitions of  $\mathcal{F}$ and $\mathcal{J}$ in \eqref{eqn:hybrid_feedback1_design}, one has $\mathcal{F} \cap \mathcal{A} = \mathcal{A}$ 
		since $\mu_{V,\Theta}(x,\theta)  = V(x,\theta) - \min_{\bar{\theta}\in \Theta} V(x,\bar{\theta}) = - \min_{\bar{\theta}\in \Theta} V(x,\bar{\theta}) \leq 0 < \delta $ for all $(x,\theta)\in \mathcal{A}$. It follows from $\mathcal{A} \subset \Psi_V$ and $\mathcal{F} \cap \mathcal{A} = \mathcal{A}$ that $\mathcal{A} \subset \mathcal{F}\cap \Psi_V$. Moreover, from \eqref{eqn:mu_V_delta}, one has $\mathcal{F}\cap (\Psi_V \setminus \mathcal{A}) = \emptyset$ 
		since $\mu_{V,\Theta}(x,\theta)> \delta$ for all $\Psi_V \setminus \mathcal{A}$.   
		Applying some simple set-theoretic arguments, one obtains
		\begin{align*}
			\mathcal{F} \cap \Psi_V & \subset \mathcal{F} \cap ((\Psi_V \setminus \mathcal{A}) \cup   \mathcal{A}) \nonumber\\
			& = (\mathcal{F} \cap (\Psi_V \setminus \mathcal{A})) \cup (\mathcal{F} \cap \mathcal{A}) \nonumber \\
			& = \emptyset \cup  \mathcal{A} = \mathcal{A}
		\end{align*}
		where we made use of the facts $\mathcal{F} \cap \mathcal{A} = \mathcal{A}$ and $\mathcal{F}\cap (\Psi_V \setminus \mathcal{A}) = \emptyset$.
		Hence, from $\mathcal{A} \subset \mathcal{F}\cap \Psi_V$ and $\mathcal{F} \cap \Psi_V\subset\mathcal{A}$, it follows that $ \mathcal{F}\cap \Psi_V = \mathcal{A}$. Therefore, one can conclude that any maximal solution to \eqref{eqn:hybrid_closed_loop1} must converge to $\mathcal{A}$. 
		As per Lemma \ref{lem:HBC}, the hybrid basic conditions \cite[Assumption 6.5]{goebel2012hybrid} are satisfied for the hybrid closed-loop system \eqref{eqn:hybrid_closed_loop1}. 
		Since $f(x) + g(x)\kappa(x,\theta) \in \mathsf{T}_{\mathcal{X}}(x) $ for all $(x,\theta)\in \mathcal{X}\times \mathbb{R}^r$ and $\varpi(x,\theta)\in \mathbb{R}^r$ by assumption, the closed set $\mathcal{X}\times \mathbb{R}^r$ is forward (positively) invariant. From \eqref{eqn:dot_V}, \eqref{eqn:V^+} and Definition \ref{defn:synergistic_feedback}, every maximal solution to \eqref{eqn:hybrid_closed_loop1} is bounded. Moreover, according to \eqref{eqn:V^+} there is no infinite sequence of jumps in a finite time interval (absence of Zeno behavior) for system \eqref{eqn:hybrid_closed_loop1}, and the number of jumps is bounded by the initial conditions (\ie, $J \leq V(x(0,0),\theta(0,0))/\delta$). From the definitions of $\mathcal{F}$ and $\mathcal{J}$ in \eqref{eqn:hybrid_feedback1_design}, one verifies $G(\mathcal{J}) \subset \mathcal{X}\times \mathbb{R}^r = \mathcal{F}\cup \mathcal{J}$. Therefore, by virtue of \cite[Proposition 6.10]{goebel2012hybrid},  it follows that every maximal solution to \eqref{eqn:hybrid_closed_loop1} is complete. Consequently, one concludes that the set $\mathcal{A}$ is GAS for the hybrid closed-loop system \eqref{eqn:hybrid_closed_loop1}, which completes the proof.
	\end{proof} 
	
	\begin{rem}
		The key idea behind the design of the flow and jump maps $(\mathcal{F},\mathcal{J})$ in \eqref{eqn:hybrid_feedback1_design} is to ensure that the desired equilibrium set $\mathcal{A}$  belongs (exclusively) to the flow set $\mathcal{F}$ (\ie, $ \mathcal{A} \subset \mathcal{F},~ \mathcal{J} \cap \mathcal{A} = \emptyset$), and the undesired equilibrium set, denoted by $\Psi_{V} \setminus \mathcal{A}$, belongs (exclusively) to the jump set $\mathcal{J}$ (\ie, $
		(\Psi_{V} \setminus \mathcal{A}) \subset \mathcal{J},~ \mathcal{F} \cap (\Psi_{V} \setminus \mathcal{A}) = \emptyset $). Moreover, the design of the jump map $G_o$ in \eqref{eqn:hybrid_feedback1_design} guarantees that $V$ has a strict decrease after each jump (\ie, $V(x^+,\theta^+\in G_o(x,\theta)) = \min_{\bar{\theta}\in \Theta} V(x,\bar{\theta}) = V(x,\theta) - \mu_{V,\Theta}(x,\theta) \leq V(x,\theta) - \delta$). These two properties play important roles in establishing GAS of the desired equilibrium set $\mathcal{A}$ for the overall closed-loop system.
	\end{rem}

	\begin{rem}
		Note that the design of the feedback term $\kappa(x,\theta)$ satisfying condition \eqref{eqn:prop_kappa} depends on the type of the switching variable $\theta$ and its flow dynamics $\varpi(x,\theta)$. If $\theta \in \mathbb{Z}$ denotes a logical index variable with $\varpi(x,\theta) \equiv 0$ as in \cite{mayhew2011synergistic}, condition \eqref{eqn:prop_kappa} can be reduced to
		$
		\langle \nabla_x V(x,\theta), f(x) + g(x)\kappa(x,\theta)\rangle  \leq 0.
		$ 
		If $\theta \in \mathbb{R}$ denotes a scalar variable with dynamics $ \varpi(x,\theta) =- k_\theta\nabla_\theta V(x,\theta)$ as in \cite{wang2022hybrid}, condition \eqref{eqn:prop_kappa} can be reduced to
		$
		\langle \nabla_x V(x,\theta), f(x) + g(x)\kappa(x,\theta)\rangle    - k_\theta\|\nabla_\theta V(x,\theta)\|^2 \leq 0.
		$ 
	\end{rem}

	\subsection{Hybrid Feedback with Smooth Control Input} \label{sec:hybrid_smooth}
	
	The switching variable $\theta$ with hybrid dynamics \eqref{eqn:theta_hybrid_dynamics} causes discontinuity in the feedback term $\kappa(x,\theta)$ during jumps, which is not desirable in practical applications. In order to remove the discontinuities in the hybrid feedback $\kappa(x,\theta)$, we assume that $\kappa(x,\theta)$ can be decomposed into some smooth functions of $x$ and a simplified function of $(x,\theta)$.
	More precisely, we assume that there exist functions $\varsigma: \mathcal{X}   \to \mathbb{R}^m$, $\Upsilon: \mathcal{X}  \to \mathbb{R}^{m\times s}$ and $\sigma: \mathcal{X} \times \mathbb{R}^r  \to \mathbb{R}^{s}$ such that
	\begin{align}
		\kappa(x,\theta) =
		\varsigma(x) + \Upsilon(x)\sigma(x,\theta) \label{eqn:kappa_decomp}.
	\end{align}
	We further assume that $\sigma$ satisfies the following assumption:
	\begin{assum}\label{assum:sigma_bound}
		There exists a constant $c_\kappa> 0 $ such that the function $\sigma$ in \eqref{eqn:kappa_decomp} satisfies $\max_{\bar{\theta}\in\Theta}  \|\sigma(x,\theta) - \sigma(x,\bar{\theta})\|^2  \leq  2c_\kappa$  for all $(x,\theta)\in \Psi_V \setminus\mathcal{A}$.
	\end{assum}

	\begin{rem}
		Note that the terms $\varsigma(x)$ and $\Upsilon(x)$ are smooth as they are independent of the switching variable $\theta$, and the function $\sigma(x,\theta)$ is smooth between the jumps of $\theta$. 
		By expressing  $\kappa(x,\theta)$ in the form of \eqref{eqn:kappa_decomp}, the upper bound condition in Assumption \ref{assum:sigma_bound} is only required on the simplified term $\sigma(x,\theta)$ instead of the entire feedback term $\kappa(x,\theta)$.  
		Note that the smoothing approach presented in \cite{mayhew2011synergistic} can be seen as a special case of our approach since the control input in \cite{mayhew2011synergistic} can rewritten as $\kappa(x,\theta) =  \Upsilon(x) \sigma(\theta):=\begin{bmatrix} \kappa(x,1), \dots, \kappa(x, L)\end{bmatrix}  e_{\theta}$ with $\theta \in \Theta:=\{1,2,\dots, L\}\subset \mathbb{N}$ and $e_\theta  \in \mathbb{R}^L$ denoting the $\theta$-th column of the identity matrix $I_L$. 
	\end{rem}
	
	In order to remove the discontinuities in the term $\sigma(x,\theta)$ in \eqref{eqn:kappa_decomp}, we introduce an auxiliary variable $\eta\in \mathbb{R}^s$ with continuous dynamics to smoothly track the non-smooth term $\sigma(x,\theta)$ under Assumption \ref{assum:sigma_bound}. More precisely, instead of directly implementing the non-smooth control input $u=\kappa(x,\theta)$, we consider the smooth control input $u =\bar{\kappa}(x,\eta):= \varsigma(x) + \Upsilon(x)\eta $ from \eqref{eqn:kappa_decomp} with $\dot{\eta} = u_s$ to be designed.  
	Let ${x}_s=(x,\eta)\in \mathcal{X}_s := \mathcal{X}\times \mathbb{R}^s$ denote the extended state.
	From \eqref{eqn:new_affine_system} and \eqref{eqn:kappa_decomp},
	one obtains the following modified nonlinear system:
	\begin{align}\label{eqn:smooth_affine_system}
		\begin{pmatrix}
			\dot{x} \\
			\dot{\eta} \\
			\dot{\theta}
		\end{pmatrix}
		& = \underbrace{\begin{pmatrix}
				f(x) + g(x)\bar{\kappa}(x,\eta) \\
				0 \\
				\varpi(x,\theta)
		\end{pmatrix}}_{f_s(x_s,\theta)} +
		\underbrace{\begin{pmatrix}
				0 \\
				I_s \\
				0
		\end{pmatrix}}_{g_s(x_s,\theta)}u_s
	\end{align}
	where $(x_s,\theta)\in \mathcal{X}_s\times \mathbb{R}^r$ denotes the new state and $u_s  \in \mathbb{R}^s$ denotes the new control input. Then, the new objective consists in designing a new hybrid feedback $u_s = \kappa_s(x_s,\theta)$ for system \eqref{eqn:smooth_affine_system} such that the following equilibrium set of the overall closed-loop system is GAS:
	\begin{align}
		\mathcal{A}_s := \{(x_s,\theta)\in \mathcal{X}_s\times \mathbb{R}^r: (x,\theta)\in \mathcal{A}, \eta = \sigma(x,\theta)\}. \label{eqn:def_A_s}
	\end{align}
	
	Given a synergistic feedback quadruple $(V,\kappa,\varpi, \Theta)$ with $\kappa$ satisfying the form of \eqref{eqn:kappa_decomp}, we consider the following real-valued functions $V_s: \mathcal{X}_s\times \mathbb{R}^r  \to \mathbb{R}_{\geq 0}$ and $\kappa_s: \mathcal{X}_s \times \mathbb{R}^r \to \mathbb{R}^s$:
	\begin{align}
		V_s(x_s,\theta) & =  V(x,\theta) + \frac{\gamma_s}{2}\|\eta - \sigma(x,\theta)\|^2 \label{eqn:def_V_s}\\
		\kappa_s(x_s,\theta)
		& =  -  k_\eta(\eta-\sigma(x,\theta))  +   \mathcal{D}_t{\sigma}(x,\theta)   \nonumber\\
		& \qquad  \qquad -  \frac{1}{\gamma_s} \Upsilon(x)\T g(x)\T \nabla_x V(x,\theta)   \label{eqn:def_kappa_s}
	\end{align}
	with constant scalars $k_\eta, \gamma_s>0$ to be designed, and 
	\begin{multline*}
		\mathcal{D}_t{\sigma}(x,\theta):= \mathcal{D}_x \sigma (x,\theta) \left( f(x) + g(x)\bar{\kappa}(x,\eta) \right)  \\
		+ \mathcal{D}_\theta \sigma (x,\theta)\varpi(x,\theta)
	\end{multline*}
	with $\mathcal{D}_x \sigma(x,\theta)$ and $\mathcal{D}_\theta \sigma(x,\theta)$ denoting the Jacobians of $ \sigma$ relative to $x$ and $\theta$, respectively, \ie, $\mathcal{D}_t{\sigma}(x,\theta)= \frac{d}{dt}{\sigma}(x,\theta)$ between the jumps of $\theta$.  
	From the definition of $V_s$ in \eqref{eqn:def_V_s} and the properties of the synergistic feedback quadruple $(V,\kappa,\varpi,\Theta)$ in Definition \ref{defn:synergistic_feedback}, one can verify that $V_s$ is positive definite with respect to the set $\mathcal{A}_s$.

	\begin{prop}\label{prop:SLFS}
		Consider  the real-valued functions $V_s,\kappa_s$ defined in \eqref{eqn:def_V_s}-\eqref{eqn:def_kappa_s} with $(V,\kappa,\varpi,\Theta)$ being a synergistic feedback quadruple relative to $\mathcal{A}$, with gap exceeding $\delta>0$, for system \eqref{eqn:new_affine_system}. Suppose that Assumption \ref{assum:sigma_bound} holds, and
		choose $0<\gamma_s < \frac{\delta}{c_k}$ and $k_\eta>0$. Then, $(V_s,\kappa_s,\varpi,\Theta)$ is a synergistic feedback quadruple relative to the set $\mathcal{A}_s$ in \eqref{eqn:def_A_s}, with gap exceeding $\delta_s\in (0,\delta- \gamma_s c_k ]$, for system \eqref{eqn:smooth_affine_system}.
	\end{prop}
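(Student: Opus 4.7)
The plan is to verify the four conditions C1)--C4) of Definition \ref{defn:synergistic_feedback} for $(V_s,\kappa_s,\varpi,\Theta)$ relative to the set $\mathcal{A}_s$, and to read off the claimed range for the synergy gap $\delta_s$ at the end. Conditions C1) and C2) follow at once from the construction $V_s(x_s,\theta) = V(x,\theta) + \frac{\gamma_s}{2}\|\eta - \sigma(x,\theta)\|^2$: positive definiteness of $V_s$ relative to $\mathcal{A}_s$ comes from C2) for $V$ together with the quadratic penalty, and compactness of the sub-level sets $\mho_{V_s}(\epsilon)$ follows by first using compactness of $\mho_V(\epsilon)$ to bound $(x,\theta)$ and then using the quadratic penalty together with continuity of $\sigma$ to bound $\eta$ on the resulting compact set.

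For C3), I would compute $\langle \nabla V_s, f_s + g_s \kappa_s\rangle$ directly. Writing $\tilde\eta := \eta - \sigma(x,\theta)$, the gradient decomposes as $\nabla_x V_s = \nabla_x V - \gamma_s \mathcal{D}_x\sigma^\top \tilde\eta$, $\nabla_\eta V_s = \gamma_s \tilde\eta$, and $\nabla_\theta V_s = \nabla_\theta V - \gamma_s \mathcal{D}_\theta\sigma^\top \tilde\eta$. Using the identity $\bar\kappa(x,\eta) = \kappa(x,\theta) + \Upsilon(x)\tilde\eta$ together with the explicit form of $\kappa_s$ in \eqref{eqn:def_kappa_s}, I expect the $\mathcal{D}_t\sigma$ contributions to cancel (this is precisely the purpose of the feedforward in $\kappa_s$) and the cross terms $\pm\nabla_x V^\top g\Upsilon\tilde\eta$ to cancel with the contribution of the $-\frac{1}{\gamma_s}\Upsilon^\top g^\top \nabla_x V$ part of $\kappa_s$. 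What survives is
\[
\langle \nabla V_s, f_s + g_s\kappa_s\rangle = \langle \nabla V, f_c + g_c\kappa\rangle - k_\eta \gamma_s \|\tilde\eta\|^2 \le 0,
\]
where the inequality uses C3) for the original quadruple.

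Condition C4) is the delicate step and the main obstacle. I would first identify the largest weakly invariant subset $\Psi_{V_s}$ of the set where the above inner product vanishes: there $\tilde\eta \equiv 0$ and $\langle \nabla V, f_c + g_c\kappa\rangle \equiv 0$, and projecting onto the $(x,\theta)$-coordinates (where the flow of $x$ reduces to $\dot x = f(x) + g(x)\kappa(x,\theta)$ because $\eta = \sigma(x,\theta)$) produces a complete solution of the original closed-loop that stays in $\mathcal{E}$. Hence $\Psi_{V_s}\setminus\mathcal{A}_s \subseteq \{(x,\sigma(x,\theta),\theta): (x,\theta)\in \Psi_V\setminus\mathcal{A}\}$. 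On this set $V_s(x_s,\theta) = V(x,\theta)$, while for every $\bar\theta\in\Theta$ one has $V_s(x_s,\bar\theta) = V(x,\bar\theta) + \frac{\gamma_s}{2}\|\sigma(x,\theta)-\sigma(x,\bar\theta)\|^2$. Upper-bounding $\min_{\bar\theta\in\Theta} V_s(x_s,\bar\theta)$ by its value at the minimizer of $V(x,\cdot)$ over $\Theta$ and invoking Assumption \ref{assum:sigma_bound} (which applies exactly on $\Psi_V\setminus\mathcal{A}$, with $c_\kappa$ the constant appearing in the proposition) yields
\[
\mu_{V_s,\Theta}(x_s,\theta) \ge \mu_{V,\Theta}(x,\theta) - \gamma_s c_\kappa > \delta - \gamma_s c_\kappa \ge \delta_s
\]
for any $\delta_s \in (0,\delta - \gamma_s c_\kappa]$. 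The constraint $\gamma_s < \delta/c_\kappa$ is precisely what keeps the right-hand side strictly positive, so the perturbation introduced by the quadratic penalty does not destroy the gap; this delivers the desired synergy-gap condition and completes the verification.
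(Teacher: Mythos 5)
Your proposal is correct and follows essentially the same route as the paper's proof: the same observations for C1)--C2), the same cancellation of the $\mathcal{D}_t\sigma$ and $\Upsilon^\top g^\top\nabla_x V$ terms yielding $\langle \nabla V_s, f_s+g_s\kappa_s\rangle = \langle \nabla V, f_c+g_c\kappa\rangle - \gamma_s k_\eta\|\tilde\eta\|^2$ for C3), and the same estimate $\mu_{V_s,\Theta}(x_s,\theta)\ge \mu_{V,\Theta}(x,\theta)-\gamma_s c_\kappa>\delta-\gamma_s c_\kappa\ge\delta_s$ on $\Psi_{V_s}\setminus\mathcal{A}_s$ for C4). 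The only cosmetic difference is that you locate $\Psi_{V_s}$ by projecting invariant solutions onto the $(x,\theta)$-coordinates, whereas the paper routes through the auxiliary set $\mathcal{W}$ and an intermediate invariant set $\bar{\Psi}_V\subseteq\Psi_V$; both arguments land on the same containment into $\{(x,\theta)\in\Psi_V\setminus\mathcal{A},\ \eta=\sigma(x,\theta)\}$, which is all the gap estimate needs.
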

	\begin{proof}
		See Appendix \ref{sec:SLFS}.
	\end{proof}
	
	Given the synergistic feedback quadruple $(V_s,\kappa_s,\varpi,\Theta)$ as per Proposition \ref{prop:SLFS}, we propose the following modified hybrid feedback for system \eqref{eqn:affine_system}:
	\begin{align}
		\underbrace{
			\begin{array}{ll}
				u            & = \varsigma(x) + \Upsilon(x)\eta \\
				\dot{\eta}   & =  \kappa_s(x_s,\theta) \\
				\dot{\theta} & = \varpi(x,\theta)
			\end{array}
		}_{(x_s,\theta)\in \mathcal{F}_s}~
		\underbrace{
			\begin{array}{ll}
				&\\
				\eta^+   & = \eta\\
				\theta^+ & \in G_s(x_s,\theta)
			\end{array}
		}_{(x_s,\theta)\in  \mathcal{J}_s} \label{eqn:hybrid_feedback2}
	\end{align}
	where $\kappa_s$ is defined in \eqref{eqn:def_kappa_s} and  $G_s, \mathcal{F}_s, \mathcal{J}_s$ are designed as follows:
	\begin{subequations}\label{eqn:hybrid_feedback2_design}
		\begin{align}
			G_s(x_s,\theta) & = \{\bar{\theta} \in \Theta: \mu_{V_s,\Theta}(x_s,\bar{\theta})= 0 \} \label{eqn:def_G_theta_s} \\
			\mathcal{F}_s   & = \{(x_s,\theta)\in \mathcal{X}_s\times \mathbb{R}^r: \mu_{V_s,\Theta}(x_s,\theta) \leq \delta_s\} \label{eqn:def_Fs_set}\\
			\mathcal{J}_s   & = \{(x_s,\theta)\in \mathcal{X}_s\times \mathbb{R}^r: \mu_{V_s,\Theta}(x_s,\theta) \geq \delta_s\}  \label{eqn:def_Js_set}
		\end{align}
	\end{subequations}
	with 
	$$\mu_{V_s,\Theta}(x_s,\theta) :=   V_s(x_s,\theta) - \min_{\bar{\theta}\in \Theta} V_s(x_s,\bar{\theta})$$ 
	and some $\delta_s \in (0,\delta- \gamma_s c_k ]$ chosen as per Proposition \ref{prop:SLFS}.
	In view of \eqref{eqn:smooth_affine_system}, \eqref{eqn:def_kappa_s} and \eqref{eqn:hybrid_feedback2}, one obtains the following closed-loop system:
	\begin{subequations}\label{eqn:hybrid_closed_loop2}
		\begin{align} 
			\begin{pmatrix}
				\dot{x} \\
				\dot{\eta} \\
				\dot{\theta}
			\end{pmatrix} &                                                              
			= \begin{pmatrix}
				f(x) + g(x)\bar{\kappa}(x,\eta)\\
				\kappa_s(x_s,\theta)\\
				\varpi(x,\theta)
			\end{pmatrix}, (x_s,\theta)\in \mathcal{F}_s \label{eqn:hybrid_closed_F_s} \\
			\begin{pmatrix}
				x^+ \\
				\eta^+ \\
				\theta^+
			\end{pmatrix} & \in                                                          
			\begin{pmatrix}
				x \\
				\eta \\
				G_s(x_s,\theta)
			\end{pmatrix},\qquad \qquad  (x_s,\theta)\in \mathcal{J}_s \label{eqn:hybrid_closed_G_s}
		\end{align}
	\end{subequations}
	with $\kappa_s$ defined in \eqref{eqn:def_kappa_s}, and $G_s, \mathcal{F}_s, \mathcal{J}_s$ defined in \eqref{eqn:hybrid_feedback2_design}.  
	Similar to Lemma \ref{lem:HBC}, one can show that the hybrid closed-loop system \eqref{eqn:hybrid_closed_loop2} satisfies the hybrid basic conditions \cite[Assumption 6.5]{goebel2012hybrid}. Then, one can state one of our main results:
	\begin{thm}\label{thm:thm2}
		Consider the hybrid closed-loop system \eqref{eqn:hybrid_closed_loop2} with the synergistic feedback quadruple $(V_s,\kappa_s,\varpi,\Theta)$ with gap exceeding $\delta_s>0$   as per Proposition \ref{prop:SLFS}.
		Let Assumption \ref{assum:sigma_bound} hold.
		Then, the compact set $\mathcal{A}_s$ in \eqref{eqn:def_A_s} is GAS for system \eqref{eqn:hybrid_closed_loop2} if $f(x) + g(x)(\varsigma(x) + \Upsilon(x)\eta) \in \mathsf{T}_\mathcal{X}(x)$  for all $x_s=(x,\eta)\in \mathcal{X}_s$.
	\end{thm}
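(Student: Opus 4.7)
The plan is to reduce Theorem~\ref{thm:thm2} to a direct application of Theorem~\ref{thm:thm1} on the extended system \eqref{eqn:smooth_affine_system} with the enlarged state space $\mathcal{X}_s = \mathcal{X}\times \mathbb{R}^s$. The key observation is that the hybrid feedback \eqref{eqn:hybrid_feedback2} together with the sets and jump map in \eqref{eqn:hybrid_feedback2_design} are constructed from $(V_s,\kappa_s,\varpi,\Theta)$ and $\delta_s$ in exactly the canonical form of \eqref{eqn:hybrid_feedback1}--\eqref{eqn:hybrid_feedback1_design}, so the resulting closed-loop system \eqref{eqn:hybrid_closed_loop2} has precisely the structure of \eqref{eqn:hybrid_closed_loop1} (with $(x,\theta)$ replaced by $(x_s,\theta)$ and the nominal quadruple replaced by the smoothed one). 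Therefore, once the hypotheses of Theorem~\ref{thm:thm1} are verified for this new quadruple, the conclusion follows immediately.

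First, I would invoke Proposition~\ref{prop:SLFS}, which already supplies the critical fact: under Assumption~\ref{assum:sigma_bound} and the design choices $0<\gamma_s<\delta/c_k$ and $k_\eta>0$, the quadruple $(V_s,\kappa_s,\varpi,\Theta)$ is synergistic relative to $\mathcal{A}_s$ with gap exceeding $\delta_s$ for system \eqref{eqn:smooth_affine_system}. Second, I would verify that \eqref{eqn:hybrid_closed_loop2} satisfies the hybrid basic conditions \cite[Assumption 6.5]{goebel2012hybrid}; this proceeds along the same lines as the proof of Lemma~\ref{lem:HBC}, using that $V_s$ is continuous in $(x_s,\theta)$, the set $\Theta$ is finite, the flow map in \eqref{eqn:hybrid_closed_F_s} is continuous on $\mathcal{F}_s$, and $G_s$ in \eqref{eqn:def_G_theta_s} is outer semicontinuous and locally bounded as the $\arg\min$ of a continuous function over a finite set.

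Third, I must confirm the tangent cone condition required by Theorem~\ref{thm:thm1} applied on $\mathcal{X}_s$. Because $\mathcal{X}_s = \mathcal{X}\times \mathbb{R}^s$ is a product, its Bouligand tangent cone factors as $\mathsf{T}_{\mathcal{X}_s}(x,\eta) = \mathsf{T}_\mathcal{X}(x)\times \mathbb{R}^s$. The $x$-component of the flow in \eqref{eqn:hybrid_closed_F_s} is $f(x) + g(x)\bar{\kappa}(x,\eta) = f(x) + g(x)(\varsigma(x) + \Upsilon(x)\eta)$, which lies in $\mathsf{T}_\mathcal{X}(x)$ by the hypothesis of Theorem~\ref{thm:thm2}, while the $\eta$- and $\theta$-components take values in $\mathbb{R}^s$ and $\mathbb{R}^r$ respectively and thus impose no constraint. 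Hence the forward invariance premise of Theorem~\ref{thm:thm1} is met.

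With these three items in place, applying Theorem~\ref{thm:thm1} to the synergistic feedback quadruple $(V_s,\kappa_s,\varpi,\Theta)$ for system \eqref{eqn:smooth_affine_system} yields GAS of $\mathcal{A}_s$ for \eqref{eqn:hybrid_closed_loop2}. The only mildly nontrivial step is the tangent cone factorization on the product space, ensuring that the auxiliary smoothing variable $\eta$---being unconstrained---does not spoil the forward invariance argument inherited from the original system; apart from this, the proof is essentially a corollary, since Proposition~\ref{prop:SLFS} has already absorbed the real work of showing that the smoothing mechanism preserves the synergistic structure with a strictly positive gap $\delta_s$.
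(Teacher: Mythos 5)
Your proposal is correct and takes essentially the same route as the paper, which explicitly omits the proof on the grounds that it follows Theorem~\ref{thm:thm1} applied to the smoothed quadruple $(V_s,\kappa_s,\varpi,\Theta)$ furnished by Proposition~\ref{prop:SLFS}. Your added remark on the product-space tangent cone factorization $\mathsf{T}_{\mathcal{X}_s}(x,\eta)=\mathsf{T}_\mathcal{X}(x)\times\mathbb{R}^s$ is a correct and worthwhile detail that the paper leaves implicit.
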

	
	From Proposition \ref{prop:SLFS}, one can show that $(V_s,\kappa_s,\varpi,\Theta)$ is a synergistic feedback quadruple relative to $\mathcal{A}_s$, with gap exceeding $\delta_s$, for system \eqref{eqn:smooth_affine_system}. Therefore, the proof of Theorem \ref{thm:thm2} is omitted as it follows closely the proof of Theorem \ref{thm:thm1} by applying the results of Proposition \ref{prop:SLFS}.
	
	\subsection{Integrator Backstepping}
	In this subsection, we extend the dynamics in \eqref{eqn:smooth_affine_system} to include the control input $u$ as an additional controller state and obtain the following extended affine nonlinear system:
	\begin{align}\label{eqn:backstepping_affine_system}
		\begin{pmatrix}
			\dot{x} \\
			\dot{\eta} \\
			\dot{u} \\
			\dot{\theta}
		\end{pmatrix} =  \underbrace{\begin{pmatrix}
				f(x) + g(x)u\\
				\kappa_s(x_s,\theta) \\
				0\\
				\varpi(x,\theta)
		\end{pmatrix}}_{f_b(x_b,\theta)} +
		\underbrace{\begin{pmatrix}
				0 \\
				0 \\
				I_m \\
				0
		\end{pmatrix}}_{g_b(x_b,\theta)} u_b
	\end{align}
	where $x_b:=(x_s,u) \in \mathcal{X}_b:=\mathcal{X}_s \times \mathbb{R}^m$ denotes the new extended state, and $u_b \in \mathbb{R}^m$ denotes the new control input to be designed. 
	The new goal consists in designing a new hybrid feedback $u_b = \kappa_b(x_b,\theta)$ for system \eqref{eqn:backstepping_affine_system} such that the following equilibrium set of the overall closed-loop system is GAS:
	\begin{align} \label{eqn:def_A_b}
		\mathcal{A}_b : = \{(x_b,\theta)\in \mathcal{X}_b \times \mathbb{R}^r: (x_s,\theta) \in \mathcal{A}_s,  u = \bar{\kappa}(x,\eta)\}
	\end{align}
	with $\mathcal{A}_s$ defined in \eqref{eqn:def_A_s} and $\bar{\kappa}(x,\eta) = \varsigma(x) + \Upsilon(x)\eta$.
	Given a synergistic feedback quadruple $(V_s,\kappa_s,\varpi,\Theta)$, we consider the following real-valued functions $V_b: \mathcal{X}_b\times \mathbb{R}^r  \to \mathbb{R}_{\geq 0}$ and $\kappa_b: \mathcal{X}_b \times \mathbb{R}^r \to \mathbb{R}^m$:
	\begin{align}
		V_b(x_b,\theta) & =      
		V_s(x_s, \theta)+  \frac{\gamma_b}{2} \|u - \bar{\kappa}(x,\eta)\|^2 \label{eqn:def_V_b} \\
		\kappa_b(x_b,\theta)
		& =  -  k_b(u- \bar{\kappa}(x,\eta)) + \mathcal{D}_t \bar{\kappa}(x,\eta)  \nonumber\\
		& \qquad \qquad   -  \frac{1}{\gamma_b} g(x)\T \nabla_x V(x,\theta)   \label{eqn:def_kappa_b}
	\end{align}
	where $\gamma_b, k_b >0$ are constant scalars to be designed, and 
	\begin{multline*}
		\mathcal{D}_t \bar{\kappa}(x,\eta)
		:=  \Upsilon(x)\kappa_s(x_s,\theta) \\
		+ \left(  \mathcal{D}_x{\varsigma}(x) + \sum_{i=1}^{s} \eta_i \mathcal{D}_x \Upsilon_i(x) \right) (f(x)+g(x)u)
	\end{multline*}
	with $\eta=[\eta_i,\dots,\eta_s]\T$,  $ \Upsilon(x) =[\Upsilon_1(x),\dots, \Upsilon_s(x)]$ such that $\Upsilon(x)\eta = \sum_{i=1}^{s}\Upsilon_i(x)\eta_i$, and $\mathcal{D}_x{\varsigma}(x), \mathcal{D}_x \Upsilon_i(x)$ denoting the Jacobian matrices of ${\varsigma}(x)$ and $\Upsilon_i(x)$ related to $x$, respectively.
	
	\begin{prop}\label{prop:SLFS_backstepping}
		Consider the real-valued functions $V_b,\kappa_b$ defined in \eqref{eqn:def_V_b}-\eqref{eqn:def_kappa_b} with $(V,\kappa,\varpi,\Theta)$ being a synergistic feedback quadruple relative to $\mathcal{A}$, with gap exceeding $\delta>0$, for system \eqref{eqn:new_affine_system}. Suppose that Assumption \ref{assum:sigma_bound} holds, and choose $0<\gamma_s < \frac{\delta}{c_k}, \gamma_b>0$. Then, $(V_b,\kappa_b,\varpi,\Theta)$ is a synergistic feedback quadruple relative to the set $\mathcal{A}_b$ in \eqref{eqn:def_A_b}, with gap exceeding $\delta_b\in (0,\delta- \gamma_s c_k ]$, for system \eqref{eqn:backstepping_affine_system}.
	\end{prop}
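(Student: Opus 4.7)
The plan is to mirror the proof of Proposition \ref{prop:SLFS} and verify the four conditions of Definition \ref{defn:synergistic_feedback} for the quadruple $(V_b,\kappa_b,\varpi,\Theta)$ relative to $\mathcal{A}_b$ for system \eqref{eqn:backstepping_affine_system}. Since $V_b = V_s + \frac{\gamma_b}{2}\|u-\bar{\kappa}(x,\eta)\|^2$ adds a non-negative quadratic penalty (vanishing precisely on $\{u=\bar{\kappa}(x,\eta)\}$) to the synergistic Lyapunov function $V_s$ supplied by Proposition \ref{prop:SLFS}, conditions C1 and C2 follow immediately: sublevel sets of $V_b$ lie inside compact sublevel sets of $V_s$ intersected with Euclidean $u$-balls around $\bar{\kappa}(x,\eta)$, and $V_b$ vanishes exactly on $\mathcal{A}_b$.

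For C3, I would differentiate $V_b$ along $f_b + g_b\kappa_b$. Writing $e_\eta := \eta-\sigma$ and $e_u := u - \bar{\kappa}$, I would use that $\dot{\bar{\kappa}} = \mathcal{D}_t\bar{\kappa}$ by construction (the formula in \eqref{eqn:def_kappa_b} already encodes $\dot{x}=f+gu$ and $\dot{\eta}=\kappa_s$), whereas $\dot{\sigma} = \mathcal{D}_t\sigma + \mathcal{D}_x\sigma\, g\, e_u$ because $\mathcal{D}_t\sigma$ in \eqref{eqn:def_kappa_s} is defined along the smooth dynamics $\dot{x}=f+g\bar{\kappa}$ rather than $\dot{x}=f+gu$. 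After substituting $\kappa_s$ and $\kappa_b$ and collecting terms, the pairs $\pm\nabla_x V\T g\, e_u$ cancel via the $-\frac{1}{\gamma_b}g\T\nabla_x V$ term in $\kappa_b$, and the mismatch $\bar{\kappa}-\kappa = \Upsilon e_\eta$ appearing in $\dot{V}$ cancels against the $-\frac{1}{\gamma_s}\Upsilon\T g\T\nabla_x V$ term in $\kappa_s$, leaving
\begin{multline*}
\dot{V}_b \le \langle \nabla V, f_c + g_c\kappa\rangle - \gamma_s k_\eta\|e_\eta\|^2 \\
 - \gamma_b k_b\|e_u\|^2 - \gamma_s\, e_\eta\T \mathcal{D}_x\sigma\, g\, e_u,
\end{multline*}
where the first term is non-positive by C3 for $(V,\kappa,\varpi,\Theta)$.

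For C4, I would exploit the fact that $\frac{\gamma_b}{2}\|u-\bar{\kappa}(x,\eta)\|^2$ is $\theta$-independent, hence $\mu_{V_b,\Theta}(x_b,\theta) = \mu_{V_s,\Theta}(x_s,\theta)$. Once the cross term above is dominated, strict dissipation in $\|e_\eta\|$ and $\|e_u\|$ constrains any weakly invariant solution in $\mathcal{E}_b$ to the slice $\{e_\eta = 0,\ e_u = 0\}$; on this slice $u=\bar{\kappa}(x,\sigma)=\kappa(x,\theta)$, so the residual $x$-flow reduces to $\dot{x}=f+g\kappa$ and $\Psi_{V_b}$ projects onto $\Psi_V$. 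A direct minimisation of $\bar{\theta}\mapsto V(x,\bar{\theta}) + \frac{\gamma_s}{2}\|\sigma(x,\theta)-\sigma(x,\bar{\theta})\|^2$ together with Assumption \ref{assum:sigma_bound} and C4 for $(V,\kappa,\varpi,\Theta)$ then gives
\begin{align*}
\mu_{V_b,\Theta}(x_b,\theta) \ge \mu_{V,\Theta}(x,\theta) - \gamma_s c_k > \delta - \gamma_s c_k
\end{align*}
on $\Psi_{V_b}\setminus\mathcal{A}_b$, validating any gap $\delta_b \in (0,\delta-\gamma_s c_k]$.

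The main obstacle is the cross term $-\gamma_s\, e_\eta\T \mathcal{D}_x\sigma\, g\, e_u$, which the stated $\kappa_b$ does not cancel outright. I would absorb it into the two diagonal dissipation terms via Young's inequality, producing a $2\times 2$ quadratic form in $(\|e_\eta\|,\|e_u\|)$ whose negative semidefiniteness reduces to an inequality of the form $4\gamma_b k_b k_\eta \ge \gamma_s\|\mathcal{D}_x\sigma(x,\theta)\, g(x)\|^2$. Smoothness of $\sigma$ and $g$ makes this achievable by tuning $k_b$ on compact sublevel sets of $V_b$; extending it uniformly on $\mathcal{X}_b \times \mathbb{R}^r$ may call for a mild growth condition on $\mathcal{D}_x\sigma\,g$. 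Every other step is a direct transposition of the arguments already used for Proposition \ref{prop:SLFS}.
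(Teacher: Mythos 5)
Your proposal follows the same architecture as the paper's proof: C1 and C2 from positivity of the added quadratic penalty $\frac{\gamma_b}{2}\|u-\bar{\kappa}(x,\eta)\|^2$, C3 by direct differentiation of $V_b$, and C4 by observing that the $u$-penalty is $\theta$-independent so that $\mu_{V_b,\Theta}(x_b,\theta)=\mu_{V_s,\Theta}(x_s,\theta)$ on the invariant set, characterizing $\Psi_{V_b}$ via $\tilde{\eta}=0$ and $u=\bar{\kappa}(x,\eta)$, and concluding $\mu_{V_b,\Theta}\geq \mu_{V,\Theta}-\gamma_s c_\kappa>\delta-\gamma_s c_\kappa$ exactly as in \eqref{eqn:V-minV2}--\eqref{eqn:delta_V_b}.

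The one substantive divergence is the cross term. The paper's computation \eqref{eqn:prop_kappa_b} does not contain $-\gamma_s\tilde{\eta}\T\mathcal{D}_x\sigma(x,\theta)\,g(x)\,(u-\bar{\kappa}(x,\eta))$: it writes the derivative of $\frac{\gamma_s}{2}\|\tilde{\eta}\|^2$ as $\gamma_s\tilde{\eta}\T(\kappa_s-\mathcal{D}_t\sigma)$ and substitutes $\kappa_s-\mathcal{D}_t\sigma=-k_\eta\tilde{\eta}-\frac{1}{\gamma_s}\Upsilon\T g\T\nabla_x V$ from \eqref{eqn:def_kappa_s}, where $\mathcal{D}_t\sigma$ is defined along $\dot{x}=f+g\bar{\kappa}$. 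Along the backstepping flow \eqref{eqn:backstepping_affine_system} one has $\dot{x}=f+gu$, so $\frac{d}{dt}\sigma=\mathcal{D}_t\sigma+\mathcal{D}_x\sigma\,g\,(u-\bar{\kappa})$, and your bookkeeping is the strictly correct one: the cross term is present whenever $\sigma$ depends on $x$ (it vanishes identically in the paper's navigation application, where $\sigma=\sigma(\theta)$). However, your proposed remedy --- Young's inequality plus a gain condition of the form $4\gamma_b k_b k_\eta\geq\gamma_s\|\mathcal{D}_x\sigma\,g\|^2$ --- does not establish the proposition as stated: that condition is not among its hypotheses and, as you acknowledge, cannot be enforced uniformly on the unbounded set $\mathcal{X}_b\times\mathbb{R}^r$ without an additional growth assumption; moreover C3 in Definition \ref{defn:synergistic_feedback} is required on all of $\mathcal{X}_b\times\mathbb{R}^r$, not merely on sublevel sets. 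A repair that preserves the unconditional statement is to evaluate $\mathcal{D}_t\sigma$ inside $\kappa_s$, when the latter is used in the backstepping loop, along $\dot{x}=f+gu$ rather than $\dot{x}=f+g\bar{\kappa}$ (this is implementable since $u$ is a state of \eqref{eqn:backstepping_affine_system}); the cross term then cancels exactly and the clean identity \eqref{eqn:prop_kappa_b} is recovered. Everything else in your proposal matches the paper.
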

	\begin{proof}
		See Appendix \ref{sec:SLFS_backstepping}.
	\end{proof}
	
	Given the synergistic feedback quadruple $(V_b,\kappa_b,\varpi,\Theta)$ as per Proposition \ref{prop:SLFS_backstepping}, we propose the following modified hybrid feedback for system \eqref{eqn:backstepping_affine_system}:
	\begin{align}
		\underbrace{
			\begin{array}{ll}
				\dot{u}      & = \kappa_b(x_b,\theta)\\
				\dot{\eta}   & =  \kappa_s(x_s,\theta) \\
				\dot{\theta} & = \varpi(x,\theta)
			\end{array}
		}_{(x_b,\theta)\in \mathcal{F}_b}~
		\underbrace{
			\begin{array}{ll}
				u^+      & = u\\
				\eta^+   & = \eta\\
				\theta^+ & \in G_b(x_b,\theta)
			\end{array}
		}_{(x_b,\theta)\in  \mathcal{J}_b} \label{eqn:hybrid_feedback3}
	\end{align}
	where $\kappa_s$ and $\kappa_b$ are defined in \eqref{eqn:def_kappa_s} and \eqref{eqn:def_kappa_b}, respectively, and $G_b, \mathcal{F}_b, \mathcal{J}_b$  are designed as follows:
	\begin{subequations}\label{eqn:hybrid_feedback3_design}
		\begin{align}
			G_b(x_b,\theta) & = \{\bar{\theta} \in \Theta: \mu_{V_b,\Theta}(x_b,\bar{\theta})= 0 \} \label{eqn:def_G_theta_b} \\
			\mathcal{F}_b   & = \{(x_b,\theta)\in \mathcal{X}_b\times \mathbb{R}^r: \mu_{V_b,\Theta}(x_b,\theta) \leq \delta_b\} \label{eqn:def_Fb_set}\\
			\mathcal{J}_b   & = \{(x_b,\theta)\in \mathcal{X}_b\times \mathbb{R}^r: \mu_{V_b,\Theta}(x_b,\theta) \geq \delta_b\}  \label{eqn:def_Jb_set}
		\end{align}
	\end{subequations}
	with 
	$$
	\mu_{V_b,\Theta}(x_b,\theta)  :=   V_b(x_b,\theta) - \min_{\bar{\theta}\in \Theta} V_b(x_b,\bar{\theta})    
	$$ 
	and some $\delta_b \in (0,\delta- \gamma_s c_k ]$ chosen as per Proposition \ref{prop:SLFS_backstepping}.
	In view of \eqref{eqn:backstepping_affine_system}, \eqref{eqn:def_kappa_b} and \eqref{eqn:hybrid_feedback3}, one obtains the following hybrid closed-loop system:
	\begin{subequations}\label{eqn:hybrid_closed_loop3}
		\begin{align}
			\begin{pmatrix}
				\dot{x} \\
				\dot{\eta} \\
				\dot{u} \\
				\dot{\theta}
			\end{pmatrix} &
			= \begin{pmatrix}
				f(x) + g(x)u\\
				\kappa_s(x_s,\theta) \\
				\kappa_b(x_b,\theta) \\
				\varpi(x,\theta)
			\end{pmatrix}, \quad (x_b,\theta)\in \mathcal{F}_b\label{eqn:hybrid_closed_F_b} \\
			\begin{pmatrix}
				x^+ \\
				\eta^+ \\
				u^+ \\
				\theta^+
			\end{pmatrix} & \in                                                         
			\begin{pmatrix}
				x \\
				\eta \\
				u \\
				G_b(x_b,\theta)
			\end{pmatrix},\quad  ~~~    (x_b,\theta)\in \mathcal{J}_b \label{eqn:hybrid_closed_G_b}
		\end{align}
	\end{subequations}
	with  $\kappa_s$ and $\kappa_b$ defined in \eqref{eqn:def_kappa_s} and \eqref{eqn:def_kappa_b}, respectively,  and $G_b, \mathcal{F}_b, \mathcal{J}_b$ defined in \eqref{eqn:hybrid_feedback3_design}. 
	Similar to Lemma \ref{lem:HBC}, one can also show that the hybrid closed-loop system \eqref{eqn:hybrid_closed_loop3} satisfies the hybrid basic conditions \cite[Assumption 6.5]{goebel2012hybrid}. The stability properties of the closed-loop system \eqref{eqn:hybrid_closed_loop3} are given in the following theorem:
	\begin{thm}\label{thm:thm3}
		Consider the closed-loop system \eqref{eqn:hybrid_closed_loop3} with the synergistic feedback quadruple $(V_b,\kappa_b,\varpi,\Theta)$ with gap exceeding $\delta_b>0$ as per Proposition \ref{prop:SLFS_backstepping}.
		Let Assumption \ref{assum:sigma_bound} hold.
		Then, the compact set $\mathcal{A}_b$ in \eqref{eqn:def_A_b} is GAS for system \eqref{eqn:hybrid_closed_loop3} if $f(x) + g(x)u \in \mathsf{T}_\mathcal{X}(x)$  for all $x_b=(x,\eta, u)\in \mathcal{X}_b$.
	\end{thm}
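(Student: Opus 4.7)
\textbf{Proof Proposal for Theorem \ref{thm:thm3}.} The plan is to show that the hypotheses of Theorem \ref{thm:thm1} are met for the backstepping-augmented system \eqref{eqn:backstepping_affine_system} with the quadruple $(V_b,\kappa_b,\varpi,\Theta)$, so that the GAS conclusion transfers verbatim to $\mathcal{A}_b$. By Proposition \ref{prop:SLFS_backstepping}, $(V_b,\kappa_b,\varpi,\Theta)$ is already a synergistic feedback quadruple relative to $\mathcal{A}_b$ with gap exceeding $\delta_b>0$. The flow/jump sets $\mathcal{F}_b,\mathcal{J}_b$ and the jump map $G_b$ in \eqref{eqn:hybrid_feedback3_design} are constructed exactly as in \eqref{eqn:hybrid_feedback1_design}, but with $V_b$ in place of $V$ and $\delta_b$ in place of $\delta$. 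Hence the abstract scheme of Theorem \ref{thm:thm1} applies once well-posedness and forward invariance of the state space are verified.

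First I would reproduce the Lyapunov decrease along flows and jumps. Along flows of \eqref{eqn:hybrid_closed_loop3} one gets $\dot V_b = \langle \nabla V_b(x_b,\theta), f_b(x_b,\theta) + g_b(x_b,\theta)\kappa_b(x_b,\theta)\rangle \leq 0$, which is condition C3 in Definition \ref{defn:synergistic_feedback} as established in the proof of Proposition \ref{prop:SLFS_backstepping}. At jumps one has $x^+=x,\ \eta^+=\eta,\ u^+=u$, and $\theta^+\in G_b(x_b,\theta)$, from which the identity $V_b(x_b,\theta^+)=\min_{\bar\theta\in\Theta} V_b(x_b,\bar\theta)=V_b(x_b,\theta)-\mu_{V_b,\Theta}(x_b,\theta)\leq V_b(x_b,\theta)-\delta_b$ follows exactly as in \eqref{eqn:V^+}. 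Thus $V_b$ is non-increasing during flows and strictly decreases by at least $\delta_b$ at each jump; in particular the number of jumps on any solution is finite and bounded by $V_b(x_b(0,0),\theta(0,0))/\delta_b$, ruling out Zeno behavior.

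Next I would invoke the invariance principle \cite[Theorem 8.2]{goebel2012hybrid}, as in the proof of Theorem \ref{thm:thm1}, to conclude that every maximal solution converges to the largest weakly invariant subset of $\mathcal{F}_b\cap\Psi_{V_b}$. The properties C4 of the quadruple then give $\mathcal{F}_b\cap\Psi_{V_b}=\mathcal{A}_b$: on $\mathcal{A}_b$ one has $\mu_{V_b,\Theta}\leq 0<\delta_b$, so $\mathcal{A}_b\subseteq\mathcal{F}_b\cap\Psi_{V_b}$, while \eqref{eqn:mu_V_delta} (applied to $V_b$) forces $\mathcal{F}_b\cap(\Psi_{V_b}\setminus\mathcal{A}_b)=\emptyset$. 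Stability of $\mathcal{A}_b$ follows from the compact-sublevel-set property C1 of $V_b$ and its strict decrease across jumps, exactly mirroring the argument used for Theorem \ref{thm:thm1}.

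The remaining ingredient, which I view as the only place needing attention beyond routine adaptation, is forward invariance of the extended state space $\mathcal{X}_b\times\mathbb{R}^r$ together with the hybrid basic conditions analogue of Lemma \ref{lem:HBC}. Since $\eta\in\mathbb{R}^s$, $u\in\mathbb{R}^m$ and $\theta\in\mathbb{R}^r$ are unconstrained and flow via smooth vector fields $\kappa_s$, $\kappa_b$, $\varpi$, the only tangent-cone requirement is on the $x$-component: the assumption $f(x)+g(x)u\in\mathsf{T}_{\mathcal{X}}(x)$ for all $x_b\in\mathcal{X}_b$ supplies precisely this, via Nagumo's theorem. The hybrid basic conditions for \eqref{eqn:hybrid_closed_loop3} are verified along the same lines as in Appendix \ref{sec:HBC}, with the closedness of $\mathcal{F}_b,\mathcal{J}_b$ and outer semicontinuity of $G_b$ inherited from the continuity of $V_b$ and the finiteness of $\Theta$. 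Finally, $G_b(\mathcal{J}_b)\subseteq\mathcal{X}_b\times\mathbb{R}^r=\mathcal{F}_b\cup\mathcal{J}_b$, so \cite[Proposition 6.10]{goebel2012hybrid} ensures completeness of every maximal solution, completing GAS of $\mathcal{A}_b$. The proof is therefore a direct transcription of the proof of Theorem \ref{thm:thm1} with $(V,\kappa,\mathcal{F},\mathcal{J},G_o,\delta)$ replaced by $(V_b,\kappa_b,\mathcal{F}_b,\mathcal{J}_b,G_b,\delta_b)$ and is omitted for brevity.
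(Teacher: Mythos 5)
Your proposal is correct and follows exactly the route the paper intends: the paper itself omits the proof of Theorem \ref{thm:thm3}, stating only that it follows the proof of Theorem \ref{thm:thm1} verbatim once Proposition \ref{prop:SLFS_backstepping} establishes that $(V_b,\kappa_b,\varpi,\Theta)$ is a synergistic feedback quadruple relative to $\mathcal{A}_b$ with gap exceeding $\delta_b$. Your explicit walk-through of the flow/jump decrease, the invariance-principle step yielding $\mathcal{F}_b\cap\Psi_{V_b}=\mathcal{A}_b$, the tangent-cone/forward-invariance check, and completeness via \cite[Proposition 6.10]{goebel2012hybrid} is a faithful and complete instantiation of that argument.
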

	
	The proof of  Theorem \ref{thm:thm3} is omitted as it follows closely the proof of  Theorem \ref{thm:thm1}, using the fact that $(V_b,\kappa_b,\varpi,\Theta)$ is a synergistic feedback quadruple relative to the set $\mathcal{A}_b$ for system \eqref{eqn:backstepping_affine_system} with gap exceeding $\delta_b$, as per Proposition \ref{prop:SLFS_backstepping}.

	\section{Extension to Safe Navigation with Global Obstacle Avoidance}\label{sec:navigation}
	In this section, we apply the synergistic hybrid feedback approach developed in the previous section to the navigation problem of a planar robot with single integrator dynamics, relying on a modified navigation function. We consider a point mass robot navigating in the workspace $\mathbb{R}^2$ and a circular obstacle centered at $p_{o} \in \mathbb{R}^2$ with radius  $r_{o}>0$, \ie, $\mathcal{O}:= \{z\in \mathbb{R}^2: \|z-p_{o}\| \leq r_{o}\}$.  
	The free space $\mathcal{X}_p$ for the robot is given as the closed set 
	$$\mathcal{X}_p:= \overline{\mathbb{R}^2\setminus (\mathcal{O} \oplus \varepsilon  \mathbb{B})}= \{p\in \mathbb{R}^2: \|p-p_o\| \geq \bar{r}_{o} := r_o + \varepsilon \}$$
	with a sufficiently small safety margin $\varepsilon>0$ defining the minimum safety distance to the obstacle.
	We consider the following single integrator dynamics:
	\begin{align}
		\dot{p} & = u \label{eqn:system_robot}  
	\end{align} 
	where $p \in \mathcal{X}_p$ denotes the position of the robot, $u\in \mathbb{R}^2$ is the velocity input.  
	The objective is to design hybrid feedback for system \eqref{eqn:system_robot} such that, for any initial condition $p(0)\in \mathcal{X}_p$, the destination point $p_d \in \mathcal{X}_p$ is GAS and collision-free navigation is guaranteed (\ie, $p(t)\in  \mathcal{X}_p$ for all $t\geq 0$). 
	\subsection{Synergistic Navigation Function}
	Define the distance from the robot position $p$ to the obstacle $\mathcal{O}$ as
	\begin{align*}
		d_o(p) & : =  \|p-p_{o}\|  -r_{o}.  
	\end{align*}  
	Then, one can verify that  $d_o(p) \geq \varepsilon$ for all $p \in \mathcal{X}_p$.
	Consider the following smooth navigation function $V_{nav}: \mathcal{X}_p \to \mathbb{R}_{\geq 0}$ as:
	\begin{equation}\label{eqn:def_navigation}
		V_{nav}(p) = \frac{1}{2}\|p - p_d\|^2 + \varrho \phi(d_o(p))  
	\end{equation}
	where $\varrho$ is a strictly positive scalar and the barrier function $\phi: \mathbb{R}_{\geq 0}\to \mathbb{R}_{\geq 0}$ is designed as in \cite{sanfelice2006robust}  
	\begin{equation}\label{eqn:Phi_def}
		\phi(z) = \begin{cases} 
			(z-r_s)^2\ln(\frac{r_s}{z}) & z\in (0,r_s] \\
			0                           & z > r_s
		\end{cases}
	\end{equation}
	where $r_s\in (\varepsilon, r_d)$ determines the distance when the term $\phi(d_o(p))$ varnishes, with $r_d:= d_o(p_d)$ denoting the distance between the obstacle and the destination location. 
	The gradient of the barrier function $\phi$ is given by
	\begin{align}
		\nabla_z\phi(z) & = 
		\begin{cases} 
			2(z-r_s) \ln(\frac{r_s}{z}  ) - \frac{ (z-r_s)^2 }{z} & z\in (0,r_s] \\
			0                                                     & z >  r_s
		\end{cases} .\label{eqn:DPhi_def}
	\end{align}
From \eqref{eqn:Phi_def} and \eqref{eqn:DPhi_def}, the function $\phi$ satisfies the following properties:
\begin{itemize}
	\item [1)] $\phi(z)$ is  continuously differentiable;
	\item [2)] $\phi(z)$ is strictly decreasing on $(0,r_s]$ with $\lim_{z\to 0}\phi(z) = + \infty$ and $\phi(z) = 0$ for all $z\geq r_s$;
	\item [3)] $\nabla_z \phi(z)  \leq 0$ for all $z\in (0,r_s]$ and $\nabla_z \phi(z) = 0$ for all $z \geq r_s$.
\end{itemize}
Choosing $r_s\in (\varepsilon,r_d)$ ensures that $d_o(p)>r_s$ and $\phi(d_o(p))=0$ when the robot arrives at the destination location, \ie, $p=p_d$. One can verify that the navigation function $V_{nav}$ in \eqref{eqn:def_navigation} is positive definite with respect to the destination $p_d$, \ie,  $V_{nav}(p)\geq 0$ for all $p\in \mathcal{X}_p$, $V_{nav}(p_d)=0$ and $V_{nav}(p) \to \infty$ as $p$ approaches infinity or the boundary of the obstacle. 
Moreover, the gradient and the set of critical points of $V_{nav}$ in \eqref{eqn:def_navigation} are given by
\begin{align}
	\nabla_{p} V_{nav}(p)  &= p - p_d + \varrho \nabla_{d_o(p)} \phi(d_o(p)) \frac{p-p_o}{\|p-p_o\|} \label{eqn:gradient_V_nav} \\
	C_{V_{nav}} & =\{p\in \mathcal{X}_p:\nabla_{p} V_{nav}(p) = 0\}. \label{eqn:critical_V_nav}
\end{align}
	Note that the vector $p - p_d$ in \eqref{eqn:gradient_V_nav} represents the attraction field that steers the robot towards the destination location, and the vector $\nabla_{d_o(p)} \phi(d_o(p)) \frac{p-p_o}{\|p-p_o\|}$ represents the repulsive field that repels the robot away from the obstacle along the direction $p_o-p$. The parameter $r_s$ is designed to trigger the repulsive field based on the distance between the robot and the obstacle, and the weights of these two fields can be adjusted by tuning the scalar $\varrho$.  
	It follows from \eqref{eqn:gradient_V_nav} and \eqref{eqn:critical_V_nav} that $p_d \in C_{V_{nav}}$ and there exists another critical point $p^* \in C_{V_{nav}} \setminus \{p_d\}$ (on the side of the obstacle opposite to the target) satisfying $\nabla_{p} V_{nav}(p^*)=0$ 
	(see, for instance, the example presented in Fig. \ref{fig:example1}). Due to the existence of this undesired critical point, it is impossible to achieve global navigation using the time-invariant gradient-based feedback of the smooth navigation function $V_{nav}$ (see, for instance, \cite{koditschek1990robot}).  

\begin{figure}[!ht]
	\centering
	\includegraphics[width=0.8\linewidth]{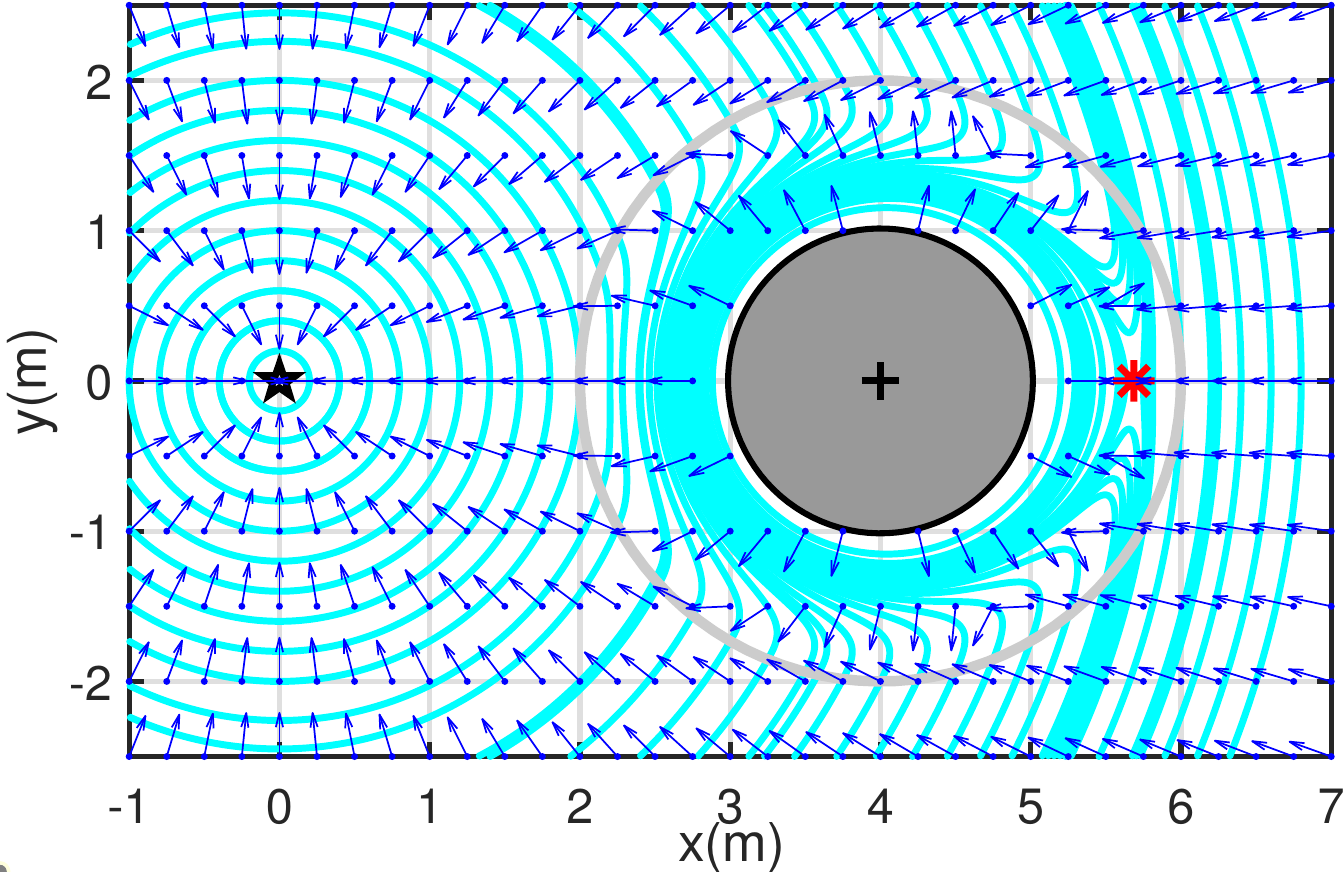}
	\caption{Example of the smooth navigation function $V_{nav}$ defined in \eqref{eqn:def_navigation} with $r_s = 1$ (gray circle) and $\varrho = 16$, and a circular obstacle centered at $p_o = (4,0)$ (black $+$) with radius $r_o=1$. The destination is chosen at $p_d=(0,0)$ (black $\star$), and the undesired (saddle) critical point is located at $p^*=(5.6865,0)$ (red $\ast$). The blue arrows and the light blue lines represent the direction of the gradients and the contour of the navigation function, respectively.} 
	\label{fig:example1}
\end{figure}

To apply the synergistic hybrid feedback, we introduce the transformation function  $\mathcal{T}:\mathcal{X}_p \times \mathbb{R} \to \mathcal{X}_p$ defined as
\begin{equation}
	\mathcal{T}(p,\theta):=   p_o + \mathcal{R}(\theta) (p-p_o)    \label{eqn:def_mathcal_T}
\end{equation}
where $\theta\in \mathbb{R}$ is a switching variable and 
the rotation matrix $\mathcal{R}(\theta) := \exp(\theta \Delta) \in SO(2)$\footnote{$SO(2):= \{R\in \mathbb{R}^{2\times 2}: R\T R = R R\T  = I_2, \det{R}=+ 1\}$ denotes the 2-dimensional Special Orthogonal group.} is given by 
\begin{align}
	\mathcal{R}(\theta) &  = \begin{bmatrix}
		\cos(\theta) & -\sin(\theta) \\
		\sin(\theta) & \cos(\theta)
	\end{bmatrix} , \quad   \Delta = \begin{bmatrix}
		0 & -1 \\
		1 & 0
	\end{bmatrix}.   \label{eqn:def_R}
\end{align}
\begin{figure}[!ht]
	\centering
	\includegraphics[width=0.85\linewidth]{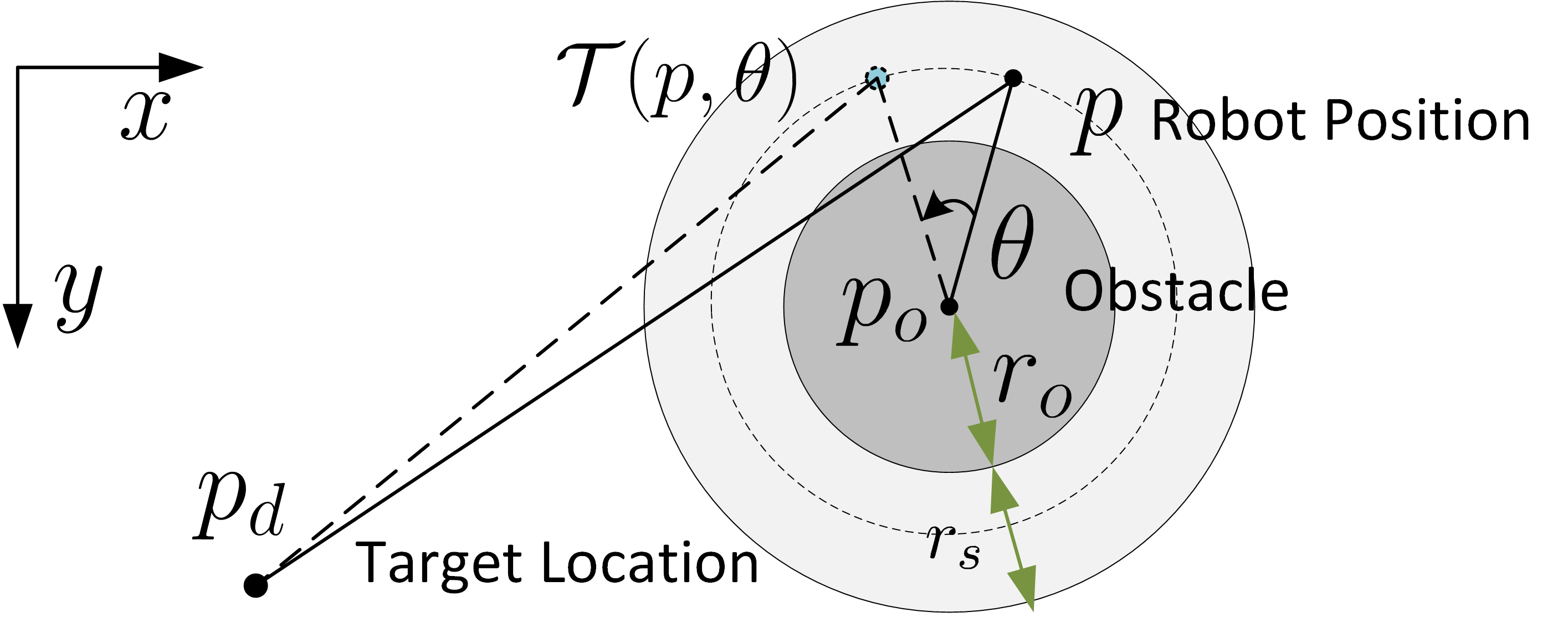}
	\caption{Geometric representation of the transformation function $\mathcal{T}$.}
	\label{fig:diagram21}
\end{figure} 

Consider the following modified navigation function $\mathcal{V}_{nav}: \mathcal{X}_p \times \mathbb{R} \to \mathbb{R}_{\geq 0}$:
\begin{align}
	\mathcal{V}_{nav}(p,\theta) & = V_{nav}(\mathcal{T}(p,\theta)) + \frac{\gamma_\theta}{2} \theta^2
	\label{eqn:def_synergistic_navigation}
\end{align}
where $\gamma_\theta>0$ and $V_{nav}$ is the smooth navigation function defined in \eqref{eqn:def_navigation}.  
From \eqref{eqn:def_mathcal_T}, the transformation function $\mathcal{T}$ satisfies: 1) $\mathcal{T}(p,0) = p$ for all $p\in \mathcal{X}_p$ since $\mathcal{R}(0) = I_2$; 2) its Jacobian relative to $p$ (\ie, $\mathcal{D}_p\mathcal{T}(p,\theta) = \mathcal{R}(\theta)$) is non-singular for all $(p,\theta) \in \mathcal{X}_p \times \mathbb{R}$ since $\mathcal{R}(\theta) \in SO(2)$.  
The main idea behind introducing such a transformation function $\mathcal{T}$ is to (virtually) transform the robot position from $p$ to a new position $p'=\mathcal{T}(p,\theta)$ by rotating the vector $p-p_o$ by some angle $\theta$ such that the gradient of the modified navigation function $\mathcal{V}_{nav}$, denoted by $\nabla_{p} \mathcal{V}_{nav}(p,\theta)$, becomes non-zero after switching $\theta$ at the undesired critical point.  
From \eqref{eqn:def_mathcal_T}, the distance from the robot to the obstacle remains the same, \ie, $d_o(\mathcal{T}(p,\theta)) = d_o(p)$. 
Hence, from \eqref{eqn:def_navigation} and \eqref{eqn:def_synergistic_navigation}, the modified navigation function $\mathcal{V}_{nav}$ can be rewritten as  
\begin{align}
	\mathcal{V}_{nav}(p,\theta)  
	& = \frac{1}{2}\|\mathcal{T}(p,\theta) - p_d\|^2 + \varrho \phi(d_o(p)) + \frac{\gamma_\theta}{2} \theta^2	\label{eqn:def_synergistic_navigation2}
\end{align}
with some $\gamma_\theta>0$ to be designed. In view of the facts $\mathcal{T}(p,0)=p$ and $d_o(\mathcal{T}(p,\theta)) = d_o(p)$, one can verify that $\phi(d_o(\mathcal{T}(p,\theta))) = \phi(d_o(p))=0$ for all $d_o(p)>r_s$, and $\mathcal{V}_{nav}$ is positive definite with respect to the equilibrium point $\mathcal{A}_p:=(p_d,0)$. Moreover, one can show that $\mathcal{V}_{nav}$ has a unique minimum at $\mathcal{A}_p$, and $\mathcal{V}_{nav}(p,\theta)\to \infty$ as $p$ approaches the obstacle or infinity. The following lemma provides the explicit gradients $\nabla_{p} \mathcal{V}_{nav}(p,\theta)$ and $\nabla_{\theta} \mathcal{V}_{nav}(p,\theta)$, as well as the set of critical points $C_{\mathcal{V}_{nav}}$ of the navigation function $\mathcal{V}_{nav}$ in \eqref{eqn:def_synergistic_navigation}.

\begin{lem}\label{lem:grad_V_nav2}
	From the navigation function $\mathcal{V}_{nav}$ in \eqref{eqn:def_synergistic_navigation2}, one has
	\begin{align}
		\nabla_{p} \mathcal{V}_{nav}(p,\theta)      & =  \nabla_{p} V_{nav}(p)  + (I_2-\mathcal{R}(\theta))\T (p_d-p_o) \label{eqn:def_grad_V'_p}\\
		\nabla_{\theta} \mathcal{V}_{nav}(p,\theta) & = 	\gamma_\theta \theta - (p-p_o)\T \Delta \mathcal{R}(\theta)\T (p_o - p_d) \label{eqn:def_grad_V'_zeta} \\
		C_{\mathcal{V}_{nav}} &= C_{V_{nav}}\times \{0\}  \label{eqn:def_V'_critical_set}
	\end{align}
	with $\nabla_{p} V_{nav}(p)$ and $C_{V_{nav}}$ defined in \eqref{eqn:gradient_V_nav} and \eqref{eqn:critical_V_nav}, respectively. 
\end{lem}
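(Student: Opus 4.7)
The plan is to derive the two gradient expressions by a direct chain-rule calculation on the decomposition \eqref{eqn:def_synergistic_navigation2}, and then to analyze the simultaneous vanishing of these gradients to pin down the critical set.

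First I would compute the Jacobians of the transformation $\mathcal{T}$ in \eqref{eqn:def_mathcal_T}: $\mathcal{D}_p\mathcal{T}(p,\theta)=\mathcal{R}(\theta)$ and, using $\tfrac{d}{d\theta}\mathcal{R}(\theta)=\Delta\mathcal{R}(\theta)=\mathcal{R}(\theta)\Delta$, $\mathcal{D}_\theta\mathcal{T}(p,\theta)=\Delta\mathcal{R}(\theta)(p-p_o)$. Since rotations preserve norms, $d_o(\mathcal{T}(p,\theta))=d_o(p)$, so the barrier term contributes only to $\nabla_p\mathcal{V}_{nav}$. Applying the chain rule to the quadratic $\tfrac{1}{2}\|\mathcal{T}(p,\theta)-p_d\|^2$ and substituting $\mathcal{T}(p,\theta)-p_d=\mathcal{R}(\theta)(p-p_o)+(p_o-p_d)$, the identity $\mathcal{R}(\theta)^{\top}\mathcal{R}(\theta)=I_2$ collapses one piece to $(p-p_o)$, which combines with the barrier contribution to recover $\nabla_pV_{nav}(p)$ from \eqref{eqn:gradient_V_nav}; what remains, $\mathcal{R}(\theta)^{\top}(p_o-p_d)-(p_o-p_d)=(I_2-\mathcal{R}(\theta))^{\top}(p_d-p_o)$, yields \eqref{eqn:def_grad_V'_p}. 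For the $\theta$-gradient, the expansion produces a term proportional to $(p-p_o)^{\top}\mathcal{R}(\theta)^{\top}\Delta^{\top}\mathcal{R}(\theta)(p-p_o)$ that vanishes (because $\Delta$ commutes with rotations, so the middle factor reduces to $-\Delta$, and $v^{\top}\Delta v=0$ by skew-symmetry), while the cross term simplifies via $\mathcal{R}(\theta)^{\top}\Delta=\Delta\mathcal{R}(\theta)^{\top}$ and $\Delta^{\top}=-\Delta$ to deliver \eqref{eqn:def_grad_V'_zeta} after adding the derivative $\gamma_\theta\theta$ of the quadratic penalty.

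For the critical set identity $C_{\mathcal{V}_{nav}}=C_{V_{nav}}\times\{0\}$, the reverse inclusion is immediate: evaluating \eqref{eqn:def_grad_V'_p} and \eqref{eqn:def_grad_V'_zeta} at $\theta=0$ gives $\nabla_p\mathcal{V}_{nav}=\nabla_pV_{nav}(p)$ (since $\mathcal{R}(0)=I_2$) and $\nabla_\theta\mathcal{V}_{nav}(p,0)=-(p-p_o)^{\top}\Delta(p_o-p_d)$; for any $p\in C_{V_{nav}}$, \eqref{eqn:gradient_V_nav} forces $p$, $p_o$, $p_d$ to be collinear, so $(p-p_o)$ is parallel to $(p_o-p_d)$ and the $\theta$-gradient also vanishes. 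The forward inclusion is the main subtle step, and where the 2D geometry is essential: taking the inner product of $\nabla_p\mathcal{V}_{nav}=0$ with $\Delta(p-p_o)$ annihilates both the $(p-p_o)$ piece and the barrier direction $\tfrac{p-p_o}{\|p-p_o\|}$ by skew-symmetry of $\Delta$, leaving precisely $(p-p_o)^{\top}\Delta\mathcal{R}(\theta)^{\top}(p_o-p_d)=0$; feeding this back into $\nabla_\theta\mathcal{V}_{nav}=0$ collapses \eqref{eqn:def_grad_V'_zeta} to $\gamma_\theta\theta=0$, so $\theta=0$, and then $\nabla_p\mathcal{V}_{nav}(p,0)=\nabla_pV_{nav}(p)=0$ delivers $p\in C_{V_{nav}}$.
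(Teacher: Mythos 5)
Your proof is correct. The gradient computations are in substance identical to the paper's (the paper derives them by matching coefficients in $\dot{\mathcal{V}}_{nav}$ along $\dot p=v$, $\dot\theta=w$, which is just the chain rule in different clothing; your $\mathcal{D}_\theta\mathcal{T}=\Delta\mathcal{R}(\theta)(p-p_o)$ agrees with the paper's $\mathcal{R}(\theta)\Delta(p-p_o)$ since $\Delta$ commutes with $\exp(\theta\Delta)$). Where you genuinely diverge is the characterization of $C_{\mathcal{V}_{nav}}$. The paper never leaves the factored form $\nabla_p\mathcal{V}_{nav}=\mathcal{R}(\theta)\T\nabla_{\bar p}V_{nav}(\bar p)$ with $\bar p=\mathcal{T}(p,\theta)$: since $\mathcal{R}(\theta)$ is invertible, $\nabla_p\mathcal{V}_{nav}=0$ forces $\nabla_{\bar p}V_{nav}(\bar p)=0$, which annihilates the coupling term $\mathcal{D}_\theta\mathcal{T}\T\nabla_{\bar p}V_{nav}(\bar p)$ in the $\theta$-gradient wholesale and leaves $\gamma_\theta\theta=0$ with no further computation. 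You instead work from the final expanded expressions and recover the same cancellation by pairing $\nabla_p\mathcal{V}_{nav}=0$ against $\Delta(p-p_o)$, exploiting skew-symmetry to kill everything parallel to $p-p_o$ and isolate exactly the term appearing in $\nabla_\theta\mathcal{V}_{nav}$ (this needs $p\neq p_o$, which holds on $\mathcal{X}_p$). Your route is more computational but self-contained given only the statement's formulas; the paper's is cleaner but depends on remembering the intermediate factorization. You also spell out the reverse inclusion $C_{V_{nav}}\times\{0\}\subseteq C_{\mathcal{V}_{nav}}$ via the collinearity of $p$, $p_o$, $p_d$ at critical points of $V_{nav}$ — the paper leaves this direction implicit (in its factored form it is immediate, since $\nabla_{\bar p}V_{nav}(\bar p)=\nabla_pV_{nav}(p)=0$ at $\theta=0$ kills both gradients at once), so your explicit treatment is a small completeness gain at the cost of an extra geometric observation.
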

\begin{proof}
	See Appendix \ref{sec:grad_V_nav2}.
\end{proof}
Note that the main difference between the gradients $\nabla_{p}V_{nav}(p)$ in \eqref{eqn:gradient_V_nav} and $\nabla_{p}\mathcal{V}_{nav}(p,\theta)$ in \eqref{eqn:def_grad_V'_p} is the additional term  $(I_2-\mathcal{R}(\theta))\T (p_o-p_d)$, which generates an additional vector field non-collinear to the vector $p_o-p_d$ if $\mathcal{R}(\theta)\neq I_2$. In addition, when $\nabla_{p} \mathcal{V}_{nav}(p,\theta)$ vanishes at an undesired critical point $(p^*, 0) \in C_{\mathcal{V}_{nav}} \setminus \mathcal{A}_p$, the switching variable $\theta$ can be set to some value from the pre-design finite nonempty set $\Theta$ to steer the state, after a jump, away from this undesired critical point, and generate an additional non-zero vector field in the gradient. This additional term in \eqref{eqn:def_grad_V'_p}, together with an appropriate feedback design, will steer the state away from all the undesired critical points and converge to the desired one $\mathcal{A}_p=(p_d, 0)$.

\begin{defn}
	Let $\Theta\subset \mathbb{R}$ be a finite nonempty set. A real-valued navigation function $\mathcal{V}_{nav}: \mathcal{X}_p \times \mathbb{R} \to \mathbb{R}_{\geq 0}$ is called a \textit{synergistic navigation function} relative to $\mathcal{A}_{p}$ with gap exceeding $\delta_{\mathcal{V}}>0$ if 
	\begin{equation}
		\mathcal{V}_{nav}(p,\theta) - \min_{\bar{\theta}\in\Theta}\mathcal{V}_{nav}(p,\bar{\theta}) > \delta_{\mathcal{V}}
	\end{equation}
	for all $(p,\theta) \in C_{\mathcal{V}_{nav}}\setminus \mathcal{A}_{p}$.
\end{defn}
\begin{lem}\label{lem:V_nav_synergistic}
	Consider the finite nonempty  set $\Theta=\{|\theta|_i\in (0,\pi),i=1,\dots,L\}$. 
	The real-valued function $\mathcal{V}_{nav}$  defined in \eqref{eqn:def_synergistic_navigation2} is a {synergistic navigation function} on $\mathcal{X}_{p} \times \mathbb{R}$ relative to $\mathcal{A}_{p}$ with gap exceeding $\delta_{\mathcal{V}} \in (0, \delta_{\mathcal{V}}^*]$, where 
	$\delta_{\mathcal{V}}^* := ( \frac{2 r_o \|p_d-p_o\|}{\pi^2}    - \frac{\gamma_\theta}{2} )  \bar{\theta}_M^2$ with $\bar{\theta}_M := \max_{\bar{\theta}\in \Theta} |\bar{\theta}|$ and $0<\gamma_{\theta} < \frac{4r_o\|p_d-p_o\|}{\pi^2}$.
\end{lem}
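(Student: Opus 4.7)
The plan is to exploit the fact, recorded in Lemma \ref{lem:grad_V_nav2}, that every critical point of $\mathcal{V}_{nav}$ has the form $(p^*,0)$ with $p^*\in C_{V_{nav}}$, so that at an undesired critical point $(p^*,0)\in C_{\mathcal{V}_{nav}}\setminus\mathcal{A}_p$ the switching variable component is exactly $\theta=0$. Because $\mathcal{R}(0)=I_2$, we have $\mathcal{T}(p^*,0)=p^*$, and because $\|\mathcal{T}(p,\bar\theta)-p_o\|=\|p-p_o\|$ for every rotation $\mathcal{R}(\bar\theta)\in SO(2)$, the barrier term $\varrho\phi(d_o(\,\cdot\,))$ cancels in the synergy difference. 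Writing $w:=p_o-p_d$ and expanding, this reduces the problem to showing, for every $\bar\theta\in\Theta$, the identity
\begin{equation*}
\mathcal{V}_{nav}(p^*,0)-\mathcal{V}_{nav}(p^*,\bar\theta)=(p^*-p_o)\T(I_2-\mathcal{R}(\bar\theta)\T)w-\frac{\gamma_\theta}{2}\bar\theta^{\,2}.
\end{equation*}

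The key geometric step is to characterize $p^*-p_o$ using $\nabla_p V_{nav}(p^*)=0$ together with \eqref{eqn:gradient_V_nav}. Since $\nabla_z\phi\le 0$ on $(0,r_s]$ and vanishes on $[r_s,\infty)$, and since $p^*\neq p_d$, a case analysis along the line through $p_o$ in direction $(p^*-p_o)/\|p^*-p_o\|$ forces $p^*-p_o$ to be a positive scalar multiple of $w$; more precisely, $p^*=p_o+\alpha w/\|w\|$ with $\alpha=\|p^*-p_o\|$, and the balance of the gradient equation forces $\alpha\in(r_o,r_o+r_s)$ (the opposite-side case and the $\nabla_z\phi=0$ case both collapse to $p^*=p_d$, which is excluded). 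With this alignment, the cross term collapses cleanly: $(p^*-p_o)\T(I_2-\mathcal{R}(\bar\theta)\T)w=(\alpha/\|w\|)\,w\T(I_2-\mathcal{R}(\bar\theta)\T)w=\alpha\|w\|(1-\cos\bar\theta)$, using $w\T\mathcal{R}(\bar\theta)\T w=\|w\|^2\cos\bar\theta$.

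Substituting back yields
\begin{equation*}
\mathcal{V}_{nav}(p^*,0)-\mathcal{V}_{nav}(p^*,\bar\theta)=\alpha\|p_d-p_o\|(1-\cos\bar\theta)-\tfrac{\gamma_\theta}{2}\bar\theta^{\,2}.
\end{equation*}
I would then apply the elementary bound $1-\cos x\ge 2x^2/\pi^2$ on $[-\pi,\pi]$, strict for $|x|\in(0,\pi)$, together with $\alpha>r_o$, to obtain a strict lower bound $((2 r_o\|p_d-p_o\|/\pi^2)-\gamma_\theta/2)\bar\theta^{\,2}$. Taking the maximum over $\bar\theta\in\Theta$ (which equals $\mathcal{V}_{nav}(p^*,0)-\min_{\bar\theta\in\Theta}\mathcal{V}_{nav}(p^*,\bar\theta)$), the maximizer is realized at $\bar\theta=\pm\bar\theta_M$, producing the claimed threshold $\delta_{\mathcal{V}}^*$. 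The hypothesis $\gamma_\theta<4r_o\|p_d-p_o\|/\pi^2$ guarantees $\delta_{\mathcal{V}}^*>0$, so any $\delta_{\mathcal{V}}\in(0,\delta_{\mathcal{V}}^*]$ works.

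The hard part will be the geometric characterization of $p^*-p_o$: one must rule out the opposite-direction case and ensure $\alpha>r_o$ using only the sign structure of $\nabla_z\phi$ and the hypothesis $r_s<r_d=\|p_d-p_o\|-r_o$. Once that is in hand, the rotation-matrix identities and the $1-\cos$ estimate reduce the synergy gap computation to a one-line bound, and the rest is bookkeeping over $\Theta$.
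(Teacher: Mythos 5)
Your proposal is correct and follows essentially the same route as the paper's proof: both exploit $\theta=0$ at critical points and rotation-invariance of $\|p-p_o\|$ to cancel the barrier term, reduce the gap to $\|p^*-p_o\|\,\|p_d-p_o\|(1-\cos\bar\theta)-\tfrac{\gamma_\theta}{2}\bar\theta^2$ via the collinearity of $p^*-p_o$ with $p_o-p_d$, and finish with $1-\cos\bar\theta=2\sin^2(\bar\theta/2)\geq 2\bar\theta^2/\pi^2$ and $\|p^*-p_o\|>r_o$. The only difference is cosmetic (your $q\T(I_2-\mathcal{R}(\bar\theta)\T)w$ identity versus the paper's law-of-cosines expansion), and you are in fact more explicit than the paper about deriving the alignment of $p^*-p_o$ with $p_o-p_d$ from $\nabla_p V_{nav}(p^*)=0$, which the paper merely asserts.
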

\begin{proof}
	See Appendix \ref{sec:V_nav_synergistic}.
\end{proof}
Lemma \ref{lem:V_nav_synergistic} provides a choice for the design of parameters $(\Theta, \gamma_{\theta},\delta_{\mathcal{V}})$ for the synergistic navigation function $\mathcal{V}_{nav}$. Note that a decrease in the value of $\gamma_\theta$  results in an increase of the gap $\delta_{\mathcal{V}}$ (strengthening the robustness of the hybrid system). However, this may slow down the convergence rates of the overall closed-loop system as shown in \cite{wang2022hybrid}. Therefore, the scalar $\gamma_\theta$ needs to be carefully chosen via a tradeoff between the robustness and the convergence rates of the overall system.

\subsection{Hybrid Navigation Feedback Design}
Let $\mathcal{X}:= \mathcal{X}_p $ denote the state space and $x:= p \in \mathcal{X}$ denote the state. Similar to \eqref{eqn:new_affine_system}, system \eqref{eqn:system_robot} can be extended as follows:
\begin{align}\label{eqn:single_integrator_modified}
	\begin{pmatrix}
		\dot{x} \\ 
		\dot{\theta}
	\end{pmatrix}
	& = \begin{pmatrix} 
		0 \\
		\varpi(x,\theta)
	\end{pmatrix} +
	\begin{pmatrix} 
		I_2 \\
		0
	\end{pmatrix}u
\end{align}
with $u=\kappa(x,\theta)$ and $\varpi(x,\theta)$ to be designed.  
Given a synergistic navigation function $\mathcal{V}_{nav}$ on $\mathcal{X}_p \times \mathbb{R}$   relative to $\mathcal{A}=\mathcal{A}_p=(p_d,0)$ as per Lemma \ref{lem:V_nav_synergistic}, we consider the following real-valued functions:
\begin{align}
	V(x,\theta)      & =   \mathcal{V}_{nav}(p,\theta)   \label{eqn:second_order_V}\\
	\kappa(x,\theta) & =  - k_p \nabla_{p} \mathcal{V}_{nav}(p,\theta)   \label{eqn:second_order_kappa} \\
	\varpi(x,\theta) & = - k_\theta \nabla_{\theta} \mathcal{V}_{nav}(p,\theta) \label{eqn:second_order_varpi}    
\end{align}
with constant scalars $k_p, k_\theta>0$. It is clear that $V$ is positive definite with respect to $\mathcal{A}$.

\begin{prop}  \label{prop:synergistic_single}
	Consider the real-valued functions $V,\kappa,\varpi$ defined in \eqref{eqn:second_order_V}-\eqref{eqn:second_order_varpi}. Choose $(\Theta, \gamma_\theta, \delta_{\mathcal{V}})$ according to Lemma \ref{lem:V_nav_synergistic}, and the gains $k_\theta, k_p >0$.  Then, $(V,\kappa,\varpi, \Theta)$ is a synergistic feedback quadruple relative to $\mathcal{A}$ for system  \eqref{eqn:single_integrator_modified} with gap exceeding $\delta \in (0, \delta_{\mathcal{V}}]$.
\end{prop}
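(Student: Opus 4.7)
The plan is to verify, one by one, the four conditions C1--C4 of Definition \ref{defn:synergistic_feedback} for the quadruple $(V,\kappa,\varpi,\Theta)$ relative to $\mathcal{A}=(p_d,0)$ on the extended system \eqref{eqn:single_integrator_modified}. Conditions C1 and C2 follow immediately from properties of $\mathcal{V}_{nav}$: the smooth navigation function $V_{nav}$ in \eqref{eqn:def_navigation} blows up as $p$ approaches the obstacle boundary or infinity (by property~2 of $\phi$ and the quadratic term $\tfrac{1}{2}\|p-p_d\|^2$), while the $\tfrac{\gamma_\theta}{2}\theta^2$ term makes $V$ radially unbounded in $\theta$; combined with continuity on the closed set $\mathcal{X}_p\times\mathbb{R}$, this ensures every sublevel set $\mho_V(\epsilon)$ is closed and bounded, hence compact. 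Positive definiteness with respect to $\mathcal{A}=(p_d,0)$ has already been argued right after \eqref{eqn:def_synergistic_navigation2}.

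For C3, I will compute $\dot{V}$ along the flow. Since $f_c(x,\theta)=(0,\varpi(x,\theta))$ and $g_c(x,\theta)=(I_2,0)$ for system \eqref{eqn:single_integrator_modified}, substituting the designs \eqref{eqn:second_order_kappa}--\eqref{eqn:second_order_varpi} yields
\begin{align*}
\langle \nabla V,f_c+g_c\kappa\rangle
&= \nabla_p\mathcal{V}_{nav}(p,\theta)\T\kappa(x,\theta) + \nabla_\theta\mathcal{V}_{nav}(p,\theta)\,\varpi(x,\theta)\\
&= -k_p\|\nabla_p\mathcal{V}_{nav}(p,\theta)\|^2 - k_\theta\|\nabla_\theta\mathcal{V}_{nav}(p,\theta)\|^2 \leq 0,
\end{align*}
which is exactly condition C3 (in fact with equality only at critical points of $\mathcal{V}_{nav}$).

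The key step is C4, and here I expect to lean directly on Lemma \ref{lem:V_nav_synergistic}. The set $\mathcal{E}$ in \eqref{eqn:def_E} consists of points at which the above inner product vanishes; by the sum-of-squares form of $\dot{V}$, this occurs precisely when $\nabla_p\mathcal{V}_{nav}(p,\theta)=0$ and $\nabla_\theta\mathcal{V}_{nav}(p,\theta)=0$, i.e.\ on the critical set $C_{\mathcal{V}_{nav}}=C_{V_{nav}}\times\{0\}$ given by \eqref{eqn:def_V'_critical_set}. Hence $\Psi_V\subseteq\mathcal{E}=C_{\mathcal{V}_{nav}}$, so
\begin{align*}
\Psi_V\setminus\mathcal{A}\;\subseteq\;C_{\mathcal{V}_{nav}}\setminus\mathcal{A}_p.
\end{align*}
Lemma \ref{lem:V_nav_synergistic} guarantees $\mathcal{V}_{nav}(p,\theta)-\min_{\bar\theta\in\Theta}\mathcal{V}_{nav}(p,\bar\theta)>\delta_{\mathcal{V}}$ on $C_{\mathcal{V}_{nav}}\setminus\mathcal{A}_p$, which, since $V=\mathcal{V}_{nav}$, gives $\mu_{V,\Theta}(x,\theta)>\delta_{\mathcal{V}}$ on $\Psi_V\setminus\mathcal{A}$. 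Taking any $\delta\in(0,\delta_{\mathcal{V}}]$ then yields $\delta_{V,\Theta}=\inf_{\Psi_V\setminus\mathcal{A}}\mu_{V,\Theta}\geq\delta_{\mathcal{V}}>\delta$, completing~C4.

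The only subtlety I anticipate is being careful that the inclusion $\Psi_V\setminus\mathcal{A}\subseteq C_{\mathcal{V}_{nav}}\setminus\mathcal{A}_p$ is justified as an inclusion of points in $\mathcal{X}_p\times\mathbb{R}$, so that Lemma \ref{lem:V_nav_synergistic} applies verbatim; this is immediate from \eqref{eqn:def_V'_critical_set} and the identification $x=p$, $\mathcal{A}=\mathcal{A}_p$. With C1--C4 in hand, $(V,\kappa,\varpi,\Theta)$ qualifies as a synergistic feedback quadruple with gap exceeding any $\delta\in(0,\delta_{\mathcal{V}}]$, which is the claim.
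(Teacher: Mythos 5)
Your proof is correct and follows essentially the same route as the paper: verify C1--C2 from the navigation-function properties of $\mathcal{V}_{nav}$, obtain C3 from the sum-of-squares form of $\dot V$, identify $\Psi_V\subseteq\mathcal{E}=C_{\mathcal{V}_{nav}}$, and invoke Lemma \ref{lem:V_nav_synergistic} for C4. The only nit is the final chain $\delta_{V,\Theta}\geq\delta_{\mathcal{V}}>\delta$, which does not cover the endpoint $\delta=\delta_{\mathcal{V}}$; since Lemma \ref{lem:V_nav_synergistic} gives the strict bound $\mu_{V,\Theta}>\delta_{\mathcal{V}}^*\geq\delta_{\mathcal{V}}$ pointwise on the (finite) critical set, you should conclude $\delta_{V,\Theta}>\delta_{\mathcal{V}}\geq\delta$ as the paper does.
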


\begin{proof}
	See Appendix \ref{sec:synergistic_single}.
\end{proof}

Given the synergistic feedback quadruple $(V,\kappa,\varpi, \Theta)$ as per Proposition \ref{prop:synergistic_single}, we consider the following hybrid feedback 
\begin{align}\label{eqn:second_order_hybrid_feedback}
	\underbrace{
		\begin{array}{ll}
			u            & = \kappa(x,\theta) \\
			\dot{\theta} & = \varpi(x,\theta)
		\end{array}
	}_{(x,\theta)\in \mathcal{F}}
	\underbrace{
		\begin{array}{ll}
			&\\
			\theta^+ & \in G_o(x,\theta)
		\end{array}
	}_{(x,\theta)\in \mathcal{J}}
\end{align}
where $G_o,\mathcal{F}, \mathcal{J}$ are designed as \eqref{eqn:hybrid_feedback1_design} with $V$ defined in \eqref{eqn:second_order_V}. Applying the results in Proposition \ref{prop:synergistic_single} and Theorem \ref{thm:thm1}, one concludes that the equilibrium point $\mathcal{A}$ is GAS for the hybrid closed-loop system \eqref{eqn:system_robot}. Moreover, safe navigation is guaranteed since the navigation function $\mathcal{V}_{nav}$ is bounded by the initial conditions for all $(t,j) \in \dom (x,\theta)$.

Substituting $\nabla_{p} \mathcal{V}_{nav}(p,\theta)$ in \eqref{eqn:def_grad_V'_p} into \eqref{eqn:second_order_kappa}, one can rewrite the hybrid feedback term $\kappa(x,\theta)$ in the form of \eqref{eqn:kappa_decomp} as:
\begin{align} \label{eqn:kappa_decomp2}
	\kappa(x,\theta) = \underbrace{-k_p \nabla_{p} V_{nav}(p)}_{\kappa_o(x)}  
	+  k_p \underbrace{(I_2-\mathcal{R}(\theta))\T (p_o-p_d)}_{\sigma(\theta)}   .
\end{align} 
The first term $\kappa_o(x)$ in \eqref{eqn:kappa_decomp2} is equivalent to the smooth time-invariant feedback, and the additional term $\sigma(\theta)$ in \eqref{eqn:kappa_decomp2} plays an important role in switching the feedback input $\kappa(x,\theta)$ when $x$ is close to one of the undesired equilibrium points of the closed-loop system. It is also noticed that the function $\sigma(\theta)$ is independent of the state $x$, which is a special case of the function $\sigma(x,\theta)$ in \eqref{eqn:kappa_decomp}.

\begin{lem}\label{lem:sigma} 
	Given $\Theta, \bar{\theta}_M$ according to Lemma \ref{lem:V_nav_synergistic},  the function $\sigma$ defined in \eqref{eqn:kappa_decomp2} satisfies $\max_{\bar{\theta}\in \Theta } \|\sigma({\theta}) - \sigma(\bar{\theta})\|^2  \leq  2c_k := 2(1-\cos(\bar{\theta}_M))  \|p_d-p_o\|^2$ for all $(x,\theta) \in \Psi_V \setminus \mathcal{A}$ with $\Psi_V = C_{\mathcal{V}_{nav}}$. 
\end{lem}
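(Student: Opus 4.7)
The plan is to exploit the fact, already established via Proposition \ref{prop:synergistic_single} and Lemma \ref{lem:grad_V_nav2}, that every element of $\Psi_V \setminus \mathcal{A}$ forces $\theta=0$. Concretely, since $\Psi_V$ is contained in the critical set $C_{\mathcal{V}_{nav}} = C_{V_{nav}} \times \{0\}$, any $(x,\theta) \in \Psi_V \setminus \mathcal{A}$ satisfies $\theta=0$, and then $\sigma(\theta) = (I_2 - \mathcal{R}(0))^{\top}(p_o - p_d) = 0$ because $\mathcal{R}(0) = I_2$. Hence the quantity to bound reduces to $\|\sigma(\bar\theta)\|^2 = \|(I_2 - \mathcal{R}(\bar\theta))^{\top}(p_o-p_d)\|^2$ for $\bar\theta \in \Theta$.

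Next I would carry out the elementary matrix computation for the planar rotation. Using $\mathcal{R}(\bar\theta)^{\top}\mathcal{R}(\bar\theta) = I_2$ together with the identity $\mathcal{R}(\bar\theta) + \mathcal{R}(\bar\theta)^{\top} = 2\cos(\bar\theta)I_2$ (immediate from the explicit expression \eqref{eqn:def_R}), one obtains
\begin{equation*}
(I_2 - \mathcal{R}(\bar\theta))^{\top}(I_2 - \mathcal{R}(\bar\theta)) = 2I_2 - \mathcal{R}(\bar\theta) - \mathcal{R}(\bar\theta)^{\top} = 2(1-\cos(\bar\theta))I_2.
\end{equation*}
Consequently, $\|\sigma(\bar\theta)\|^2 = 2(1-\cos(\bar\theta))\|p_d-p_o\|^2$.

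Finally, I would take the maximum over $\bar\theta \in \Theta$. Because $\Theta \subset (0,\pi)$ and $t \mapsto 1 - \cos(t)$ is strictly increasing on $[0,\pi]$ in $|t|$, the maximum is attained at $\bar\theta_M = \max_{\bar\theta \in \Theta}|\bar\theta|$, giving
\begin{equation*}
\max_{\bar\theta\in\Theta}\|\sigma(\theta) - \sigma(\bar\theta)\|^2 \;=\; 2(1-\cos(\bar\theta_M))\|p_d-p_o\|^2 \;=\; 2c_k,
\end{equation*}
which is the claimed bound. There is no real obstacle here; the only delicate point is recognizing that the restriction $(x,\theta)\in \Psi_V\setminus\mathcal{A}$ pins $\theta$ to $0$ via \eqref{eqn:def_V'_critical_set}, so that the difference $\sigma(\theta)-\sigma(\bar\theta)$ collapses to $-\sigma(\bar\theta)$ and the bound becomes a one-variable trigonometric estimate.
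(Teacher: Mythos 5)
Your proof is correct and follows essentially the same route as the paper: restrict to $\Psi_V=C_{V_{nav}}\times\{0\}$ so that $\sigma(\theta)=\sigma(0)=0$, use the identity $2I_2-\mathcal{R}(\bar\theta)-\mathcal{R}(\bar\theta)\T=2(1-\cos(\bar\theta))I_2$ to get $\|\sigma(\bar\theta)\|^2=2(1-\cos(\bar\theta))\|p_d-p_o\|^2$, and maximize over $\Theta$ using the monotonicity of $1-\cos$ on $[0,\pi]$. No discrepancies with the paper's argument.
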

\begin{proof}
	See Appendix \ref{sec:sigma}.
\end{proof}
This lemma indicates that Assumption \ref{assum:sigma_bound} holds for the $\sigma(\theta)$ in \eqref{eqn:kappa_decomp2}. Thus, one can apply the smoothing mechanism developed in Section \ref{sec:hybrid_smooth} to improve the non-smooth hybrid feedback $\kappa(x,\theta)$ in \eqref{eqn:kappa_decomp2} by removing the discontinuities term $\sigma(\theta)$ in the control input $u$. 

Let us introduce the extended state $x_s := (p,\eta) \in \mathcal{X}_s := \mathcal{X}_p \times \mathbb{R}^2$. Similar to \eqref{eqn:new_affine_system}, system \eqref{eqn:system_robot} can be extended as follows: 
\begin{align}\label{eqn:single_integrator_modified2} 
	\begin{pmatrix}
		\dot{x} \\
		\dot{\eta} \\
		\dot{\theta}
	\end{pmatrix}
	& = \begin{pmatrix}
		-k_p \nabla_{p} V_{nav}(p) + k_p \eta  \\
		0 \\
		\varpi(x,\theta)
	\end{pmatrix} +
	\begin{pmatrix}
		0 \\
		I_2 \\
		0
	\end{pmatrix}u_s
\end{align}
with $u_s = \kappa_s(x_s,\theta)$ to be designed. Consider the following real-valued functions  
\begin{align}
	V_s(x_s,\theta) & =  V(x,\theta) + \frac{\gamma_s}{2}\|\eta - \sigma(\theta)\|^2 \label{eqn:def_V_s2} \\
	\kappa_s(x_s,\theta)  &= \mathcal{D}_t \sigma (\theta) -  k_\eta(\eta-\sigma(\theta)) - \frac{k_p}{\gamma_s} \nabla_p \mathcal{V}_{nav}(p,\theta)
	\label{eqn:def_kappa_s3}
\end{align}
with scalars $\gamma_s, k_\eta>0$, 
$$\mathcal{D}_t \sigma (\theta) =  - \mathcal{R}(\theta)\T \Delta (p_d-p_o) \varpi(x,\theta)$$ and $(V, \varpi, \sigma)$ defined in \eqref{eqn:second_order_V}, \eqref{eqn:second_order_varpi} and \eqref{eqn:kappa_decomp2}, respectively. It is not difficult to verify that $V_s$ is positive definite with respect to $\mathcal{A}_s :=  \{(x_s,\theta)\in \mathcal{X}_s \times \mathbb{R}: p=p_d, \theta=0, \eta = 0\}$ from the fact $\sigma(0) =0$. 
\begin{prop}  \label{prop:synergistic_single_smooth}
	Consider  the  real-valued functions $V_s,\kappa_s,\varpi$ defined in \eqref{eqn:def_V_s2}, \eqref{eqn:def_kappa_s3} and \eqref{eqn:second_order_varpi}, respectively. Choose $(\Theta, \gamma_\theta, \delta_{\mathcal{V}})$ according to Lemma \ref{lem:V_nav_synergistic}, $c_\kappa$ according to Lemma \ref{lem:sigma}, and $k_\theta, k_p, k_\eta >0, 0<\gamma_s < \frac{\delta_{\mathcal{V}}}{c_\kappa}$.  Then,  $(V_s,\kappa_s,\varpi, \Theta)$ is a synergistic feedback quadruple relative to   $\mathcal{A}_s$ for system \eqref{eqn:single_integrator_modified2} with gap exceeding $\delta_s\in (0, \delta_{\mathcal{V}}- \gamma_s c_\kappa]$.
\end{prop}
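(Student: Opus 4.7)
The strategy is to recognize Proposition \ref{prop:synergistic_single_smooth} as a direct specialization of the general smoothing result in Proposition \ref{prop:SLFS}, and to verify the three hypotheses required by that result: (i) a starting synergistic feedback quadruple $(V,\kappa,\varpi,\Theta)$ for the original system; (ii) a valid decomposition of $\kappa$ as in \eqref{eqn:kappa_decomp}; and (iii) the upper bound condition of Assumption \ref{assum:sigma_bound} on the resulting $\sigma$.

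First, Proposition \ref{prop:synergistic_single} already supplies (i): the quadruple $(V,\kappa,\varpi,\Theta)$ from \eqref{eqn:second_order_V}--\eqref{eqn:second_order_varpi} is synergistic relative to $\mathcal{A}=(p_d,0)$ for system \eqref{eqn:single_integrator_modified} with gap exceeding any $\delta\in(0,\delta_{\mathcal{V}}]$. For (ii), I would read off the decomposition from \eqref{eqn:kappa_decomp2}, namely $\varsigma(x)=-k_p\nabla_p V_{nav}(p)$, $\Upsilon(x)=k_p I_2$, and $\sigma(x,\theta)=\sigma(\theta)=(I_2-\mathcal{R}(\theta))\T(p_o-p_d)$. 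For (iii), Lemma \ref{lem:sigma} directly supplies the bound $\max_{\bar{\theta}\in\Theta}\|\sigma(\theta)-\sigma(\bar{\theta})\|^2\leq 2c_\kappa$ on $\Psi_V\setminus\mathcal{A}$, establishing Assumption \ref{assum:sigma_bound}. The stated gain restriction $0<\gamma_s<\delta_{\mathcal{V}}/c_\kappa$ allows one to choose $\delta\in(0,\delta_{\mathcal{V}}]$ close enough to $\delta_{\mathcal{V}}$ so that $\gamma_s<\delta/c_\kappa$, matching the requirement of Proposition \ref{prop:SLFS}.

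Next I would verify that $(V_s,\kappa_s)$ in \eqref{eqn:def_V_s2}--\eqref{eqn:def_kappa_s3} literally coincide with the general template \eqref{eqn:def_V_s}--\eqref{eqn:def_kappa_s} instantiated at the present choice of $(\varsigma,\Upsilon,\sigma)$. The expression \eqref{eqn:def_V_s2} for $V_s$ is immediate. For $\kappa_s$, since $g(x)=I_2$ in \eqref{eqn:single_integrator_modified2}, one has $\Upsilon(x)\T g(x)\T \nabla_x V(x,\theta) = k_p\nabla_p\mathcal{V}_{nav}(p,\theta)$, reproducing the last term of \eqref{eqn:def_kappa_s3}; moreover, since $\sigma$ depends only on $\theta$, the total time derivative $\mathcal{D}_t\sigma$ in \eqref{eqn:def_kappa_s} collapses to $\mathcal{D}_\theta\sigma(\theta)\varpi(x,\theta)=-\mathcal{R}(\theta)\T\Delta(p_d-p_o)\varpi(x,\theta)$, matching the statement. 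One also checks that the target set \eqref{eqn:def_A_s} specializes to the $\mathcal{A}_s$ declared in the proposition, using $\sigma(0)=0$.

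With all hypotheses verified, Proposition \ref{prop:SLFS} applies and yields that $(V_s,\kappa_s,\varpi,\Theta)$ is a synergistic feedback quadruple relative to $\mathcal{A}_s$ for system \eqref{eqn:single_integrator_modified2} with gap exceeding $\delta_s\in(0,\delta-\gamma_s c_\kappa]\subseteq(0,\delta_{\mathcal{V}}-\gamma_s c_\kappa]$, which is the desired conclusion. I do not expect a substantive obstacle in this proof; the only care needed is bookkeeping to correctly match the specialized quantities to the generic template of Section \ref{sec:hybrid_smooth}, particularly the identification of $\Upsilon(x)=k_p I_2$ (rather than $I_2$) and the observation that $g(x)=I_2$ in the present single-integrator setting.
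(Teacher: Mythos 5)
Your proposal is correct and follows essentially the same route as the paper: the paper's own appendix proof explicitly states that it ``follows closely the proof of Proposition \ref{prop:SLFS} and \ref{prop:synergistic_single}'' and simply re-derives the Lie-derivative computation, the invariant-set identification, and the gap bound in the specialized single-integrator setting, which is exactly the content of invoking Proposition \ref{prop:SLFS} with the decomposition $\varsigma(x)=-k_p\nabla_p V_{nav}(p)$, $\Upsilon(x)=k_p I_2$, $\sigma(\theta)=(I_2-\mathcal{R}(\theta))\T(p_o-p_d)$ and the bound from Lemma \ref{lem:sigma}. The only cosmetic simplification available is that Proposition \ref{prop:synergistic_single} permits taking $\delta=\delta_{\mathcal{V}}$ outright, so no limiting choice of $\delta$ near $\delta_{\mathcal{V}}$ is needed to recover the full range $\delta_s\in(0,\delta_{\mathcal{V}}-\gamma_s c_\kappa]$.
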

\begin{proof}
	See Appendix \ref{sec:synergistic_single_smooth}.
\end{proof}

Given the synergistic feedback quadruple $(V_s,\kappa_s,\varpi, \Theta)$ as per Proposition \ref{prop:synergistic_single_smooth}, we propose the following smooth hybrid feedback:
\begin{align}\label{eqn:hybrid_feedback_2}
	\underbrace{
		\begin{array}{ll}
			u            & = -k_p \nabla_{p} V_{nav}(p) + k_p \eta\\
			\dot{\eta}   & = \kappa_s(x_s,\theta) \\
			\dot{\theta} & = \varpi(x,\theta)
		\end{array}
	}_{(x_s,\theta)\in \mathcal{F}_s}
	\underbrace{
		\begin{array}{ll}
			&\\
			\eta^+   & = \eta\\
			\theta^+ & \in G_s(x_s,\theta)
		\end{array}
	}_{(x_s,\theta)\in \mathcal{J}_s}
\end{align}
where $(G_s, \mathcal{F}_s, \mathcal{J}_s)$ are designed as \eqref{eqn:hybrid_feedback2_design} with $V_s$ defined in \eqref{eqn:def_V_s2}. Note that the control input $u$ in \eqref{eqn:hybrid_feedback_2} is smooth and consists of the gradient of a nominal smooth navigation function $V_{nav}$ and an auxiliary variable $\eta$. 
Applying the results in Proposition \ref{prop:synergistic_single_smooth} and Theorem \ref{thm:thm2}, one concludes that the equilibrium point $\mathcal{A}_s$ is GAS for the hybrid closed-loop system \eqref{eqn:system_robot} and \eqref{eqn:hybrid_feedback_2}. Moreover, the collision-free navigation is guaranteed since the navigation function $\mathcal{V}_{nav}$ is bounded by the initial conditions for all $(t,j) \in \dom (x_s,\theta)$.

\subsection{Simulation}\label{sec:simulation} 
In this subsection, numerical simulations are presented to illustrate the performance of the proposed hybrid controllers by making use of the HyEQ Toolbox in MATLAB \cite{sanfelice2013toolbox}. 
For comparison purposes, three types of feedback controllers are considered: the hybrid controller \eqref{eqn:second_order_hybrid_feedback} referred to as 'Hybrid', the smooth hybrid controller \eqref{eqn:hybrid_feedback_2} referred to as 'Smooth-Hybrid', and the non-hybrid smooth feedback controller $u = \kappa_o(x)$ with $\kappa_o$ in \eqref{eqn:kappa_decomp2} referred to as 'Non-Hybrid'. The parameters for the barrier function $\phi$ in \eqref{eqn:Phi_def} are given as $\varepsilon = 0.1$ (m), $r_s = 0.5$ (m) and $\varrho = 15$.
We consider an obstacle centered at $p_{o}=(0,5)$ with radius  $r_{o}=2$ (m) in the workspace $\mathbb{R}^2$. The destination location is the origin and the initial conditions for $\theta$ and $\eta$ are set to zero.  All three feedback controllers share the same initial positions and feedback gain $k_p = 12$. The other parameters are given as: $\Theta=\{0.2\}, k_\theta = 0.02$, $\delta = 0.0365, \gamma_\theta = 2.0264, \gamma_s = 0.0659, \delta_s = 0.0036$ and $k_\eta=100$.

In the first simulation, the robot starts from the position $p(0)=(12,0)$, where convergence to the destination position is not guaranteed for the non-hybrid controller. In this case,
as one can notice from Fig. \ref{fig:simulation1}, only our nominal and smooth hybrid controllers guarantee asymptotic (collision-free) convergence to the destination position. The terminal position of the robot, with the non-hybrid controller, coincides with the undesired equilibrium point. As one can notice from Fig. \ref{fig:simulation2}, the variable $\theta$ converges to zero as stated in our main theorems.  Our second simulation in Fig. \ref{fig:simulation3} presents a comparison between three controllers from different initial positions, where both of our hybrid controllers guarantee asymptotic convergence.

\begin{figure}[!ht]
	\centering
	\subfloat[]{\includegraphics[width=0.8\linewidth]{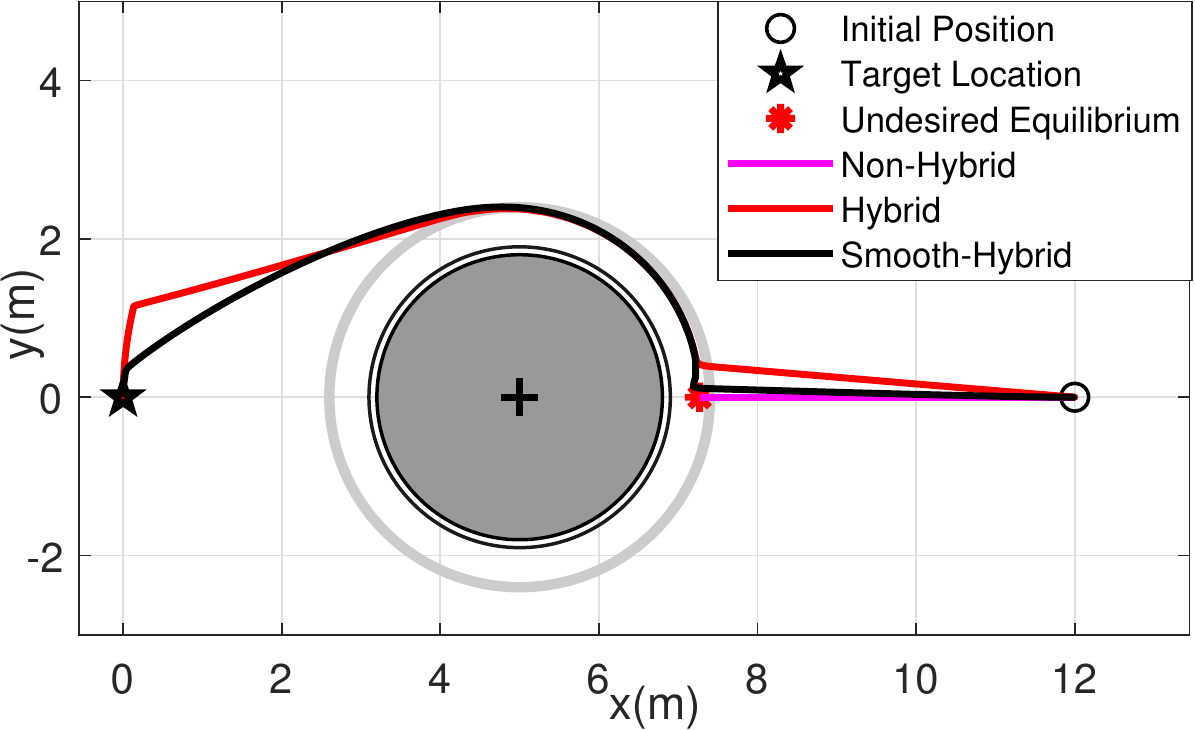}
		\label{fig:simulation1}}\\	
	\subfloat[]{\includegraphics[width=0.85\linewidth]{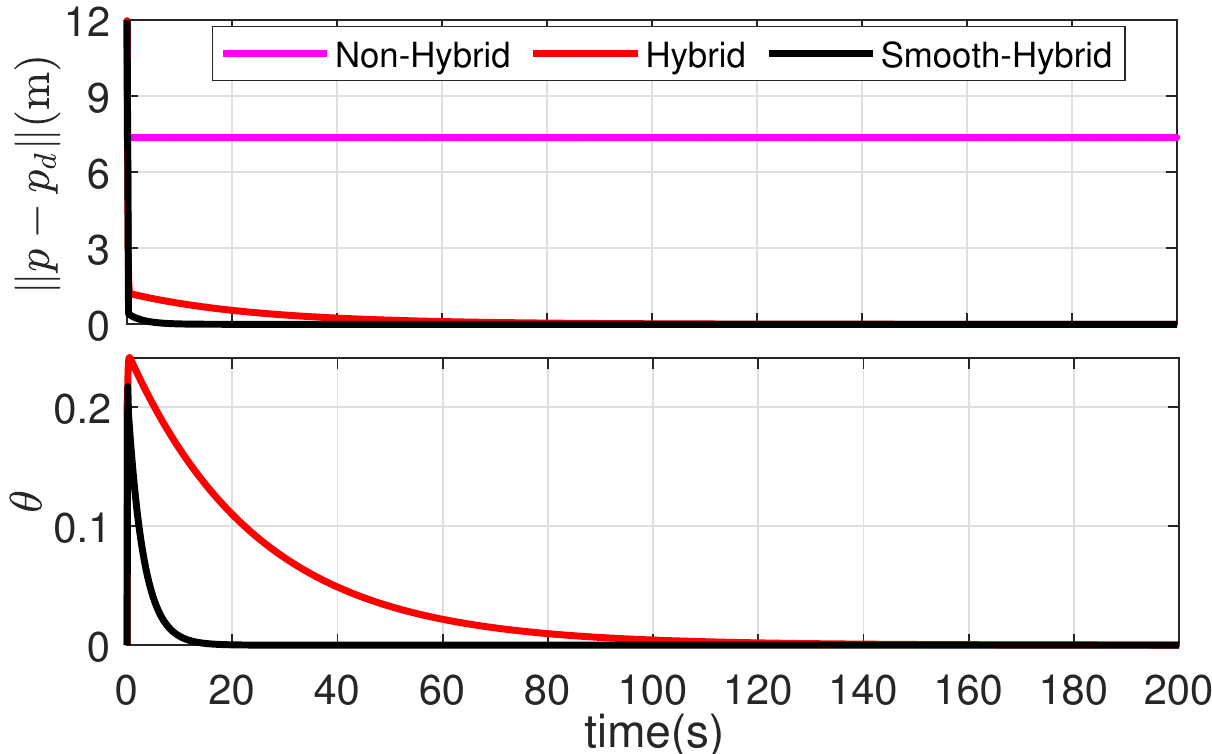}
		\label{fig:simulation2}}\\
	\subfloat[]{\includegraphics[width=0.8\linewidth]{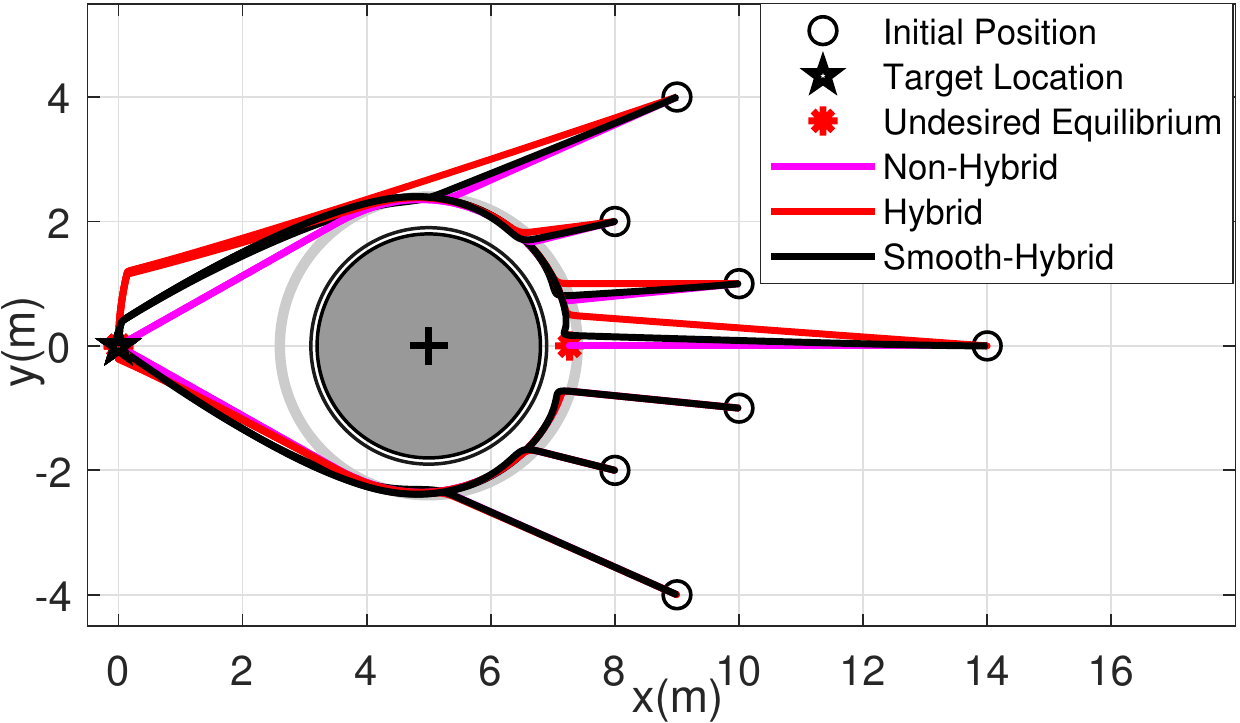}
		\label{fig:simulation3}} 
	\caption{(a) Trajectories of the robot using three different controllers with initial position $p(0)=(12,0)$; (b) Time evolution of the distance to the destination and the switching variable $\theta$; (c) Trajectories of the robot using three different controllers from different initial positions.} 
	\label{fig:simulation}
\end{figure}

\section{Conclusion}\label{sec:conclusion}
Over the past decade, synergistic hybrid feedback has been proven to be a powerful tool to achieve robust GAS for a class of affine nonlinear systems with topological constraints. The commonly used synergistic hybrid feedback approaches are designed based on a synergistic family of Lyapunov (or potential) functions and a logical index variable.  In this work, we generalize the concept of synergistic Lyapunov functions with a vector-valued switching variable, which is governed by hybrid dynamics. Unlike the logical index variable, this switching variable can either remain constant or change dynamically between jumps. Based on the generalized synergistic Lyapunov functions, we proposed a new synergistic hybrid feedback control scheme for a class of affine nonlinear systems leading to GAS. The main idea behind the proposed hybrid control strategy is to embed the state space in a higher dimensional space, which provides an easier handling of the equilibrium points of the overall closed-loop system through the hybrid dynamics of the switching variable leading to GAS guarantees. We also propose a smoothing mechanism for our synergistic hybrid feedback control strategy to get rid of the discontinuities in the hybrid feedback term. We also show that our hybrid approach can be extended to a larger class of systems through the integrator backstepping approach. Moreover, we have also introduced the concept of synergistic navigation functions, which we applied to the problem of safe navigation in planar environments, with global obstacle avoidance, for robots with single integrator dynamics.

\appendix

\subsection{Proof of Lemma \ref{lem:HBC}} \label{sec:HBC}
The flow set $\mathcal{F}$ and jump set $\mathcal{J}$ in \eqref{eqn:hybrid_feedback1_design} are closed subsets of $\mathcal{X}\times \mathbb{R}^r \subseteq \mathbb{R}^{n+r}$ since $\mu_{V,\Theta}$ is continuous (the composition of continuous functions), and satisfy $\mathcal{F} \cup \mathcal{J}=\mathcal{X}\times \mathbb{R}^r$. This shows the fulfillment of condition A1. Since  $\kappa$ and $\varpi$ are single-valued continuous functions on $\mathcal{F}$, the properties of outer semicontinuity, local boundedness, and convexity hold for the flow map $F$ in \eqref{eqn:hybrid_closed_F}, which shows the fulfillment of condition A2.  

Similar to the proof of \cite[Lemma 1]{casau2019hybrid}, let us consider a sequence $\{(x_i,\theta_i)\}_{i\in \mathbb{N}}\subset \mathcal{X}\times \mathbb{R}^r$ that converges to $(x,\theta)\in \mathcal{X}\times \mathbb{R}^r$, and let $\{\vartheta_i\}_{i\in \mathbb{N}}$ represent a sequence satisfying $V(x_i,\vartheta_i) = \min_{\bar{\theta}\in \Theta}V(x_i,\bar{\theta})$ (\ie, $\vartheta_i \in G_o(x_i,\theta_i)=\arg \min \{V(x,\bar{\theta}): \bar{\theta}\in \Theta\}$) for each $i\in \mathbb{N}$. Since $\Theta$ is compact (finite non-empty subset of $\mathbb{R}^r$) by assumption, there exists a subsequence $\{\vartheta_{i(k)}\}_{k\in \mathbb{N}}$ that converges to some $\vartheta\in \Theta$. Then, from \cite[Definition 5.9]{goebel2012hybrid} the set-valued function $G_o$ in \eqref{eqn:def_G_theta} is outer semicontinuous if $\vartheta \in G_o(x,\theta)$ at each $(x,\theta)\in \mathcal{J}$. Assume that there exists $\vartheta^*\in \mathbb{R}^r$ such that $V(x,\vartheta^*)<V(x,\theta)$, then, by continuity of $V$, there exists some $k^*\in \mathbb{N}$ such that $V(x_{i(k)},\vartheta^*)<V(x_{i(k)},\theta_i)$ for all $k>k^*$, which is a contradiction, since $\vartheta_i \in G_o(x_i, \theta_i)=\arg \min \{V(x,\bar{\theta}): \bar{\theta}\in \Theta\}$. Hence,  
one concludes that the set-valued function $G_o$ is outer semicontinuous, and from which the outer semicontinuity of the jump map $G$ in \eqref{eqn:hybrid_closed_F} relative to $\mathcal{J}$ follows. Moreover, the jump map $G$ is locally bounded relative to $\mathcal{J}$ since $G_o$ takes values over a finite discrete set $\Theta$ and the remaining component of $G$ is a single-valued continuous function on $\mathcal{J}$. This shows the fulfillment of condition A3, which completes the proof.

\subsection{Proof of Proposition \ref{prop:SLFS}} \label{sec:SLFS}
From the definitions of the function $V_s$ in \eqref{eqn:def_V_s} and the set $\mathcal{A}_s$ in \eqref{eqn:def_A_s}, applying the properties of the synergistic feedback quadruple $(V,\kappa,\varpi, \Theta)$ in Definition \ref{defn:synergistic_feedback}, one can verify that $V_s$ is positive definite with respect to $\mathcal{A}_s$, and for each $\epsilon\geq 0$ the sub-level set  $\mho_{V_s}(\epsilon):= \{(x_s,\theta)\in \mathcal{X}_s \times \mathbb{R}^r: V_s(x_s,\theta)\leq \epsilon\} $ is compact. Hence, the conditions C1) and C2) in Definition \ref{defn:synergistic_feedback} for the quadruple $(V_s,\kappa_s,\varpi, \Theta)$ are satisfied.

Next, we are going to verify the condition  C3) in Definition \ref{defn:synergistic_feedback} for the quadruple $(V_s,\kappa_s,\varpi, \Theta)$. Letting 
$$\tilde{\eta}:=\eta-\sigma(x,\theta)$$
for the sake  of simplicity, the function $V_s$ defined in \eqref{eqn:def_V_s} can be rewritten as
\begin{align*}
	V_s(x_s,\theta) = V(x,\theta) + \frac{\gamma_s}{2} \tilde{\eta}\T \tilde{\eta}.
\end{align*}
From \eqref{eqn:kappa_decomp}, \eqref{eqn:smooth_affine_system} and \eqref{eqn:def_kappa_s}, for all $(x_s,\theta)\in \mathcal{X}_s \times \mathbb{R}^r$,  one obtains
\begin{align}
	& \langle \nabla V_s(x_s,\theta), f_s(x_s,\theta) + g_s(x_s,\theta)\kappa_s(x_s,\theta)\rangle  \nonumber \\
	& = \langle \nabla_x V(x,\theta), f(x) + g(x)(\varsigma(x)+\Upsilon(x)\eta)\rangle  \nonumber \\
	& \quad + \langle \nabla_\theta V(x,\theta), \varpi(x,\theta)\rangle   +  \gamma_s\tilde{\eta}\T \left( \kappa_s(x_s,\theta) - \mathcal{D}_t \sigma(x,\theta)\right)    \nonumber \\
	& = \big\langle \nabla_x V(x,\theta), f(x) + g(x)\kappa(x,\theta) + g(x) \Upsilon(x)\tilde{\eta} \big\rangle    \nonumber\\
	& \quad + \langle \nabla_\theta V(x,\theta), \varpi(x,\theta)\rangle  \nonumber\\
	& \quad +  \tilde{\eta}\T ( \gamma_s k_\eta\tilde{\eta}   -  \Upsilon(x)\T g(x)\T \nabla_x V(x,\theta))   \nonumber \\
	& = \langle \nabla_x V(x,\theta), f(x) + g(x)\kappa(x,\theta) \rangle +  \langle \nabla_\theta V(x,\theta), \varpi(x,\theta)\rangle \nonumber \\
	& \quad   - \gamma_s k_\eta   \tilde{\eta}\T \tilde{\eta} \nonumber\\
	& = \langle \nabla V(x,\theta), f_c(x,\theta) + g_c(x,\theta)\kappa(x,\theta) \rangle  - \gamma_s k_\eta   \tilde{\eta}\T \tilde{\eta} \nonumber \\
	& \leq 0 \label{eqn:prop_kappa_s}
\end{align}
where we made use of the facts $\varsigma(x)+\Upsilon(x)\eta  = \kappa(x,\theta) + \Upsilon(x)\tilde{\eta}$ and $\langle \nabla V(x,\theta), f_c(x,\theta) + g_c(x,\theta)\kappa(x,\theta) \rangle =\langle \nabla_x V(x,\theta), f(x) + g(x)\kappa(x,\theta) \rangle +  \langle \nabla_\theta V(x,\theta), \varpi(x,\theta)\rangle \leq  0 $ from \eqref{eqn:prop_kappa} for all $(x,\theta)\in \mathcal{X} \times \mathbb{R}^r$.
Then, one concludes that the condition C3) in Definition \ref{defn:synergistic_feedback} is satisfied for the quadruple $(V_s,\kappa_s,\varpi, \Theta)$.

Now, we need to verify the condition  C4) in Definition \ref{defn:synergistic_feedback} for the quadruple $(V_s,\kappa_s,\varpi, \Theta)$. Similar to the definition of $\mathcal{E}$, we define
\begin{multline}
	\mathcal{E}_s  :=  \{(x_s,\theta)\in \mathcal{X}_s \times \mathbb{R}^r:
	\langle \nabla V_s(x_s,\theta), \\ f_s(x_s,\theta) + g_s(x_s,\theta)\kappa_s(x_s,\theta)\rangle = 0 \}.
	\label{eqn:def_E_s}
\end{multline}
From the definition of $\mathcal{E}$ in \eqref{eqn:def_E} and \eqref{eqn:prop_kappa_s}, one can rewrite $\mathcal{E}_s$ as
\begin{align*}
	\mathcal{E}_s
	& = \{(x_s,\theta)\in \mathcal{X}_s \times \mathbb{R}^r: \eta-\sigma(x,\theta) = 0, \nonumber \\
	& \qquad \langle \nabla V(x,\theta), f_c(x,\theta) + g_c(x,\theta)\kappa(x,\theta) \rangle = 0 \} \nonumber\\
	& = \{(x_s,\theta)\in \mathcal{X}_s \times \mathbb{R}^r: (x,\theta)\in \mathcal{E}, \eta = \sigma(x,\theta)\} . 
\end{align*}
Let $\Psi_{V_s} \subseteq \mathcal{E}_s$ denote the  largest weakly invariant set for system \eqref{eqn:smooth_affine_system} with $(x_s,\theta) \in  \mathcal{E}_s$.
From the definitions of $f_s, g_s$ in \eqref{eqn:smooth_affine_system} and $\kappa_s$ in \eqref{eqn:def_kappa_s}, the time-derivative of $x_s$ can be explicitly expressed as
\begin{align*}
	\dot{x}_s = \begin{pmatrix}
		f(x) + g(x)(\kappa(x,\theta) + \Upsilon(x)\tilde{\eta}) \\
		\mathcal{D}_t{\sigma}(x,\theta) -  k_\eta\tilde{\eta}-  \frac{1}{\gamma_s} \Upsilon(x)\T g(x)\T \nabla_x V(x,\theta)
	\end{pmatrix}
\end{align*}
with $\tilde{\eta}=\eta - \sigma(x,\theta)$ and $\mathcal{D}_t{\sigma}(x,\theta) = \frac{d}{dt}{\sigma}(x,\theta)$ by definition.
From $\tilde{\eta}\equiv 0$, \ie, $\eta \equiv \sigma(x,\theta)$, it follows that $   \frac{d}{dt}(\eta - {\sigma}(x,\theta))\equiv 0$ and $\Upsilon(x)\T g(x)\T \nabla_x V(x,\theta) \equiv 0$. Therefore, the largest weakly invariant set for system   \eqref{eqn:smooth_affine_system} with $(x_s,\theta) \in  \mathcal{E}_s$ satisfies $
\Psi_{V_s}   \subseteq  \{(x_s,\theta)\in \mathcal{X}_s \times \mathbb{R}^r: (x,\theta)\in \mathcal{E} \cap \mathcal{W}, \eta = \sigma(x,\theta)\}
$ with $ 
\mathcal{W}: = \{(x,\theta) \in \mathcal{X}\times \mathbb{R}^r: \Upsilon(x)\T g(x)\T \nabla_{x} V(x,\theta) = 0\}. 
$
Let $\bar{\Psi}_V \subseteq \mathcal{E} \cap \mathcal{W}$ denote  the largest weakly invariant set for system \eqref{eqn:new_affine_system} with $(x,\theta) \in  \mathcal{E}\cap \mathcal{W}$. It follows that $\bar{\Psi}_V \subseteq \Psi_V$, and then one can further show that
$
\Psi_{V_s} =  \{(x_s,\theta)\in \mathcal{X}_s \times \mathbb{R}^r: (x,\theta)\in \bar{\Psi}_V, \eta = \sigma(x,\theta)\}.
$
Thus, for each $(x_s,\theta)\in \Psi_{V_s} $ one has  $(x,\theta)\in \bar{\Psi}_V \setminus \mathcal{A}$ and $\tilde{\eta} = \eta - \sigma(x,\theta)=0$. Consequently, from  \eqref{eqn:mu_V_delta}  and the condition in Assumption \ref{assum:sigma_bound} one obtains
\begin{align}
	& V_s(x_s,\theta) - \min_{\bar{\theta}\in \Theta} V_s(x_s,\bar{\theta})  \nonumber \\
	& = V(x,\theta) + \frac{\gamma_s}{2}\|\tilde{\eta}\|^2  - \min_{\bar{\theta}\in \Theta} \left( V(x,\bar{\theta}) + \frac{\gamma_s}{2}\|\eta - \sigma(x,\bar{\theta})\|^2\right) \nonumber\\
	& = V(x,\theta)  - \min_{\bar{\theta}\in \Theta} \left( V(x,\bar{\theta}) + \frac{\gamma_s}{2}\|\sigma(x,\theta) - \sigma(x,\bar{\theta})\|^2\right) \nonumber\\
	& \geq V(x,\theta) - \min_{\bar{\theta}\in \Theta}   V(x,\bar{\theta})  - \frac{\gamma_s}{2}\max_{\bar{\theta}\in\Theta}\|\sigma(x,\theta) - \sigma(x,\bar{\theta})\|^2  \nonumber \\
	& \geq \mu_{V,\Theta}(x,\theta)  -  \gamma_s c_k  \label{eqn:V-minV}
\end{align}
where we made use of the inequality $
\min_{\bar{\theta}\in \Theta} ( V(x,\bar{\theta}) + \frac{\gamma_s}{2}\|\eta - \sigma(x,\bar{\theta})\|^2)      \leq \min_{\bar{\theta}\in \Theta}  V(x,\bar{\theta}) + \frac{\gamma_s}{2}\max_{\bar{\theta}\in\Theta}\|\eta - \sigma(x,\bar{\theta})\|^2$. 
Similar to the definition of $\delta_{V,\Theta}$ in \eqref{eqn:mu_V_delta}, it follows from  \eqref{eqn:V-minV}  that
\begin{align}
	\delta_{V_s,\Theta}(x_s,\theta) & =  \inf_{(x_s,\theta)\in \Psi_{V_s} \setminus \mathcal{A}_s} ( V_s(x_s,\theta) - \min_{\bar{\theta}\in \Theta} V_s(x_s,\bar{\theta}))   \nonumber \\
	& \geq \inf_{(x,\theta)\in \bar{\Psi}_V \setminus \mathcal{A}} \mu_{V,\Theta}(x,\theta)   -  \gamma_s c_k    \nonumber\\
	& \geq \inf_{(x,\theta)\in  \Psi_V \setminus \mathcal{A}} \mu_{V,\Theta}(x,\theta)   -  \gamma_s c_k    \nonumber\\
	& =  \delta_{V,\Theta}(x,\theta) -  \gamma_s c_\kappa\nonumber\\
	& > \delta -\gamma_s c_\kappa \geq \delta_s \label{eqn:delta_V_s}
\end{align}
where we made use of the facts $\gamma_s < {\delta}/{c_k}$, $\delta_s\in (0,\delta- \gamma_s c_k ]$, and $\inf_{(x,\theta)\in \bar{\Psi}_V \setminus \mathcal{A}} \mu_{V,\Theta}(x,\theta) \geq \inf_{(x,\theta)\in  \Psi_V \setminus \mathcal{A}}\mu_{V,\Theta}(x,\theta) $ since $ \bar{\Psi}_V \subseteq \Psi_V $. 
Hence, for the quadruple $(V_s,\kappa_s,\varpi, \Theta)$ the condition C4) in Definition \ref{defn:synergistic_feedback} follows immediately, which completes the proof. 

\subsection{Proof of Proposition \ref{prop:SLFS_backstepping}} \label{sec:SLFS_backstepping}
The proof of Proposition \ref{prop:SLFS_backstepping} is similar to the proof of   Proposition \ref{prop:SLFS}. One can also show that the conditions C1) and C2) in Definition \ref{defn:synergistic_feedback} for the quadruple $(V_b,\kappa_b,\varpi, \Theta)$ are satisfied. For all $(x_b,\theta) \in \mathcal{X}_b \times \mathbb{R}^r$, one can show that
\begin{align}
	& \langle \nabla V_b(x_b,\theta), f_b(x_b,\theta) + g_b(x_b,\theta)\kappa_b(x_b,\theta)\rangle  \nonumber\\
	& = \langle \nabla_x V(x,\theta), f(x) + g(x)u \rangle   + \langle \nabla_\theta V(x,\theta), \varpi(x,\theta) \rangle  \nonumber \\
	& \quad  + \gamma_s \tilde{\eta}\T \left(\kappa_s(x_s,\theta) - \mathcal{D}_t \sigma(x,\theta) \right)  \nonumber\\
	& \quad  + \gamma_b (u-\bar{\kappa}(x,\eta))\T \left(\kappa_b(x_b,\theta) - \mathcal{D}_t \bar{\kappa}(x,\eta) \right)  \nonumber\\
	& = \langle \nabla_x V(x,\theta), f(x) + g(x)\left( \kappa(x,\theta)   + \Upsilon(x)\tilde{\eta}   \right)  \rangle  \nonumber\\
	& \quad  + \langle \nabla_\theta V(x,\theta), \varpi(x,\theta) \rangle    + \gamma_s \tilde{\eta}\T \left(\kappa_s(x_s,\theta) - \mathcal{D}_t \sigma(x,\theta) \right)  \nonumber \\
	& \quad +  \langle \nabla_x V(x,\theta),   g(x) (u- \bar{\kappa}(x,\eta) )  \rangle  -  (u-\bar{\kappa}(x,\eta))\T \nonumber \\
	& \qquad  (   \gamma_bk_b(u-\bar{\kappa}(x,\eta)) + g(x)\T \nabla_x V(x,\theta) )  \nonumber\\
	& = \langle \nabla V(x,\theta), f_c(x,\theta) + g_c(x,\theta)\kappa(x,\theta) \rangle  \nonumber\\
	& \quad   - \gamma_s k_\eta \| \tilde{\eta}\|^2 - \gamma_bk_b\|u-\bar{\kappa}(x,\eta)\|^2 \leq 0  \label{eqn:prop_kappa_b}
\end{align}
where we made use of the  facts $\tilde{\eta} = \eta - \sigma(x,\theta)$,
$
\bar{\kappa}(x,\eta) 
=    \varsigma(x) + \Upsilon(x)(\eta - \sigma(x,\theta) + \sigma(x,\theta) )
=   \kappa(x,\theta) +   \Upsilon(x)\tilde{\eta}
$,
and $\langle \nabla V(x,\theta), f_c(x,\theta) + g_c(x,\theta)\kappa(x,\theta) \rangle \leq  0$ for all $(x,\theta)\in \mathcal{X} \times \mathbb{R}^r$.
Therefore, one has condition C3) in Definition \ref{defn:synergistic_feedback} for the quadruple $(V_b,\kappa_b,\varpi, \Theta)$.
Similar to the definitions of  $\mathcal{E}$ in \eqref{eqn:def_E} and $\mathcal{E}_s$ in \eqref{eqn:def_E_s}, we define
\begin{multline}
	\mathcal{E}_b  :=  \{(x_b,\theta)\in \mathcal{X}_b \times \mathbb{R}^r:
	\langle \nabla V_b(x_b,\theta), \\ f_b(x_b,\theta) + g_b(x_b,\theta)\kappa_b(x_b,\theta)\rangle = 0 \}.  \label{eqn:def_E_b}
\end{multline}
From the definition of $\mathcal{E}$ in \eqref{eqn:def_E} and \eqref{eqn:prop_kappa_b}, one can rewrite $\mathcal{E}_b$ as
\begin{align}
	\mathcal{E}_b
	& = \{(x_b,\theta)\in \mathcal{X}_b \times \mathbb{R}^r: \eta=\sigma(x,\theta), u = \bar{\kappa}(x,\eta), \nonumber\\
	& \qquad \langle \nabla V(x,\theta), f_c(x,\theta) + g_c(x,\theta)\kappa(x,\theta) \rangle = 0 \} \nonumber\\
	& = \{(x_b,\theta)\in \mathcal{X}_b \times \mathbb{R}^r: (x_s,\theta) \in \mathcal{A}_s,  u = \bar{\kappa}(x,\eta)\}.
\end{align}
Let $\Psi_{V_b} \subseteq \mathcal{E}_b$ denote the  largest weakly invariant set for the   system \eqref{eqn:backstepping_affine_system} with $(x_b,\theta) \in  \mathcal{E}_b$.
From the definitions of $f_b, g_b$ in \eqref{eqn:backstepping_affine_system} and $\kappa_b$ in \eqref{eqn:def_kappa_b}, the time-derivative of $x_b$ can be explicitly expressed as
\begin{align*}
	\dot{x}_b  =  \begin{pmatrix}
		f(x) + g(x)(   \kappa(x,\theta) +   \Upsilon(x)\tilde{\eta} + u- \bar{\kappa}(x,\eta) )\\
		\mathcal{D}_t{\sigma}(x,\theta) -  k_\eta\tilde{\eta}-  \frac{1}{\gamma_s} \Upsilon(x)\T g(x)\T \nabla_x V(x,\theta) \\
		\mathcal{D}_t \bar{\kappa}(x,\eta) -  k_b(u-\bar{\kappa}(x,\eta))  -  \frac{1}{\gamma_b} g(x)\T \nabla_x V(x,\theta) \\
		\varpi(x,\theta)
	\end{pmatrix}
\end{align*}
where $\tilde{\eta}=\eta - \sigma(x,\theta)$, $\mathcal{D}_t{\sigma}(x,\theta) = \frac{d}{dt}{\sigma}(x,\theta)$ and $\mathcal{D}_t \bar{\kappa}(x,\eta) = \frac{d}{dt}\bar{\kappa} (x,\eta)$ (between the jumps of $\theta$).
From $\tilde{\eta}\equiv 0$ and $u \equiv \bar{\kappa}(x,\eta)$, it follows that $\frac{d}{dt}(\eta - {\sigma}(x,\theta))\equiv 0$ and $\frac{d}{dt} (u - \bar{\kappa}(x,\eta)) \equiv 0$, and then one obtains $g(x)\T \nabla_x V(x,\theta) \equiv 0$. Therefore, the largest weakly invariant set for system   \eqref{eqn:backstepping_affine_system} with $(x_b,\theta) \in  \mathcal{E}_b$ satisfies $ 
\Psi_{V_b}  \subseteq  \{(x_b,\theta)\in \mathcal{X}_b \times \mathbb{R}^r: (x,\theta)\in \mathcal{E} \cap \mathcal{W}',  \eta = \sigma(x,\theta), u = \bar{\kappa}(x,\eta)\}$ 
with $
\mathcal{W}': = \{(x,\theta) \in \mathcal{X}\times \mathbb{R}^r:  g(x)\T \nabla_{x} V(x,\theta) = 0\}. 
$
Let $\bar{\Psi}_V' \subset \mathcal{E} \cap \mathcal{W}'$ denote  the largest weakly invariant set for system \eqref{eqn:new_affine_system} with $(x,\theta) \in  \mathcal{E}\cap \mathcal{W}'$. It follows that $\bar{\Psi}_V' \subseteq \bar{\Psi}_V \subseteq \Psi_V$  and  one can further show that
$       
\Psi_{V_b} =  \{(x_b,\theta)\in \mathcal{X}_b \times \mathbb{R}^r: (x,\theta)\in \bar{\Psi}_V',
\eta = \sigma(x,\theta), u = \bar{\kappa}(x,\eta)\}. $ 
Then, for each $(x_b,\theta)\in \Psi_{V_b} $ one has  $(x,\theta)\in \bar{\Psi}_V' \setminus \mathcal{A}$ and $\eta = \sigma(x,\theta)$ and $u = \bar{\kappa}(x,\eta)$. Consequently, from  \eqref{eqn:def_V_b} and \eqref{eqn:V-minV} one obtains
\begin{align}
	V_b(x_b,\theta) - \min_{\bar{\theta}\in \Theta} V_b(x_b,\bar{\theta})
	& = V_s(x_s,\theta) - \min_{\bar{\theta}\in \Theta} V_s(x_s,\bar{\theta})  \nonumber \\
	& \geq \mu_{V,\Theta}(x,\theta)  -  \gamma_s c_k.  \label{eqn:V-minV2}
\end{align}
Then, similar to the definition of $\delta_{V,\Theta}$ in \eqref{eqn:mu_V_delta}, it follows from \eqref{eqn:delta_V_s} and \eqref{eqn:V-minV2} that
\begin{align}
	\delta_{V_b,\Theta}(x_b,\theta) & =  \inf_{(x_b,\theta)\in \Psi_{V_b} \setminus \mathcal{A}_b} ( V_b(x_b,\theta) - \min_{\bar{\theta}\in \Theta} V_b(x_b,\bar{\theta}))   \nonumber \\
	& \geq \inf_{(x,\theta)\in  \Psi_V \setminus \mathcal{A}} \mu_{V,\Theta}(x,\theta)   -  \gamma_s c_k    \nonumber \\
	& > \delta -\gamma_s c_\kappa \geq \delta_b \label{eqn:delta_V_b}.
\end{align} 
Therefore, the condition C4) in Definition \ref{defn:synergistic_feedback} is satisfied for the quadruple $(V_b,\kappa_b,\varpi, \Theta)$, which completes the proof.

\subsection{Proof of Lemma \ref{lem:grad_V_nav2}} \label{sec:grad_V_nav2}
For the sake of simplicity, let $\bar{p} = \mathcal{T}(p,\theta)$ with $\mathcal{T}$ defined in \eqref{eqn:def_mathcal_T} and $z = d_o(\bar{p}) =d_o(p) = \|p-p_o\|-r_o$. Then, the real-valued function  $\mathcal{V}_{nav}$ in \eqref{eqn:def_synergistic_navigation} can be simplified as
\begin{align}
	\mathcal{V}_{nav}(p,\theta) & = V_{nav}(\bar{p}) + \frac{\gamma_\theta}{2} \theta^2.
	\label{eqn:def_synergistic_navigation3}
\end{align}	
Applying the definition of gradient, the time-derivative of $\mathcal{V}_{nav}$ along the trajectories of $\dot{p}=v$ and $\dot{\theta} = w$ is given as
\begin{align}
	\dot{\mathcal{V}}_{nav}(p,\theta)  
	& = \langle \nabla_{p} \mathcal{V}_{nav}(p,\theta), v \rangle  + \langle \nabla_{\theta} \mathcal{V}_{nav}(p,\theta), w \rangle.
	\label{eqn:dot_V_nav1}
\end{align}
On the other hand, from \eqref{eqn:def_synergistic_navigation3}, the time-derivative of $\mathcal{V}_{nav}$ can be explicitly written as  
\begin{align}
	&\dot{\mathcal{V}}_{nav}(p,\theta) \nonumber \\
	& \qquad = \langle \nabla_{\bar{p}} V_{nav}(\bar{p}),  \mathcal{D}_p\mathcal{T}(p,\theta) v + \mathcal{D}_\theta \mathcal{T}(p,\theta) w \rangle +  \langle \gamma_\theta  \theta, w \rangle \nonumber \\
	& \qquad = \langle \mathcal{D}_p\mathcal{T}(p,\theta)\T \nabla_{\bar{p}} V_{nav}(\bar{p}),   v \rangle  \nonumber \\
	& \qquad  \qquad + \langle \gamma_\theta  \theta + \mathcal{D}_\theta \mathcal{T}(p,\theta)\T \nabla_{\bar{p}} V_{nav}(\bar{p})), w \rangle
	\label{eqn:dot_V_nav2}
\end{align}	
where we have made use of the facts $\dot{\bar{p}} = \dot{\mathcal{T}}(p,\theta)=\mathcal{D}_p \mathcal{T}(p,\theta) v + \mathcal{D}_\theta \mathcal{T}(p,\theta) w $ with $\mathcal{D}_p\mathcal{T}(p,\theta)$ and $\mathcal{D}_\theta \mathcal{T}(p,\theta)$ denoting the Jacobian matrices of $\mathcal{T}$ with respect to $p$ and $\theta$, respectively. Hence, from \eqref{eqn:dot_V_nav1} and \eqref{eqn:dot_V_nav2}, one obtains the gradients $\nabla_{p} \mathcal{V}_{nav}(p,\theta)$ and $\nabla_{\theta} \mathcal{V}_{nav}(p,\theta)$ as follows: 
\begin{subequations}\label{eqn:def_grad_V}
	\begin{align}
		\nabla_{p} \mathcal{V}_{nav}(p,\theta)      & = \mathcal{D}_p\mathcal{T}(p,\theta)\T \nabla_{\bar{p}} V_{nav}(\bar{p}) \label{eqn:def_grad_p_V} \\
		\nabla_{\theta} \mathcal{V}_{nav}(p,\theta) & = \gamma_\theta \theta +  \mathcal{D}_\theta \mathcal{T}(p,\theta)\T \nabla_{\bar{p}} V_{nav}(\bar{p}) \label{eqn:def_grad_zeta_V}.
	\end{align}
\end{subequations}
From \eqref{eqn:gradient_V_nav}, one can explicitly write $\nabla_{\bar{p}} V_{nav}(\bar{p})$ as
\begin{align}
	\nabla_{\bar{p}} V_{nav}(\bar{p}) 
	& = \bar{p} - p_d + \varrho \nabla_{z} \phi(z) \frac{\bar{p}-p_o}{\|\bar{p}-p_o\|}.
	\label{eqn:def_grad_bar_p_V}
\end{align}
Moreover, applying the definition of $\mathcal{T}$ in \eqref{eqn:def_mathcal_T}, the time-derivative of $\mathcal{T}$ along the trajectories of $\dot{p}=v$ and $\dot{\theta} = w$ can be written as 
\begin{align*} 
	\frac{d}{dt}{\mathcal{T}}(p,\theta) 
	& = \mathcal{R}(\theta) \Delta (p-p_o)  w + \mathcal{R}(\theta) v 
\end{align*}
where $\Delta$ is defined in \eqref{eqn:def_R}, and we made use of the facts  $\frac{d}{dt}{\mathcal{R}}(\theta) = \frac{d}{dt} \exp(\theta \Delta) = w \mathcal{R}(\theta) \Delta$. Consequently, the Jacobian matrices of $\mathcal{T}$ in \eqref{eqn:def_mathcal_T} are given as 
\begin{align}\label{eqn:def_D_T}
	\mathcal{D}_p\mathcal{T}(p,\theta)      = \mathcal{R}(\theta), \quad 
	\mathcal{D}_\theta \mathcal{T}(p,\theta)   = \mathcal{R}(\theta) \Delta (p-p_o) .  
\end{align} 	  
Substituting  \eqref{eqn:def_grad_bar_p_V} and \eqref{eqn:def_D_T} into \eqref{eqn:def_grad_V}, one obtains
\begin{align}
	\nabla_{p} \mathcal{V}_{nav}(p,\theta)  
	& = \mathcal{R}(\theta) \T (\bar{p} - p_d )  + \varrho \nabla_{z} \phi(z) \frac{p-p_o}{\|p-p_o\|} \nonumber\\
	& = p-p_d + \varrho \nabla_{z} \phi(z) \frac{p-p_o}{\|p-p_o\|}   \nonumber \\
	& \qquad \qquad - (I_2-\mathcal{R}(\theta))\T (p_o-p_d) 
	\label{eqn:def_grad_p_V_2}
	\\
	\nabla_{\theta} \mathcal{V}_{nav}(p,\theta) 
	& = \gamma_\theta \theta - (p-p_o)\T \Delta \mathcal{R}(\theta)\T (\bar{p} - p_d)               \nonumber\\
	& = \gamma_\theta \theta - (p-p_o)\T \Delta \mathcal{R}(\theta)\T (p_o - p_d)
	\label{eqn:def_grad_zeta_V_2}
\end{align}
where we made use of the facts $\bar{p} - p_o = \mathcal{R}(\theta) (p - p_o)$, 
$
\mathcal{R}(\theta) \T (\bar{p} - p_d )
=  p  -p_o + \mathcal{R}(\theta) \T (p_o-p_d)
=  p-p_d - (I_2-\mathcal{R}(\theta))\T (p_o-p_d)
$, $(p-p_o)\T\Delta (p-p_o)=0$ and $\Delta\T = -\Delta$. 
From \eqref{eqn:def_grad_p_V_2} and \eqref{eqn:def_grad_zeta_V_2} as well as the definition of $\nabla_p V_{nav}(p)$ in \eqref{eqn:gradient_V_nav},
one concludes \eqref{eqn:def_grad_V'_p} and \eqref{eqn:def_grad_V'_zeta}.

By definition, the set of the critical points of $\mathcal{V}_{nav}$ is defined as
\begin{multline}
	C_{\mathcal{V}_{nav}}:=  \{(p,\theta) \in \mathcal{X}_p \times \mathbb{R}: \\
	\nabla_{p} \mathcal{V}_{nav}(p,\theta) =0, \nabla_{\theta} \mathcal{V}_{nav}(p,\theta)=0\} \label{eqn:C_V_nav}.
\end{multline} 
From \eqref{eqn:def_grad_p_V} and \eqref{eqn:def_D_T},  $\nabla_{p} \mathcal{V}_{nav}(p,\theta) =0$ implies that $\nabla_{\bar{p}} V_{nav}(\bar{p})=0$ since $\mathcal{D}_p\mathcal{T}(p,\theta)  = \mathcal{R}(\theta) \in SO(2)$. Substituting $\nabla_{\bar{p}} V_{nav}(\bar{p})=0$ in \eqref{eqn:def_grad_zeta_V},  $\nabla_{\theta} \mathcal{V}_{nav}(p,\theta) =0$ implies that $\theta=0$ since $\gamma_\theta>0$. Moreover, from $\theta = 0$ and the definition of $\mathcal{T}$ in \eqref{eqn:def_mathcal_T}, one has $\bar{p}=\mathcal{T}(p,\theta=0) = p$, which further implies that $\nabla_{\bar{p}} V_{nav}(\bar{p}) = \nabla_{p} V_{nav}(p)=0$ in view of \eqref{eqn:def_grad_bar_p_V}. Therefore, it follows from \eqref{eqn:critical_V_nav} that $p\in C_{V_{nav}}$ and $\theta = 0$ for all $(p,\theta)\in C_{\mathcal{V}_{nav}}$, which gives \eqref{eqn:def_V'_critical_set}. This completes the proof.

\subsection{Proof of Lemma \ref{lem:V_nav_synergistic}} \label{sec:V_nav_synergistic}
For the sake of simplicity, let us introduce $p' = \mathcal{T}(p,\bar{\theta})$, $d_{o1}=\|p-p_o\|=\|p'-p_o\|$ and $d_{o2}=\|p_o-p_d\|$ as shown in Fig. \ref{fig:diagram22}. From \eqref{eqn:def_V'_critical_set}, $(p,\theta)\in C_{\mathcal{V}_{nav}} \setminus \mathcal{A}_p$ implies that $\theta=0$ and $p\in C_{V_{nav}} \setminus \{p_d\}$ is located on the line $p_d-p_o$ and opposite to the destination point. \renewcommand{\thefigure}{\arabic{figure}}
\setcounter{figure}{4}
\begin{figure}[!ht]
	\centering
	\includegraphics[width=0.76\linewidth]{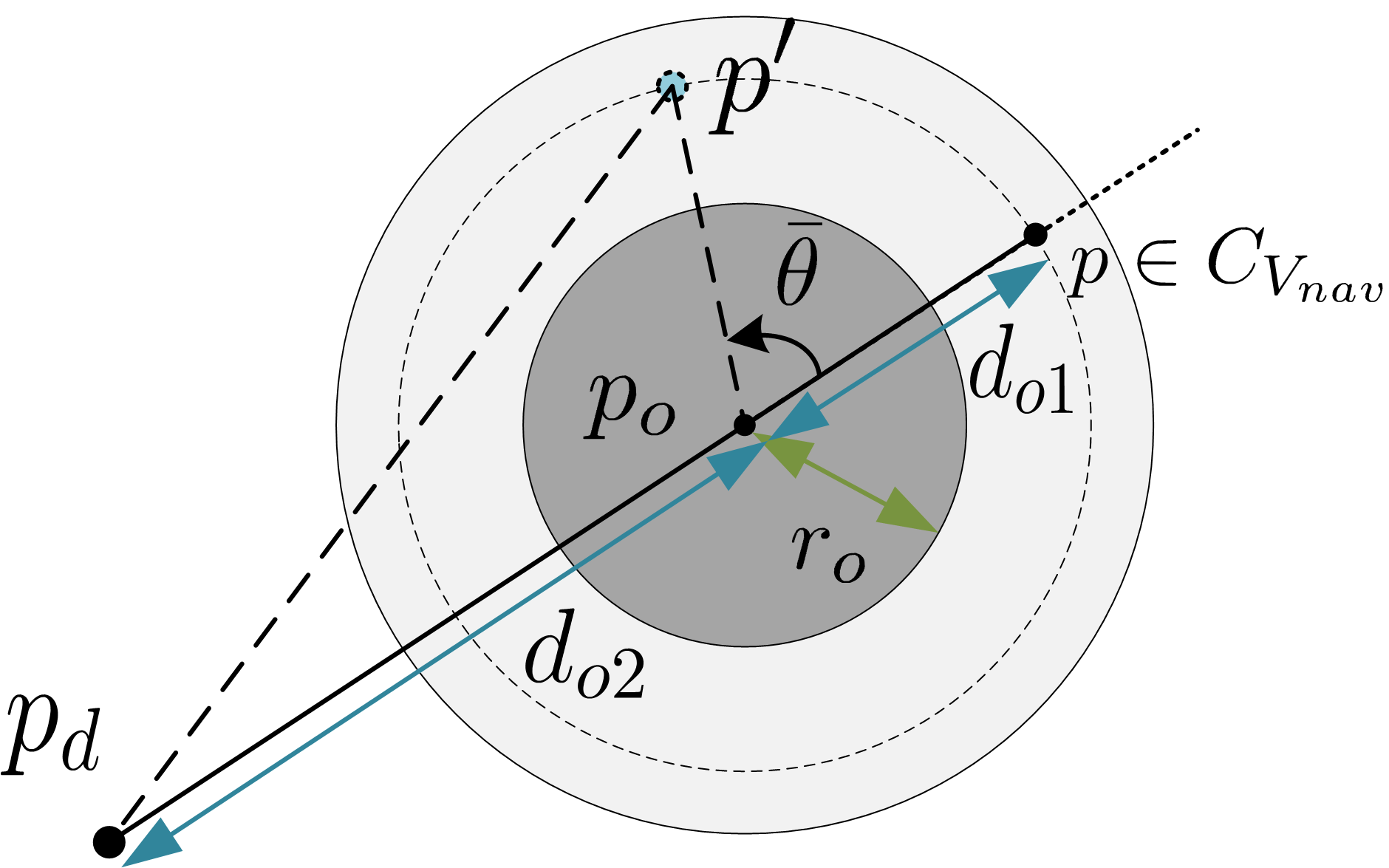}
	\caption{Geometric representation of the jump of the switching variable $\theta$ at the critical point $(p,\theta)\in C_{\mathcal{V}_{nav}}$.}
	\label{fig:diagram22}
\end{figure} 
For each $(p,\theta)\in C_{\mathcal{V}_{nav}}$ and $\bar{\theta}\in \Theta$, it follows from \eqref{eqn:def_V'_critical_set} that
\begin{align}
	& \mathcal{V}_{nav}(p,\theta) -  \mathcal{V}_{nav}(p,\bar{\theta}) \nonumber \\
	& = \frac{1}{2}\|\mathcal{T}(p,\theta) - p_d\|^2 + \varrho \phi(d_o(\mathcal{T}(p,\theta))) + \frac{\gamma_\theta}{2} \theta^2 \nonumber\\
	& \quad - \left(\frac{1}{2}\|\mathcal{T}(p,\bar{\theta}) - p_d\|^2 + \varrho \phi(d_o(\mathcal{T}(p,\bar{\theta}))) + \frac{\gamma_\theta}{2} \bar{\theta}^2 \right) \nonumber \\
	& = \frac{1}{2}\|p - p_d\|^2    - \frac{1}{2}   \left( \| \mathcal{T}(p,\bar{\theta}) - p_d\|^2  -  \gamma_\theta  \bar{\theta}^2 \right)   \label{eqn:V_nav_gap_step_1}
\end{align}
where we made use of the facts $C_{\mathcal{V}_{nav}} = C_{V_{nav}} \times \{0\}$, $\mathcal{T}(p,\theta=0) = p$ and $d_o(\mathcal{T}(p,\theta)) = d_o(\mathcal{T}(p,\bar{\theta})) = d_o(p)$ for all $\theta,\bar{\theta}\in \mathbb{R}$. Substituting the identities:
$\|p - p_d\|^2  =  d_{o1}^2  + d_{o2}^2 + 2 d_{o1}d_{o2} $ and 
$\|p' - p_d\|^2 =  d_{o1}^2 + d_{o2}^2   + 2 d_{o1} d_{o2} \cos(\bar{\theta})$ 
into  \eqref{eqn:V_nav_gap_step_1}, one obtains
\begin{align}
	\mathcal{V}_{nav}(p,\theta) -  \mathcal{V}_{nav}(p,\bar{\theta})  
	&  =  d_{o1} d_{o2} \left(1- \cos(\bar{\theta}) \right)- \frac{\gamma_\theta}{2} \bar{\theta}^2  \nonumber \\
	&  =  2d_{o1} d_{o2}\sin^2\left( \frac{\bar{\theta}}{2}\right)      - \frac{\gamma_\theta}{2} \bar{\theta}^2  \nonumber\\
	& > \frac{2r_o d_{o2} }{\pi^2} \bar{\theta}^2   - \frac{\gamma_\theta}{2} \bar{\theta}^2 \label{eqn:V_nav_gap_step_2}
\end{align}
where we made use of the facts $d_{o1} > r_o$, $1-\cos(\bar{\theta}) = 2\sin^2( \frac{\bar{\theta}}{2}) $, and $|\sin^2( \frac{\bar{\theta}}{2})| \geq  \frac{|\bar{\theta}|}{\pi}$ for all $|\bar{\theta}| \in (0,\pi)$. Similar to the proof of \cite[Proposition 2]{wang2022hybrid}, from the facts  $\gamma_\theta < \frac{4 r_o d_{o2} }{\pi^2}$ and $\bar{\theta}_M = \max_{\theta'\in \Theta} |\theta'|$,  one can further show that
\begin{align*}
	\mathcal{V}_{nav}(p,\theta) - \min_{\bar{\theta}\in \Theta} \mathcal{V}_{nav}(p,\bar{\theta})  
	&  =  \max_{\bar{\theta}\in \Theta} \left( \mathcal{V}_{nav}(p,\theta) - \mathcal{V}_{nav}(p,\bar{\theta})\right)  \nonumber \\
	&  >  \left( \frac{2r_o d_{o2} }{\pi^2}    - \frac{\gamma_\theta}{2}\right) \max_{\bar{\theta}\in \Theta}  \bar{\theta}^2 \nonumber\\
	&  =  \left( \frac{2 r_o d_{o2} }{\pi^2}    - \frac{\gamma_\theta}{2}\right)  \bar{\theta}_M^2 \nonumber \\
	&   = \delta_{\mathcal{V}}^* \geq \delta_{\mathcal{V}} 
\end{align*}
with $d_{o2}=\|p_o-p_d\|$ and $\delta_{\mathcal{V}}^*= ( \frac{2 r_o \|p_d-p_o\|}{\pi^2}    - \frac{\gamma_\theta}{2} )  \bar{\theta}_M^2$. This completes the proof.

\subsection{Proof of Proposition \ref{prop:synergistic_single}}  \label{sec:synergistic_single}
To show that $(V,\kappa,\varpi,\Theta)$ is a synergistic feedback quadruple, we are going to verify all the conditions in Definition \ref{defn:synergistic_feedback} by considering  $V(x,\theta) = \mathcal{V}_{nav}(p,\theta)$ with $x = p\in \mathcal{X}_p = \mathcal{X}$. The conditions C1) and C2) in Definition \ref{defn:synergistic_feedback} can be verified for the quadruple $(V,\kappa,\varpi,\Theta)$ since $\mathcal{V}_{nav}$ in \eqref{eqn:def_synergistic_navigation} is a navigation function with respect to $\mathcal{A}$ by definition with the properties of the navigation function $V_{nav}$ in \eqref{eqn:def_navigation} and the transformation function  $\mathcal{T}$ in \eqref{eqn:def_mathcal_T}.  

Next, we are going to verify the conditions C3) and C4) in Definition \ref{defn:synergistic_feedback} for the quadruple $(V,\kappa,\varpi,\Theta)$. 
From \eqref{eqn:prop_kappa}, \eqref{eqn:single_integrator_modified}, \eqref{eqn:second_order_kappa} and \eqref{eqn:second_order_varpi}, for all $\mathcal{X}\times \mathbb{R}$ one has 
\begin{align*}
	& \langle \nabla_x V(x,\theta),  \kappa(x,\theta)\rangle
	+ \langle \nabla_\theta V(x,\theta), \varpi(x,\theta)\rangle\\
	&~ =\langle \nabla_p \mathcal{V}_{nav}(p,\theta),  -k_p \nabla_p \mathcal{V}_{nav}(p,\theta)\rangle
	- k_\theta |\nabla_\theta \mathcal{V}_{nav}(p,\theta)|^2 \\
	&~ = -k_p \|\nabla_p \mathcal{V}_{nav}(p,\theta)\|^2 - k_\theta |\nabla_\theta \mathcal{V}_{nav}(p,\theta)|^2 \leq 0
\end{align*}
which gives the condition C3) in Definition \ref{defn:synergistic_feedback} for the quadruple $(V,\kappa,\varpi,\Theta)$. Similar to \eqref{eqn:def_E}, we define the set $\mathcal{E} :=  \{(x,\theta)\in \mathcal{X} \times \mathbb{R}:
\nabla_p \mathcal{V}_{nav}(p,\theta)=0,  \nabla_\theta \mathcal{V}_{nav}(p,\theta)  = 0 \}
$,
which implies that the largest weakly invariant set for system \eqref{eqn:single_integrator_modified} with $(x,\theta) \in  \mathcal{E}$, denoted by $\Psi_{V} \subseteq \mathcal{E} $, is given as $\Psi_{V} = \mathcal{E} = C_{\mathcal{V}_{nav}}$ with $C_{\mathcal{V}_{nav}}$ defined in \eqref{eqn:def_V'_critical_set}.
From the definition of $\mu_{V,\Theta}$ in \eqref{eqn:mu_V} and the property of the synergistic navigation function $\mathcal{V}_{nav}$ as per Lemma \ref{lem:V_nav_synergistic}, one has
\begin{align}
	\mu_{V,\Theta}(x,\theta) & = V(x,\theta) - \min_{\bar{\theta}\in \Theta} V(x,\bar{\theta}) \nonumber\\
	& = \mathcal{V}_{nav}(p,\theta) - \min_{\bar{\theta}\in \Theta} \mathcal{V}_{nav}(p,\bar{\theta})  > \delta_{V} \label{eqn:mu_V_single}
\end{align}
for all $(x,\theta) \in  \Psi_{V}\setminus \mathcal{A}$. Then, from the definition of $\delta_{V,\Theta}$ in \eqref{eqn:mu_V_delta} and \eqref{eqn:mu_V_single}, one has
$
\delta_{V,\Theta} = \inf_{(x,\theta)\in \Psi_{V} \setminus \mathcal{A}}  \mu_{V,\Theta}(x,\theta)   > \delta_{\mathcal{V}}
$,
which verifies the condition C4) in Definition \ref{defn:synergistic_feedback} for the quadruple $(V,\kappa,\varpi,\Theta)$. Therefore, one concludes that the quadruple $(V,\kappa,\varpi,\Theta)$ is a synergistic feedback relative to  $\mathcal{A}$, with gap exceeding $\delta_\mathcal{V}$, for system \eqref{eqn:single_integrator_modified}.  This completes the proof.

\subsection{Proof of Lemma \ref{lem:sigma}}\label{sec:sigma}
From \eqref{eqn:def_V'_critical_set} and the proof of Proposition \ref{prop:synergistic_single}, one obtains  $\Psi_{V} = C_{\mathcal{V}_{nav}}= C_{V_{nav}}\times \{0\}$. 
It follows from \eqref{eqn:kappa_decomp2} that $\sigma(\theta) = 0$ for all $(x,\theta) \in \Psi_V$. Then, given a finite nonempty set $\Theta=\{|\theta|_i\in (0,\pi),i=1,\dots,L\} \subset \mathbb{R}$  one can show that for all  $(x,\theta) \in \Psi_V \setminus \mathcal{A}$
\begin{align*}
	\max_{\bar{\theta}\in \Theta } \|\sigma({\theta}) - \sigma(\bar{\theta})\|^2 
	& =  \max_{\bar{\theta}\in \Theta } \|\sigma(\bar{\theta})\|^2   \nonumber \\
	& = 2 \|p_d-p_o\|^2  \max_{\bar{\theta}\in \Theta } (1-\cos({\bar{\theta}}))  \nonumber \\
	& = 2(1-\cos(\bar{\theta}_M))  \|p_d-p_o\|^2   = 2c_k  
\end{align*}
where $\bar{\theta}_M = \max_{\bar{\theta}\in \Theta} |\bar{\theta}|$ and we made use of the facts  $\mathcal{R}({\theta})  \mathcal{R}({\theta})\T  = I_2$ and $2I_2 - \mathcal{R}({\theta}) - \mathcal{R}({\theta})\T   = 2(1-\cos({\theta}))I_2$. This completes the proof. 

\subsection{Proof pf Proposition \ref{prop:synergistic_single_smooth}} \label{sec:synergistic_single_smooth} 
The proof of Proposition  \ref{prop:synergistic_single_smooth} follows closely the proof of Proposition \ref{prop:SLFS} and \ref{prop:synergistic_single}, thus it will be abbreviated here. From the definition of $V_s$ in \eqref{eqn:def_V_s2}, one has
\begin{align} \label{eqn:nabla_V_s}
	\nabla V_s(x_s,\theta) & := \begin{pmatrix}
		\nabla_p V_s(x_s,\theta)\\ 
		\nabla_\eta V_s(x_s, \theta) \\
		\nabla_\theta V_s(x_s,\theta)
	\end{pmatrix} = \begin{pmatrix}
		\nabla_p \mathcal{V}_{nav}(p,\theta) \\ 
		\gamma_s(\eta - \sigma(\theta)) \\
		\nabla_\theta \mathcal{V}_{nav}(p,\theta)
	\end{pmatrix}.
\end{align}
From \eqref{eqn:second_order_kappa}, \eqref{eqn:second_order_varpi}, \eqref{eqn:single_integrator_modified2}, \eqref{eqn:def_kappa_s3} and \eqref{eqn:nabla_V_s}, one can show that
\begin{align}
	& \langle \nabla V_s(x_s,\theta), f_s(x_s,\theta) + g_s(x_s,\theta)\kappa_s(x_s,\theta)\rangle   \nonumber\\
	& = \langle \nabla V(x,\theta), f_c(x,\theta) + g_c(x,\theta)(\kappa_o(x)+k_p\eta)\rangle  \nonumber\\
	& \quad + \nabla_\eta V_s(x_s, \theta)\T (\kappa_s(x_s,\theta) - \mathcal{D}_t \sigma(\theta)) \nonumber\\
	& =   \nabla_p \mathcal{V}_{nav}(p,\theta)\T (-k_p \nabla_p V_{nav}(p,\theta)+ k_p \eta) \nonumber\\ 
	& \quad -   k_\theta |\nabla_\theta \mathcal{V}_{nav}(p,\theta)|^2 + \gamma_s\tilde{\eta}\T (\kappa_s(x_s,\theta) - \mathcal{D}_t \sigma(\theta)) \nonumber\\ 
	& =  - \nabla_p \mathcal{V}_{nav}(p,\theta)\T (-k_p \nabla_p \mathcal{V}_{nav}(p,\theta) + k_p \tilde{\eta})  \nonumber \\
	& \quad -   k_\theta |\nabla_\theta \mathcal{V}_{nav}(p,\theta)|^2 -\gamma_s  k_\eta  \|\tilde{\eta}\|^2   - k_p \tilde{\eta}\T    \nabla_p \mathcal{V}_{nav}(p,\theta) \nonumber\\
	& = - k_p \|\nabla_p \mathcal{V}_{nav}(p,\theta)\|^2  - k_\theta |\nabla_\theta \mathcal{V}_{nav}(p,\theta)|^2 - \gamma_s k_\eta \|\tilde{\eta}\|^2   \nonumber \\
	& \leq 0 \label{eqn:dobule_dot_Vs}
\end{align}
where we made use of the facts $\tilde{\eta} = \eta - \sigma(\theta)$ and $\kappa_o(x) + k_p \eta 
= -k_p \nabla_p \mathcal{V}_{nav}(p,\theta) + k_p \tilde{\eta}$. 
Therefore, one has $\langle \nabla V_s(x_s,\theta), f_s(x_s,\theta) + g_s(x_s,\theta)\kappa_s(x_s,\theta)\rangle \leq 0 $, which implies the condition C3) in Definition \ref{defn:synergistic_feedback}  for the quadruple $(V_s,\kappa_s,\varpi,\Theta)$. Consequently, from \eqref{eqn:dobule_dot_Vs} the set $\mathcal{E}_s$ defined in \eqref{eqn:def_E_s} can be written as
$
\mathcal{E}_s =  \{(x_s,\theta)\in \mathcal{X}_s \times \mathbb{R}:
\nabla_p \mathcal{V}_{nav}(p,\theta)  = 0, \nabla_\theta \mathcal{V}_{nav}(p,\theta)  = 0, \eta = \sigma(\theta) \}
$. It follows from Lemma \ref{lem:grad_V_nav2} that $(p,\theta) \in C_{V_{nav}} \times \{0\}=C_{\mathcal{V}_{nav}}$ and $\eta = \sigma(0) = 0$ for all $(x_s,\theta)\in \mathcal{E}_s$. Hence, the largest weakly invariant set for the closed-loop system \eqref{eqn:new_affine_system} with $(x_s,\theta) \in  \mathcal{E}_s$ is given as
$
\Psi_{V_s}  = \{(x_s,\theta)\in \mathcal{X}_s \times \mathbb{R}: (p,\theta) \in C_{\mathcal{V}_{nav}}, \eta=0 \} 
$. 
From the definition of $V_s$ in \eqref{eqn:def_V_s2} and the results in \eqref{eqn:delta_V_s} and \eqref{eqn:mu_V_single},  one has
\begin{align*}
	\mu_{V_s,\Theta}(x_s,\theta)  & = V_s(x_s,\theta) - \min_{\bar{\theta}\in \Theta} V_s(x_s,\bar{\theta}) \nonumber\\
	& \geq \mu_{V,\Theta}(x,\theta)-\gamma_s c_k  >  \delta_{\mathcal{V}} - \gamma_s c_k > 0   
\end{align*} 
for all $(x_s,\theta) \in  \Psi_{V_s}\setminus \mathcal{A}_s$ with $0<\gamma_s < \frac{\delta_{\mathcal{V}} }{c_k}$. It follows from the definition of $\delta_{V,\Theta}$ in \eqref{eqn:mu_V_delta} that
$
\delta_{V_s,\Theta}  = \inf_{(x_s,\theta)\in \Psi_{V_s} \setminus \mathcal{A}_s}  \mu_{V_s,\Theta}(x_s,\theta)   >  \delta_{\mathcal{V}}- \gamma_s c_k > \delta_{\mathcal{V}_s} 
$, which implies the condition C4) in Definition \ref{defn:synergistic_feedback}  for the quadruple $(V_s,\kappa_s,\varpi,\Theta)$. Therefore, one concludes that  $(V_s,\kappa_s,\varpi,\Theta)$ is a synergistic feedback quadruple relative to  $\mathcal{A}_s$ for system \eqref{eqn:single_integrator_modified2} with gap exceeding $\delta_{\mathcal{V}_s}$.  This completes the proof.


\bibliographystyle{IEEEtran}
\bibliography{mybib} 

\begin{thebibliography}{10}
\providecommand{\url}[1]{#1}
\csname url@samestyle\endcsname
\providecommand{\newblock}{\relax}
\providecommand{\bibinfo}[2]{#2}
\providecommand{\BIBentrySTDinterwordspacing}{\spaceskip=0pt\relax}
\providecommand{\BIBentryALTinterwordstretchfactor}{4}
\providecommand{\BIBentryALTinterwordspacing}{\spaceskip=\fontdimen2\font plus
\BIBentryALTinterwordstretchfactor\fontdimen3\font minus
  \fontdimen4\font\relax}
\providecommand{\BIBforeignlanguage}[2]{{%
\expandafter\ifx\csname l@#1\endcsname\relax
\typeout{** WARNING: IEEEtran.bst: No hyphenation pattern has been}%
\typeout{** loaded for the language `#1'. Using the pattern for}%
\typeout{** the default language instead.}%
\else
\language=\csname l@#1\endcsname
\fi
#2}}
\providecommand{\BIBdecl}{\relax}
\BIBdecl

\bibitem{sontag1998mathematical}
E.~D. Sontag, \emph{Mathematical Control Theory: Deterministic Finite
  Dimensional Systems}, 2nd~ed.\hskip 1em plus 0.5em minus 0.4em\relax
  Springer, 1998.

\bibitem{bhatia1970stability}
N.~P. Bhatia and G.~P. Szeg{\"o}, \emph{Stability theory of dynamical
  systems}.\hskip 1em plus 0.5em minus 0.4em\relax Springer, 1970.

\bibitem{bhat2000topological}
S.~P. Bhat and D.~S. Bernstein, ``A topological obstruction to continuous
  global stabilization of rotational motion and the unwinding phenomenon,''
  \emph{Syst. Control Lett.}, vol.~39, no.~1, pp. 63--70, 2000.

\bibitem{koditschek1990robot}
D.~E. Koditschek and E.~Rimon, ``Robot navigation functions on manifolds with
  boundary,'' \emph{Advances in applied mathematics}, vol.~11, no.~4, pp.
  412--442, 1990.

\bibitem{goebel2012hybrid}
R.~Goebel, R.~Sanfelice, and A.~Teel, \emph{Hybrid Dynamical Systems: modeling,
  stability, and robustness}.\hskip 1em plus 0.5em minus 0.4em\relax Princeton
  University Press, 2012.

\bibitem{mayhew2011synergistic}
C.~G. Mayhew, R.~G. Sanfelice, and A.~R. Teel, ``Synergistic {Lyapunov}
  functions and backstepping hybrid feedbacks,'' in \emph{Proc. IEEE Amer.
  Control Conf.}, 2011, pp. 3203--3208.

\bibitem{mayhew2013synergistic}
C.~G. Mayhew and A.~R. Teel, ``Synergistic hybrid feedback for global
  rigid-body attitude tracking on {SO(3)},'' \emph{IEEE Trans. Autom. Control},
  vol.~58, no.~11, pp. 2730--2742, 2013.

\bibitem{mayhew2013global}
------, ``Global stabilization of spherical orientation by synergistic hybrid
  feedback with application to reduced-attitude tracking for rigid bodies,''
  \emph{Automatica}, vol.~49, no.~7, pp. 1945--1957, 2013.

\bibitem{casau2019robust}
P.~Casau, C.~G. Mayhew, R.~G. Sanfelice, and C.~Silvestre, ``Robust global
  exponential stabilization on the n-dimensional sphere with applications to
  trajectory tracking for quadrotors,'' \emph{Automatica}, vol. 110, p. 108534,
  2019.

\bibitem{berkane2016construction}
S.~Berkane and A.~Tayebi, ``Construction of synergistic potential functions on
  {SO(3)} with application to velocity-free hybrid attitude stabilization,''
  \emph{IEEE Trans. Autom. Control}, vol.~62, no.~1, pp. 495--501, 2016.

\bibitem{wang2022hybrid}
M.~Wang and A.~Tayebi, ``Hybrid feedback for global tracking on matrix {Lie
  Groups} {SO(3)} and {SE(3)},'' \emph{IEEE Trans. Autom. Control}, vol.~67,
  no.~6, pp. 2930--2945, 2022.

\bibitem{tong2023synergistic}
X.~Tong and S.~S. Cheng, ``Synergistic potential functions from single modified
  trace function on {SO(3)},'' \emph{Automatica}, vol. 154, p. 111070, 2023.

\bibitem{casau2019hybrid}
P.~Casau, R.~Cunha, R.~G. Sanfelice, and C.~Silvestre, ``Hybrid control for
  robust and global tracking on smooth manifolds,'' \emph{IEEE Trans. Autom.
  Control}, vol.~65, no.~5, pp. 1870--1885, 2019.

\bibitem{casau2015robust}
P.~Casau, R.~G. Sanfelice, R.~Cunha, D.~Cabecinhas, and C.~Silvestre, ``Robust
  global trajectory tracking for a class of underactuated vehicles,''
  \emph{Automatica}, vol.~58, pp. 90--98, 2015.

\bibitem{paternain2017navigation}
S.~Paternain, D.~E. Koditschek, and A.~Ribeiro, ``Navigation functions for
  convex potentials in a space with convex obstacles,'' \emph{IEEE Trans.
  Autom. Control}, vol.~63, no.~9, pp. 2944--2959, 2017.

\bibitem{li2018navigation}
C.~Li and H.~G. Tanner, ``Navigation functions with time-varying destination
  manifolds in star worlds,'' \emph{IEEE Trans. Robot.}, vol.~35, no.~1, pp.
  35--48, 2018.

\bibitem{arslan2019sensor}
O.~Arslan and D.~E. Koditschek, ``Sensor-based reactive navigation in unknown
  convex sphere worlds,'' \emph{Int. J. Robot. Res.}, vol.~38, no. 2-3, pp.
  196--223, 2019.

\bibitem{verginis2021adaptive}
C.~K. Verginis and D.~V. Dimarogonas, ``Adaptive robot navigation with
  collision avoidance subject to 2nd-order uncertain dynamics,''
  \emph{Automatica}, vol. 123, p. 109303, 2021.

\bibitem{sanfelice2006robust}
R.~G. Sanfelice, M.~J. Messina, S.~E. Tuna, and A.~R. Teel, ``Robust hybrid
  controllers for continuous-time systems with applications to obstacle
  avoidance and regulation to disconnected set of points,'' in \emph{Proc. IEEE
  Amer. Control Conf.}, 2006, pp. 3352--3357.

\bibitem{berkane2021obstacle}
S.~Berkane, A.~Bisoffi, and D.~V. Dimarogonas, ``Obstacle avoidance via hybrid
  feedback,'' \emph{IEEE Trans. Autom. Control}, vol.~67, no.~1, pp. 512--519,
  2021.

\bibitem{braun2021explicit}
P.~Braun, C.~M. Kellett, and L.~Zaccarian, ``Explicit construction of
  stabilizing robust avoidance controllers for linear systems with drift,''
  \emph{IEEE Trans. Autom. Control}, vol.~66, no.~2, pp. 595--610, 2021.

\bibitem{poveda2021robust}
J.~I. Poveda, M.~Benosman, A.~R. Teel, and R.~G. Sanfelice, ``Robust
  coordinated hybrid source seeking with obstacle avoidance in multi-vehicle
  autonomous systems,'' \emph{IEEE Trans. Autom. Control}, vol.~67, no.~2, pp.
  706--721, 2021.

\bibitem{sawant2021hybrid}
M.~Sawant, S.~Berkane, I.~Polusin, and A.~Tayebi, ``Hybrid feedback for
  autonomous navigation in planar environments with convex obstacles,''
  \emph{IEEE Trans. Autom. Control}, vol.~68, no.~12, pp. 7342--7357, 2023.

\bibitem{marley2021synergistic}
M.~Marley, R.~Skjetne, and A.~R. Teel, ``Synergistic control barrier functions
  with application to obstacle avoidance for nonholonomic vehicles,'' in
  \emph{Proc. IEEE Amer. Control Conf}, 2021, pp. 243--249.

\bibitem{casau2019adaptive}
P.~Casau, R.~G. Sanfelice, and C.~Silvestre, ``Adaptive backstepping of
  synergistic hybrid feedbacks with application to obstacle avoidance,'' in
  \emph{Proc. IEEE Amer. Control Conf.}, 2019, pp. 1730--1735.

\bibitem{casau2022robust}
------, ``Robust synergistic hybrid feedback (extended version),'' \emph{arXiv
  preprint arXiv:2301.00158}, 2022.

\bibitem{mayhew2011hybrid}
C.~G. Mayhew and A.~R. Teel, ``Hybrid control of rigid-body attitude with
  synergistic potential functions,'' in \emph{Proc. IEEE Amer. Control Conf.},
  2011, pp. 287--292.

\bibitem{wang2019new}
M.~Wang and A.~Tayebi, ``A new hybrid control strategy for the global attitude
  tracking problem,'' in \emph{Proc. 58th IEEE Conf. Decis. Control}, 2019, pp.
  7222--7227.

\bibitem{schmidt2022generalization}
H.~M. Schmidt-Didlaukies, E.~A. Basso, A.~J. S{\o}rensen, and K.~Y. Pettersen,
  ``A generalization of synergistic hybrid feedback control with application to
  maneuvering control of ships,'' in \emph{Proc. 61st IEEE Conf. Decis.
  Control}, 2022, pp. 1986--1992.

\bibitem{bouligand1932introduction}
G.~Bouligand, \emph{Introduction {\`a} la g{\'e}om{\'e}trie infinit{\'e}simale
  directe}.\hskip 1em plus 0.5em minus 0.4em\relax Vuibert, 1932.

\bibitem{nagumo1942lage}
M.~Nagumo, ``{\"U}ber die lage der integralkurven gew{\"o}hnlicher
  differentialgleichungen,'' in \emph{Proc. the Physico-Mathematical Society of
  Japan. 3rd Series}, vol.~24, 1942, pp. 551--559.

\bibitem{khalil2002nonlinear}
H.~K. Khalil, \emph{Nonlinear systems}, 3rd~ed.\hskip 1em plus 0.5em minus
  0.4em\relax NJ: Prentice-Hall, 2002.

\bibitem{sanfelice2013toolbox}
R.~Sanfelice, D.~Copp, and P.~Nanez, ``A toolbox for simulation of hybrid
  systems in {Matlab/Simulink}: hybrid equations {(HyEQ)} toolbox,'' in
  \emph{Proc. 16th Int. Conf. Hybrid Syst., Comput. Control}, 2013, pp.
  101--106.

\end{thebibliography}

\end{document}